\title{
  Byzantine Consensus in Directed Hypergraphs\footnote{
    This research is supported in part by the National Science Foundation
    award 1733872.
    Any opinions, findings, and conclusions or recommendations
    expressed here are those of the authors and do not necessarily
    reflect the views of the funding agencies or the U.S. government.
  }
}
\author[1]{Muhammad Samir Khan}
\affil[1]{
  Department of Computer Science\protect\linebreak
  University of Illinois at Urbana-Champaign\protect\linebreak
  \texttt{mskhan6@illinois.edu}\protect\linebreak
}
\author[2]{Nitin H. Vaidya}
\affil[2]{
  Department of Computer Science\protect\linebreak
  Georgetown University\protect\linebreak
  \texttt{nitin.vaidya@georgetown.edu}\protect\linebreak
}
\newtheorem{theorem}{Theorem}[section]
\newtheorem{lemma}[theorem]{Lemma}
\newtheorem{definition}[theorem]{Definition}
\newtheorem{claim}[theorem]{Claim}
\newtheorem{observation}[theorem]{Observation}
\newcommand{\newRef}[2]{\hyperref[#1]{#2 \ref*{#1}}}
\newcommand{\appendixRef}[1]{\newRef{#1}{Appendix}}
\newcommand{\sectionRef}[1]{\newRef{#1}{Section}}
\newcommand{\figureRef}[1]{\newRef{#1}{Figure}}
\newcommand{\algoRef}[1]{\newRef{#1}{Algorithm}}
\newcommand{\theoremRef}[1]{\newRef{#1}{Theorem}}
\newcommand{\lemmaRef}[1]{\newRef{#1}{Lemma}}
\newcommand{\obsRef}[1]{\newRef{#1}{Observation}}
\newcommand{\defRef}[1]{\newRef{#1}{Definition}}
\newcommand{\claimRef}[1]{\newRef{#1}{Claim}}
\newcommand{\pageRef}[1]{\hyperref[#1]{page \pageref*{#1}}}
\newcounter{condition}
\def\nameOfConditionHNC{LCR-hyper}
\def\nameOfConditionHSC{AB-hyper}
\def\nameOfConditionPNC{LCR-p2p}
\def\nameOfConditionBNC{LCR-local}
\newcommand{\conditionNC}
{{\hyperref[definition hypergraphs NC]{condition \nameOfConditionHNC}}}
\newcommand{\conditionSC}
{{\hyperref[definition hypergraphs SC]{condition \nameOfConditionHSC}}}
\newcommand{\conditionpNC}
{{\hyperref[definition p2p NC]{condition \nameOfConditionPNC}}}
\newcommand{\conditionbNC}
{{\hyperref[definition local NC]{condition \nameOfConditionBNC}}}
\newcommand{\ConditionNC}
{{\hyperref[definition hypergraphs NC]{Condition \nameOfConditionHNC}}}
\newcounter{index}
\newcommand{\propagate}[1]{\rightsquigarrow_{#1}}
\newcommand{\notpropagate}[1]{\not \rightsquigarrow_{#1}}
\newcommand{\adjacent}[1]{\rightarrow_{#1}}
\newcommand{\notadjacent}[1]{\not \rightarrow_{#1}}
\NewDocumentCommand\inneighborhood{ggg}{\ensuremath{\IfNoValueTF{#3}{\Gamma_{#1}}{\Gamma_{#3}(#1, #2)}}}
\newcommand{\floor}[1]{\left\lfloor {#1} \right\rfloor}
\newcommand{\graphSplitFSet}{\Lambda}
\newcommand{\Step}[1]{\texttt{Step (#1)}}
\newcommand{\step}[1]{\texttt{step (#1)}}
\newcommand{\underlying}[1]{\overline{#1}}
\newcommand{\head}{{H}}
\newcommand{\tail}{{T}}
\newcommand{\incidentedges}{{\delta}}
\newcommand{\suchthat}{:}
\def\arrow{-{Latex[width=2mm,length=2mm]}}
\def\otherarrow{-{Latex[width=2mm,length=2mm,open]}}
\begin{document}

\maketitle

\begin{abstract}
  Byzantine consensus is a classical problem in distributed computing.
Each node in a synchronous system starts with a binary input.
The goal is to reach agreement in the presence of Byzantine faulty nodes.
We consider the setting where communication between
nodes is modelled via a directed hypergraph.
In the classical \emph{point-to-point} communication model,
the communication between nodes is modelled as a simple graph
where
all messages sent on an edge are private between the two endpoints of the edge.
This allows a faulty node to \emph{equivocate}, i.e.,
lie differently to its different neighbors.
Different models have been proposed in the literature
that weaken equivocation.
In the \emph{local broadcast} model,
every message transmitted by a node
is received identically and correctly
by all of its neighbors.
In the \emph{hypergraph} model,
every message transmitted by a node on a hyperedge
is received identically and correctly
by all nodes on the hyperedge.
Tight network conditions are known for each of the three cases
for undirected (hyper)graphs.
For the directed models,
tight conditions are known for the
point-to-point and local broadcast models.

In this paper,
we consider the directed hypergraph model
that encompasses all the models above.
Each directed hyperedge consists of a single head (sender)
and at least one tail (receiver),
This models a \emph{local multicast} channel
where messages sent by the head node (sender)
are received identically by all the tail nodes (receivers)
in the hyperedge.
For this model,
we identify tight network conditions for consensus.
We observe how the directed hypergraph model
(which we will also refer to as the local multicast model)
reduces to each of the three models above
under specific conditions.
In each of the three cases,
we relate our network condition to the
corresponding known tight conditions.
The local multicast model
also encompasses other practical network models
of interest that have not been explored previously,
as elaborated in the paper. \end{abstract}

\section{Introduction} \label{section introduction}
Byzantine consensus is a classical problem in distributed computing
introduced by Lamport et al.
\cite{Lamport:1982:BGP:357172.357176,Pease:1980:RAP:322186.322188}.
There are $n$ nodes in a synchronous system.
Each node starts with a binary input.
At most $f < n$ of these nodes can be Byzantine faulty,
i.e., exhibit arbitrary behavior.
The goal of a consensus protocol is for the non-faulty nodes to
reach agreement on a single output value in finite time.
To exclude trivial protocols,
we require that the output must be an input of some non-faulty node.

In this paper,
we study consensus under \emph{local multicast} channels,
which can be modelled as directed hypergraphs.
A hypergraph is a generalization of graphs
consisting of nodes and hyperedges.
Unlike an edge in a graph, a hyperedge can connect any number of nodes.
In the local multicast model,
nodes are connected via a directed hypergraph $G$.
A local multicast channel is a directed hyperedge
defined by a single sender and a non-empty set of receivers.
Each node $u$ may potentially serve
as the sender on multiple local multicast channels/hyperedges.
When node $u$ sends a message on one of its local multicast channels/hyperedges,
This model generalizes the following models that have been considered
before in the literature.

\begin{enumerate}
  \item \emph{Point-to-point communication model:}
  In the classical \emph{point-to-point} communication model,
  each edge $(u, v)$ in the communication graph
  represents a private link from node $u$ to node $v$.
  This model is well-studied
  \cite{Attiya:2004:DCF:983102,DOLEV198214,Lamport:1982:BGP:357172.357176,Lynch:1996:DA:2821576,Pease:1980:RAP:322186.322188,LewisByzantineDirected,LewisByzantineDirectedARXIV}.
  It is well-known that, for \emph{undirected} graphs
  $n \ge 3f+1$ and
  node connectivity at least $2f+1$
  are both necessary and sufficient in this model.

  \item \emph{Local broadcast model:}
  Recently,
  in \cite{khan2019undirectedPODC,khan2020directedOPODIS}, we
  studied consensus under the \emph{local broadcast model}
  \cite{Bhandari:2005:RBR:1073814.1073841,Koo:2004:BRN:1011767.1011807},
  where a message sent by any node is received identically
  by all of its neighboring nodes in the communication graph.
  For \emph{undirected} graphs,
  minimum node degree at least $2f$ and
  node connectivity at least $\floor{3f/2} + 1$
  are both necessary and sufficient for
  Byzantine consensus \cite{khan2019undirectedPODC}
  under the local broadcast model.

  \item \emph{Undirected hypergraph model:}
  Communication networks modelled as \emph{undirected} hypergraphs
  have been studied in the literature
  \cite{Fitzi:2000:PCG:335305.335363,Jaffe:2012:PEB:2332432.2332491,Ravikant10.1007/978-3-540-30186-8_32}.
  A message sent by a node $u$ on an undirected hyperedge $e \ni u$
  is received identically by all nodes in $e$.
  For this model,
  Ravikant et al. \cite{Ravikant10.1007/978-3-540-30186-8_32}
  gave tight conditions for Byzantine consensus on
  $(2,3)$-hypergraphs.\footnote{
    i.e.,
    each hyperedge consists of either $2$ or $3$ nodes.
  }
  As we discuss in \sectionRef{section other models},
  these conditions extend to general undirected hypergraphs as well.
\end{enumerate}

The classical point-to-point communication model allows
a faulty node to \emph{equivocate},
i.e., send conflicting messages to its neighbors without
this inconsistency being observed by the neighbors.
For example,
a faulty node $z$ may tell its neighbor $u$ that it has input $0$,
but tell another neighbor $v$ that it has input $1$.
Since messages on each edge are private between the two endpoints,
node $u$ does not overhear the message sent to node $v$ and vice versa.
The local broadcast model and the hypgergraph model
restrict a faulty node's ability to equivocate
by detecting such attempts.
In the local broadcast model,
a faulty node's attempt to equivocate
is detected by its neighboring nodes in the communication graph.
In the undirected hypergraph model,
a faulty node's attempt to equivocate on an (undirected) hyperedge
is detected by the nodes in that hyperedge.
In our local multicast model,
a faulty node's attempt to equivocate on a single multicast channel,
i.e., on a single directed hyperedge,
is detected by the receivers in that channel.

In this work,
we introduce the \emph{local multicast} model,
that unifies the models identified above,
and make the following main contributions:

\begin{enumerate}
  \item
  \textbf{Necessary and sufficient condition for local multicast model:}
  In \sectionRef{section multiple},
  we present a network condition, and show that it is both
  necessary and sufficient for Byzantine consensus
  under the local multicast model.
  The identified condition is inspired by the
  network conditions for directed graphs
  \cite{khan2020directedOPODIS,LewisByzantineDirected},
  where node connectivity
  does not adequately capture the network requirements for consensus.
  We present a simple algorithm,
  inspired by
  \cite{khan2019undirectedPODC, khan2020directedOPODIS, LewisByzantineDirected}.

  \item
  \textbf{Reductions to the existing models:}
  The two extremes of the local multicast model
  are 1) each channel consists of exactly one receiver,
  and 2) each node has exactly one multicast channel.
  These correspond to the point-to-point communication model
  and the local broadcast model, respectively.
  In \sectionRef{section other models},
  we show how the network condition for the local multicast model
  reduces to the network requirements for
  the point-to-point model and
  the local broadcast model
  at the two extremes.
  On the other hand,
  if the hypergraph is undirected,
  then we show that the network condition reduces to the
  network requirements of the undirected hypergraph model given by
  Ravikant et al. \cite{Ravikant10.1007/978-3-540-30186-8_32}.
  Moreover, our algorithm for the local multicast model
  works for all the three models identified here as well.

  \item
  \textbf{Extensions to other models:}
  The local multicast model also captures
  some other models of practical interest
  (see \sectionRef{section new models}).
  For instance, consider the scenario where nodes are connected
  via a WiFi network.
  This can be modelled as local multicast
  over a graph $G_1$.
  Separately, the nodes are also connected via a bluetooth network,
  modelled using local multicast over a graph $G_2$
  (with the same node set as $G_1$).
  Then the union of these networks $G_1 \cup G_2$
  can be captured using the local multicast model as well.
  As another example, consider the scenario where
  nodes are connected via point-to-point channels,
  in addition to a wireless network
  with local broadcast guarantees.
  As before, this can also be captured using the local multicast model.
  Our algorithm works for these cases as well.
\end{enumerate}

In our recent work \cite{TBD_disc2021},
we obtained an analogous tight condition
for the ``bidirectional'' case when the underlying simple graph is undirected,
i.e., if a node $u$ can send messages to a node $v$,
then $v$ can also send messages to node $u$.
The tight condition obtained here is a natural extension of
the tight condition obtained in \cite{TBD_disc2021}.
However,
the results and proofs in this work
are more general and encompass the results in \cite{TBD_disc2021}. 
\section{System Model and Problem Formulation} \label{section notation}
We consider a synchronous system where nodes are
connected via a directed hypergraph $G = (V, E)$.
$V$ is the set of $n$ nodes.
Each directed hyperedge $e \in E$ is of the form $e = (u, S)$,
where $u \in V$ and $S \subseteq V - u$,
representing a \emph{local multicast channel} with sender $u$ and receivers $S$.
By convention used here, $u$ is not included in $S$.
However, trivially, each node receives its own message transmissions as well.
$u$ is the \emph{head} of $e$, denoted by $\head(e) = u$,
and each node in $S$ is a \emph{tail} of $e$,
denoted by $\tail(e) = S$.
Observe that each hyperedge has a single head and at least one tail.
For example, $(u, \set{v, w})$
is a hyperedge with head $u$ and two tail nodes $v$ and $w$.
A message $m$ sent by a node $u$ on a hyperedge $e$ (such that $\head(e) = u$)
is received identically and correctly by all tail nodes $\tail(e)$ of $e$.
Moreover, each recipient $v \in \tail(e)$ knows that $m$ was
sent by $u$ on the hyperedge $e$.
We assume that each hyperedge represents a FIFO multicast channel.

We use $\incidentedges_G(u)$ to denote the set of hyperedges in $G$
that have $u$ as the head node, i.e.,
\[\incidentedges_G(u) = \set{ e \in E(G) \mid \head(e) = u}.\]
The hypergraph $G$ has an underlying directed simple graph,
denoted by $\underlying{G} = (\underlying{V}, \underlying{E})$, such that
\begin{align*}
  \underlying{V} &:= V, \\
  \underlying{E} &:= \set{
    (u, v) \mid \exists e \in E \suchthat
      u = \head(e), v \in \tail(e)
  }.
\end{align*}

\paragraph{Neighbors:}
A node $u$ is an \emph{in-neighbor} of node $v$ in $G$
if there exists a hyperedge $e$ with $u = \head(e)$ as the head and
$v \in \tail(e)$ as one of the tails.
We call $v$ an \emph{out-neighbor} of $u$.
Note that $u$ is an in-neighbor of node $v$ in $G$
if and only if
$u$ is an in-neighbor of $v$ in $\underlying{G}$.

\begin{itemize}\item {\emph{In-neighborhood:}}
  More generally, for two disjoint sets $A, B \subseteq V(G)$,
  $\inneighborhood{A}{B}{G}$ defined below is the set of in-neighbors of
  $B$ in $A$.
  \[
    \inneighborhood{A}{B}{G} = \set{u \in A \mid \exists v \in B \suchthat \text{$u$ is an in-neighbor of $v$ in $G$}}.
  \]

  \item {\emph{Adjacent:}}
  We use $A \adjacent{G} B$
  (read as $A$ is ``adjacent'' to $B$ in $G$)
  to denote that either
  \begin{enumerate}[label=(\roman*)]
    \item $B = \emptyset$, or
    \item nodes in $B$ have at least $f+1$ in-neighbors in $A$ in $G$,
    i.e., \[\abs{ \inneighborhood{A}{B}{G} } \ge  f + 1.\]
  \end{enumerate}
\end{itemize}

\paragraph{Node split:}
We now introduce the notion of a \emph{node split}
that is used to specify the necessary and sufficient condition
under the local multicast model.
As seen later,
we will use the notion of node split to simulate possible equivocation
by a faulty node.
Intuitively,
by splitting a node $v$,
we are creating two copies of $v$ and dividing up the hyperedges
amongst the two copies.
\figureRef{figure node split} shows two examples of node split.
Formally,
splitting a set of nodes $X$ in $G$
creates a new hypergraph $G' = (V', E')$ as follows.
Each node $v \in X$ is replaced by two nodes $v^0$ and $v^1$,
so that
\[
  V' = (V - X) \cup \set{ v^0, v^1 \mid v \in X }.
\]
Consider each node $u \in V$ and
hyperedge $e = (u, S) \in \incidentedges_G(u)$.
\begin{itemize}
  \item 
  If $u \not \in X$,
  then add a hyperedge $(u, S')$ in $G'$, where
  \[S' = (S - X) \cup \set{ v^0, v^1 \mid v \in S \cap X},\]
  i.e.,
  each node $v \in \tail(e) \cap X$
  is replaced by the two nodes $v^0$ and $v^1$.

  \item
  If $u \in X$,
  then choose a node $u_e \in \set{u^0, u^1}$, and
  add a hyperedge $(u_e, S')$ in $G'$, where
  \[S' = (S - X) \cup \set{ v^0, v^1 \mid v \in S \cap X}.\]
  Recall that
$u \notin S$ by convention.
  Note that the choice of $u_e \in \set{u^0, u^1}$
  affects the set of hyperedges of hypergraph $G'$.
  For simplicity, we say that the hyperedge $e$
  has been \emph{assigned} to $u_e$.
\end{itemize}

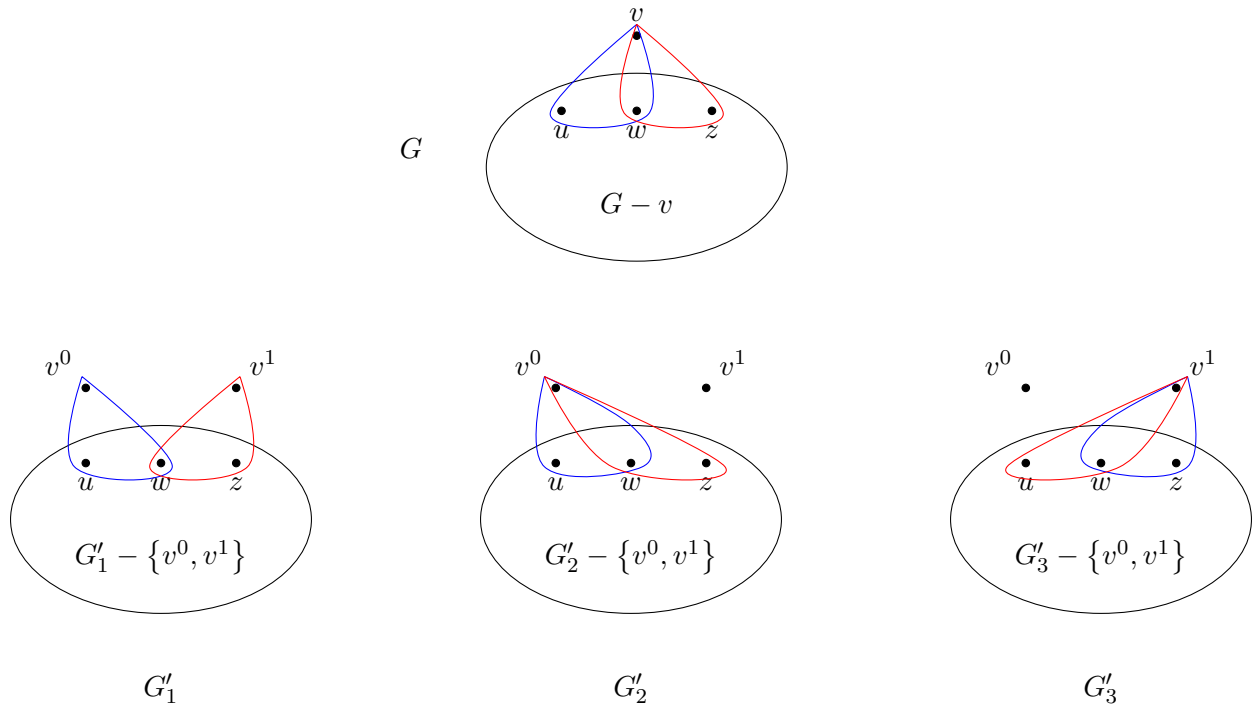
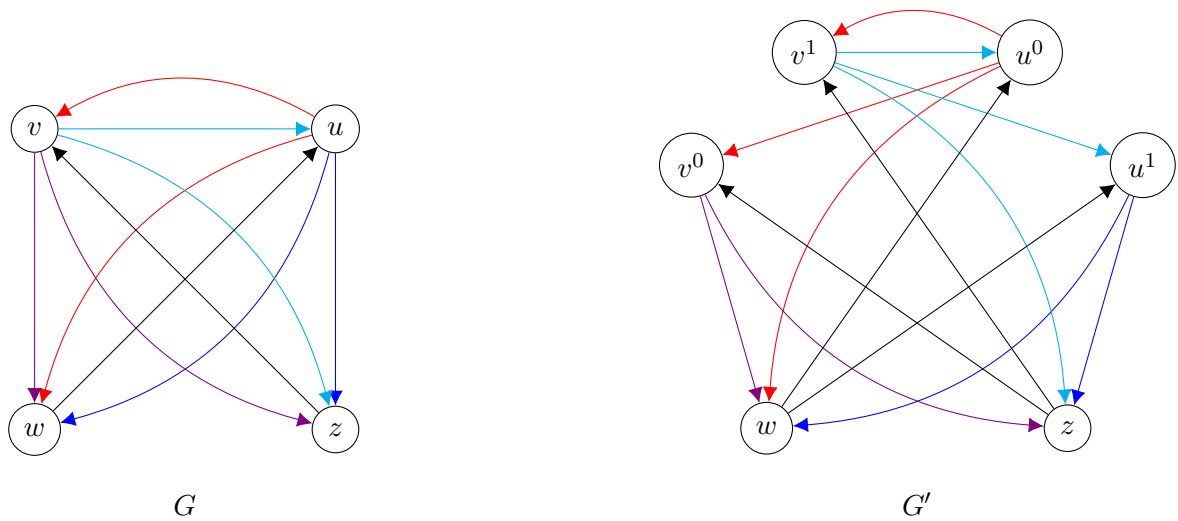
\begin{figure}[p]
  \centering
  \begin{subfigure}{\textwidth}
    \centering
    \begin{tikzpicture}
      \node[draw,circle,fill,inner sep=1pt,label={above:$v$}] at (0, 0) (v) {};
      \node[draw,circle,fill,inner sep=1pt,label={below:$u$}] at (-1, -1) (u) {};
      \node[draw,circle,fill,inner sep=1pt,label={below:$w$}] at (0, -1) (w) {};
      \node[draw,circle,fill,inner sep=1pt,label={below:$z$}] at (1, -1) (z) {};
      \node[draw,ellipse,minimum width=4cm,minimum height=2.5cm] at (0,-1.75) (G) {};
      \node at (0, -2.25) () {$G - v$};
      \node at (-3, -1.5) () {$G$};
      \node at (3, -1.5) () {};
  
      \draw[blue] plot[smooth,tension=.7] coordinates
      {(0,0.15) (-1.15,-1.05) (.15,-1.05) (0,0.15)};
      \draw[red] plot[smooth,tension=.7] coordinates
      {(0,0.15) (1.15,-1.05) (-.15,-1.05) (0,0.15)};
    \end{tikzpicture}
    \\
    \bigskip~
    \\
    \begin{tikzpicture}
      \node[draw,circle,fill,inner sep=1pt,label={above left:$v^0$}] at (-1, 0) (v0) {};
      \node[draw,circle,fill,inner sep=1pt,label={above right:$v^1$}] at (1, 0) (v1) {};
      \node[draw,circle,fill,inner sep=1pt,label={below:$u$}] at (-1, -1) (u) {};
      \node[draw,circle,fill,inner sep=1pt,label={below:$w$}] at (0, -1) (w) {};
      \node[draw,circle,fill,inner sep=1pt,label={below:$z$}] at (1, -1) (z) {};
      \node[draw,ellipse,minimum width=4cm,minimum height=2.5cm] at (0,-1.75) (G) {};
      \node at (0, -2.25) () {$G'_1-\set{v^0, v^1}$};
      \node at (0, -4) () {$G'_1$};
  
\draw[blue] plot[smooth,tension=.7] coordinates
      {(-1.05,0.15) (-1.15,-1.05) (.15,-1.05) (-1.05,0.15)};
      \draw[red] plot[smooth,tension=.7] coordinates
      {(1.05,0.15) (1.15,-1.05) (-.15,-1.05) (1.05,0.15)};
    \end{tikzpicture}
    \hfill
    \begin{tikzpicture}
      \node[draw,circle,fill,inner sep=1pt,label={above left:$v^0$}] at (-1, 0) (v0) {};
      \node[draw,circle,fill,inner sep=1pt,label={above right:$v^1$}] at (1, 0) (v1) {};
      \node[draw,circle,fill,inner sep=1pt,label={below:$u$}] at (-1, -1) (u) {};
      \node[draw,circle,fill,inner sep=1pt,label={below:$w$}] at (0, -1) (w) {};
      \node[draw,circle,fill,inner sep=1pt,label={below:$z$}] at (1, -1) (z) {};
      \node[draw,ellipse,minimum width=4cm,minimum height=2.5cm] at (0,-1.75) (G) {};
      \node at (0, -2.25) () {$G'_2-\set{v^0, v^1}$};
      \node at (0, -4) () {$G'_2$};
  
\draw[blue] plot[smooth,tension=0.7] coordinates
      {(-1.15,0.15) (-1.15,-1.05) (.15,-1.05) (0,-0.5) (-1.15,0.15)};
      \draw[red] plot[smooth,tension=.7] coordinates
      {(-1.15,0.15) (1.25,-1.05) (-0.25,-1.05) (-1.15,0.15)};
    \end{tikzpicture}
    \hfill
    \begin{tikzpicture}
      \node[draw,circle,fill,inner sep=1pt,label={above left:$v^0$}] at (-1, 0) (v0) {};
      \node[draw,circle,fill,inner sep=1pt,label={above right:$v^1$}] at (1, 0) (v1) {};
      \node[draw,circle,fill,inner sep=1pt,label={below:$u$}] at (-1, -1) (u) {};
      \node[draw,circle,fill,inner sep=1pt,label={below:$w$}] at (0, -1) (w) {};
      \node[draw,circle,fill,inner sep=1pt,label={below:$z$}] at (1, -1) (z) {};
      \node[draw,ellipse,minimum width=4cm,minimum height=2.5cm] at (0,-1.75) (G) {};
      \node at (0, -2.25) () {$G'_3-\set{v^0, v^1}$};
      \node at (0, -4) () {$G'_3$};
  
\draw[blue] plot[smooth,tension=0.7] coordinates
      {(1.15,0.15) (1.15,-1.05) (-.15,-1.05) (0,-0.5) (1.15,0.15)};
      \draw[red] plot[smooth,tension=.7] coordinates
      {(1.15,0.15) (-1.25,-1.05) (0.25,-1.05) (1.15,0.15)};
    \end{tikzpicture}
    \caption{
      Splitting a single node $v$.
      Only the hyperedges in $\incidentedges_G(v)$ are drawn here.
      There are two hyperedges in $\incidentedges_G(v)$:
      $(v, \set{u, w}))$ and $(v, \set{w, z})$,
      drawn with blue and red colors, respectively.
      The three possible hypergraphs in $\graphSplitFSet_{\set{v}}(G)$,
      other than $G$,
      corresponding to the assignment of hyperedges
      when $v$ is split into $v^0$ and $v^1$.
      These are depicted as hypergraphs $G'_1$, $G'_2$, and $G'_3$.
    }
    \label{figure node split v}
  \end{subfigure}
  \\
  \bigskip~
  \\
  \begin{subfigure}{\textwidth}
    \centering
    \begin{tikzpicture}
      \node[draw,circle] at (-2, 2) (v) {$v$};
      \node[draw,circle] at (2, 2) (u) {$u$};
      \node[draw,circle] at (-2, -2) (w) {$w$};
      \node[draw,circle] at (2, -2) (z) {$z$};
      \node at (0, -3) (G) {$G$};

      \draw[red, \arrow, bend right] (u) to (v);
      \draw[red, \arrow, bend right] (u) to (w);
      \draw[blue, \arrow, bend left] (u) to (w);
      \draw[blue, \arrow] (u) to (z);
      \draw[cyan, \arrow] (v) to (u);
      \draw[cyan, \arrow, bend left] (v) to (z);
      \draw[violet, \arrow, bend right] (v) to (z);
      \draw[violet, \arrow] (v) to (w);
      \draw[\arrow] (w) to (u);
      \draw[\arrow] (z) to (v);
    \end{tikzpicture}
    ~~~~~~~~~~~~~~~~~~~~~~~~~~~~~
    \begin{tikzpicture}
      \node[draw,circle] at (-3, 1.5) (v0) {$v^0$};
      \node[draw,circle] at (-1.5, 3) (v1) {$v^1$};
      \node[draw,circle] at (1.5, 3) (u0) {$u^0$};
      \node[draw,circle] at (3, 1.5) (u1) {$u^1$};
      \node[draw,circle] at (-2, -2) (w) {$w$};
      \node[draw,circle] at (2, -2) (z) {$z$};
      \node at (0, -3) (G) {$G'$};

      \draw[red, \arrow] (u0) to (v0);
      \draw[red, \arrow, bend right] (u0) to (v1);
      \draw[red, \arrow, bend right] (u0) to (w);
      \draw[blue, \arrow, bend left] (u1) to (w);
      \draw[blue, \arrow] (u1) to (z);
      \draw[cyan, \arrow] (v1) to (u0);
      \draw[cyan, \arrow] (v1) to (u1);
      \draw[cyan, \arrow, bend left] (v1) to (z);
      \draw[violet, \arrow, bend right] (v0) to (z);
      \draw[violet, \arrow] (v0) to (w);
      \draw[\arrow] (w) to (u0);
      \draw[\arrow] (w) to (u1);
      \draw[\arrow] (z) to (v0);
      \draw[\arrow] (z) to (v1);
    \end{tikzpicture}
    \caption{
      Splitting two nodes $u, v$ in a 4-node hypergraph $G$.
      Edges of the same color,
      which have the same head node,
      represent a single hyperedge.
      $G'$ is obtained by splitting nodes $u$ and $v$
      into $u^0, u^1$ and $v^0, v^1$, respectively.
      The cyan hyperedge is assigned to $v^1$,
      the violet hyperedge is assigned to $v^0$,
      the red hyperedge is assigned to $u^0$, and
      the blue hyperedge is assigned to $u^1$.
}
    \label{figure node split u v}
  \end{subfigure}

  \caption{
    Examples of the node split operation.
  }
  \label{figure node split}
\end{figure}

Observe that, for every node $u \in V'$ in the hypergraph $G'$,
each hyperedge in $\incidentedges_{G'}(u)$ corresponds to
a single hyperedge in $G$.
Similarly, for every node $u \in V$ in the original hypergraph $G$,
each hyperedge in $\incidentedges_G(u)$ corresponds to
a single hyperedge in $G'$.

For a set $F \subseteq V(G)$,
let $\graphSplitFSet_F(G)$ be the set of all hypergraphs
that can be obtained from $G$
by splitting some subset of nodes in the set $F$.
For a graph $G' \in \graphSplitFSet_F(G)$,
we use $F'$ to denote the set of nodes in $G'$
that correspond to nodes in $F$ in $G$,
i.e.,
\[F' := (V' \cap F) \cup (V' - V).\]

Note that there are two choices in the
node split operation above which give rise to
all the hypergraphs in $\graphSplitFSet_F(G)$:
\begin{enumerate}[label=\arabic*)]
  \item choice of which nodes in $F$ to split, and
  \item assignment of hyperedges for each split node.
\end{enumerate}
As needed, we will occasionally clarify these choices to specify how
a hypergraph $G' \in \graphSplitFSet_F(G)$ was constructed
by splitting some nodes in $F$. 
\section{Main Result} \label{section multiple}

The main result of this paper
is a tight characterization of network requirements
for Byzantine consensus under the directed hypergraph model.
Recall that for a hypergraph $G' \in \graphSplitFSet_F(G)$ obtained
by splitting some nodes in $F$,
we use $F'$ to denote the set of nodes in $G'$
that correspond to nodes in $F$ in $G$.
With a slight abuse of terminology,
we allow a partition of a set to have empty parts.

\begin{theorem} \label{theorem hypergraphs main}
  Byzantine consensus tolerating at most $f$ faulty nodes
  is achievable on a directed hypergraph $G$
  if and only if
  for every $F \subseteq V$ of size at most $f$,
  every $G' \in \graphSplitFSet_F(G)$ satisfies the following:
  for every partition $(L, C, R)$ of $V'$, either
  \begin{enumerate}[label=\arabic*)]
    \item $L \cup C \adjacent{G'} R - F'$, or
    \item $R \cup C \adjacent{G'} L - F'$.
  \end{enumerate}
\end{theorem}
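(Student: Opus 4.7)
The plan is to establish the equivalence in the two standard directions, with the sufficiency direction carrying the bulk of the technical work.

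For the necessity direction, I would argue by contrapositive. Suppose the network condition fails: there exist $F \subseteq V$ with $\abs{F} \le f$, a split hypergraph $G' \in \graphSplitFSet_F(G)$, and a partition $(L, C, R)$ of $V'$ such that simultaneously $L \cup C \notadjacent{G'} R - F'$ and $R \cup C \notadjacent{G'} L - F'$. The observation that makes the split construction useful is that any execution in $G'$ in which all of $F'$ behave honestly can be faithfully simulated in $G$ by treating the nodes of $F$ as Byzantine and letting each faulty node $v \in F$ send, on each hyperedge $e \in \incidentedges_G(v)$, the message that its assigned copy $v_e \in \set{v^0, v^1}$ would have sent in $G'$. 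Using this simulation, I would construct three executions in $G$: $\alpha_0$ in which all non-faulty inputs are $0$, $\alpha_1$ in which all non-faulty inputs are $1$, and a hybrid $\gamma$ in which non-faulty nodes corresponding to $L$ (resp.\ $R$) have input $0$ (resp.\ $1$) and those corresponding to $C$ take arbitrary fixed inputs, implemented via the simulation of $G'$. The failure of $L \cup C \adjacent{G'} R - F'$ ensures that the view in $\gamma$ of every non-faulty node in $R - F'$ is identical to its view in $\alpha_1$, forcing output $1$ by validity; symmetrically, the failure of $R \cup C \adjacent{G'} L - F'$ forces output $0$ in $\gamma$ for every non-faulty node in $L - F'$. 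This contradicts agreement.

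For the sufficiency direction, I would exhibit a propagation-based algorithm in the spirit of \cite{khan2019undirectedPODC, khan2020directedOPODIS, LewisByzantineDirected}. Each node $u$ floods its input $x_u$ together with a log of the messages it has received. A node $v$ \emph{commits} to a source/value pair $(s, x)$ only when it has received $(s, x)$ from at least $f+1$ of its in-neighbors, each of whom itself committed to $(s, x)$; this $f+1$ threshold is precisely what the adjacency relation $\adjacent{G}$ encodes. After a bounded number of rounds, each node applies a deterministic rule (e.g., a majority over committed pairs) to its final set of commitments to produce its output. Correctness proceeds by showing the contrapositive: any hypothetical execution that violates agreement or validity yields a set $F$, a split $G' \in \graphSplitFSet_F(G)$, and a partition $(L, C, R)$ of $V'$ that together witness a violation of the condition.

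The main obstacle I anticipate is exactly this reverse-engineering step in the sufficiency argument. Given a hypothetical failure of the algorithm, one must (i) match each Byzantine node's equivocation pattern across its outgoing hyperedges to a single consistent hyperedge assignment, thereby defining a concrete element of $\graphSplitFSet_F(G)$; (ii) choose $L$, $C$, and $R$ so that commitments to the two conflicting output values are aligned with $L \cup C$ and $R \cup C$ respectively, with $C$ absorbing the ambiguous nodes; and (iii) verify that the $f+1$ in-neighbor thresholds used by the algorithm translate into simultaneous failures of both adjacency relations on $G'$. By comparison the necessity direction is largely routine, but still requires careful handling of nodes in the middle set $C$, whose inputs must be chosen to keep their views consistent with both $\alpha_0$ and $\alpha_1$ in the hybrid construction.
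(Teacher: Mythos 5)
Your necessity sketch is, at its core, the same indistinguishability argument the paper uses: three executions, with the split of $F$ encoding the equivocation and with validity forcing output $0$ on the $L$ side and $1$ on the $R$ side of the hybrid. The paper organizes it slightly differently --- it first converts $\mathcal{A}$ on $G$ into an algorithm $\mathcal{A}'$ on $G' \in \graphSplitFSet_F(G)$ (\lemmaRef{lemma hypergraphs algorithm G implies G'}) and then reads the three executions $E_1, E_2, E_3$ off a single execution of one auxiliary hypergraph in which each node has up to three copies. The consistency bookkeeping you dismiss as routine (who is faulty in each of $\alpha_0,\alpha_1$, and which input each node of $C$ must be given in the hybrid, depending on whether it is an in-neighbor of $L-F'$ only, of $R-F'$ only, or of both) is exactly what that multi-copy construction is for; your outline is completable along those lines, so I would not call this a gap, only an underestimate of the work.

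The genuine gap is the sufficiency direction. First, your algorithm never uses the defining feature of the model --- that a head cannot equivocate within a single hyperedge, since all tails receive the identical message. The condition of the theorem is strictly weaker than the point-to-point requirements (e.g.\ it admits hypergraphs with $n \le 3f$), so any sufficiency proof must exploit that feature somewhere; in the paper it is exploited in \Step{c} of \algoRef{algorithm hypergraphs}, where every node of the source component reconstructs the \emph{same} split hypergraph $G'_v$ of the candidate faulty set precisely because the tails of each faulty hyperedge inside $S$ agree on what was sent on it (\lemmaRef{lemma algorithm correct split graph multiple}). Your ``commit when $f+1$ in-neighbors committed, then take a majority'' rule has no analogue of this, and no argument that two non-faulty nodes end up with the same committed set --- under these marginal conditions they generally will not, which is why the paper needs the equivalence of \conditionNC{} with the node-disjoint-paths condition \conditionSC{} (\theoremRef{theorem hypergraphs menger}), the unique source component of $G-F$ and its propagation properties (Lemmas \ref{lemma hypergraphs source unique}, \ref{lemma hypergraphs source SC}, \ref{lemma hypergraphs source propagates}), and a phase per candidate faulty set $F$ so that agreement is provably reached in the phase $F = F^*$ and, by the validity lemma, never lost afterwards. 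Second, your plan to prove correctness ``by contrapositive, extracting a set $F$, a split $G'$, and a partition $(L,C,R)$ from any failed execution'' assumes the conclusion: that extraction is the entire difficulty, you give no mechanism for performing it for the rule you propose, and you yourself flag steps (i)--(iii) as unresolved. As it stands, the sufficiency half is an assertion, not a proof, and the specific algorithm sketched would fail on graphs that satisfy the condition but not the point-to-point bounds.
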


Note that we allow a partition to have empty parts.
However, the interesting partitions are those
where both $L$ and $R$ are non-empty,
but $C$ can be possibly empty.
In \sectionRef{section other models},
we show that
when the directed hypergraph corresponds to
the point-to-point, local broadcast, or undirected hypergraph model,
the above condition reduces to the corresponding known tight network conditions
in each of the three cases.

We prove the necessity portion of \theoremRef{theorem hypergraphs main}
in \sectionRef{section necessity}.
In \sectionRef{section sufficiency} we give an algorithm to
constructively show the sufficiency.
The above condition is similar to the network condition
for directed graphs in the point-to-point communication model
\cite{LewisByzantineDirectedARXIV, LewisByzantineDirected}
and in the local broadcast model
\cite{khan2020directedOPODIS}.
For convenience,
we give a name to the condition in \theoremRef{theorem hypergraphs main}.

\refstepcounter{condition} \label{condition hypergraphs NC}
\begin{definition}[Condition \nameOfConditionHNC{}]
  \label{definition hypergraphs NC}
  A graph $G$ satisfies
  \emph{condition \nameOfConditionHNC{} with parameter $F$}
  if for every $G' \in \graphSplitFSet_F(G)$
  and every partition $(L, C, R)$ of $V'$,
  we have that either
  \begin{enumerate}[label=\arabic*)]
    \item $L \cup C \adjacent{G'} R - F'$, or
    \item $R \cup C \adjacent{G'} L - F'$.
  \end{enumerate}
  We say that $G$ satisfies \emph{condition \nameOfConditionHNC{}},
  if $G$ satisfies condition \nameOfConditionHNC{} with parameter $F$
  for every set $F \subseteq V(G)$ of cardinality at most $f$.
\end{definition}

\section{Reductions to Other Models} \label{section other models}

In this section, we discuss how \conditionNC{}
relates to the tight conditions for the classical point-to-point
communication model, the local broadcast model,
and the undirected hypergraph model.
In an undirected hypergraph,
any node on the undirected hyperedge can act as the sender.
Formally,
we say that the hypergraph $G$ is \emph{undirected}
if
\[
  \exists (u, S) \in E
  \quad \iff \quad
  \forall v \in S, \, \exists (v, S \cup \set{u} - v) \in E.
\]
We say that the hypergraph $G$ is \emph{bidirectional}
if the underlying simple graph $\underlying{G}$ is undirected.
This corresponds to the case where for every pair of nodes $u, v \in V$,
\[
  \exists e \in E \suchthat u = \head(e),\, v \in \tail(e)
  \quad \iff \quad
  \exists e' \in E \suchthat v = \head(e'),\, u \in \tail(e').
\]
Note that if a hypergraph $G$ is undirected,
then it is bidirectional.
However, the converse is not true.
For example, the local broadcast model on undirected graphs
is a special case of bidirectional hypergraphs
but not undirected hypergraphs.

The classical point-to-point communication model corresponds to the
case where each directed hyperedge has a single tail node.
This means that the hypergraph $G$ is essentially the same as
the underlying simple graph $\underlying{G}$.
So each edge $(u, v)$ in the graph $\underlying{G}$
represents a point-to-point channel where the messages
sent by node $u$ to node $v$ are private between $u$ and $v$.
Under the point-to-point communication model,
it is well known that $n \ge 3f + 1$
\cite{Fischer1986,Lamport:1982:BGP:357172.357176, Pease:1980:RAP:322186.322188}
and node connectivity at least $2f + 1$
\cite{DOLEV198214, Fischer1986}
are both necessary and sufficient for consensus
in arbitrary \emph{undirected} graphs.
The following theorem states that if $G$ is bidirectional
and has only point-to-point links,
i.e., each hyperedge has a single tail node,
then \conditionNC{} reduces to $n \ge 3f+1$ and
node connectivity $\ge 2f + 1$.

\begin{theorem} \label{thm reduction p2p}
  A bidirectional hypergraph $G$,
  such that each hyperedge has exactly one tail node,
  satisfies \conditionNC{}
  if and only if
  \begin{enumerate}[label=\arabic*)]
    \item $n \ge 3f + 1$, and
    \item the underlying undirected graph $\underlying{G}$
          has node connectivity at least $2f + 1$.
  \end{enumerate}
\end{theorem}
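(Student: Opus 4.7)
The plan is to prove both directions via direct combinatorial analysis of the underlying simple graph $\underlying{G}$, which is undirected because $G$ is bidirectional.

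For necessity I would argue the contrapositive. If $n \le 3f$, partition $V$ into three non-empty sets $V_L, V_M, V_R$ each of size at most $f$; take $F = V_M$, split every $v \in F$ into $v^0, v^1$, use the partition $L = V_L \cup \set{v^0 : v \in F}$, $R = V_R \cup \set{v^1 : v \in F}$, $C = \emptyset$, and for each single-tail hyperedge $(v, \set{u})$ with $v \in F$ assign it to $v^1$ if $u \in V_R$ and to $v^0$ otherwise. The crucial point, which uses the single-tail hypothesis to permit independent per-hyperedge assignment, is that the $v^0$'s in $L$ then have no hyperedges into $V_R$, so $\Gamma_{G'}(L \cup C, R - F') \subseteq V_L$ has size at most $|V_L| \le f$; symmetrically on the other side, and both adjacencies in \ConditionNC{} fail. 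If instead $\underlying{G}$ has node connectivity at most $2f$, let $S$ be a minimum vertex cut and split the components of $\underlying{G} - S$ into non-empty $V_L, V_R$ (so no edge of $\underlying{G}$ runs directly between them). Partition $S = S_1 \cup S_2$ with $|S_1|, |S_2| \le f$, set $F = S_1$, $C = S_2$, and mirror the previous construction; the cross in-neighbors of $V_R$ from $L \cup C$ are then confined to $S_2 = C$, giving at most $|S_2| \le f$ of them, and symmetrically.

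For sufficiency, assume $n \ge 3f+1$ and $\underlying{G}$ has connectivity at least $2f+1$. Given $F$, $G' \in \graphSplitFSet_F(G)$, and a partition $(L, C, R)$ of $V'$, suppose for contradiction that both adjacencies fail. Let $L_1 = L \cap (V - F)$, $C_1 = C \cap (V - F)$, $R_1 = R \cap (V - F)$; these partition $V - F$, of size at least $2f+1$. Define $A_1 = \set{w \in L_1 \cup C_1 : w \text{ is adjacent in } \underlying{G} \text{ to some node of } R_1}$ and $A_2$ symmetrically. Because $\underlying{G}$ is undirected and non-$F$ nodes are never split, each $w \in A_1$ is an in-neighbor of some $v \in R_1$ in $G'$, so $A_1 \subseteq \Gamma_{G'}(L \cup C, R - F')$ and hence $|A_1| \le f$; symmetrically $|A_2| \le f$. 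Assuming $L_1, R_1 \ne \emptyset$ (else one adjacency holds vacuously), the strategy is to produce non-adjacent $s \in L_1$ and $t \in V - F$ with $t \notin A_2$, invoke Menger's theorem in $\underlying{G}$ to obtain $2f+1$ internally vertex-disjoint $s$-$t$ paths, and charge each one: at most $|F| \le f$ paths use an $F$-node internally, while each remaining path stays in $V - F$ and, at its first vertex outside $L_1$, enters a node of $A_2$ (adjacent to its $L_1$ predecessor); since this vertex cannot equal $t \notin A_2$, it must be an internal node of $A_2$, giving at most $|A_2| \le f$ such paths — a total of at most $2f < 2f+1$, the desired contradiction.

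The main obstacle will be the choice of $s, t$ in the boundary cases. When both $L_1 \setminus A_1$ and $R_1 \setminus A_2$ are non-empty, picking $s$ and $t$ in these respective sets makes them non-adjacent automatically. If $L_1 \subseteq A_1$ then $|L_1| \le f$, and provided $R_1 \setminus A_2 \ne \emptyset$ one may take any $s \in L_1$ and $t \in R_1 \setminus A_2$. The hardest subcase is $L_1 \subseteq A_1$ together with $R_1 \subseteq A_2$: here $|L_1| + |R_1| \le 2f$ combined with $|V - F| \ge 2f+1$ forces $|C_1| \ge 1$, and the bounds $|A_1 \cap C_1| \le f - |L_1|$ and $|A_2 \cap C_1| \le f - |R_1|$ give $|C_1 \setminus (A_1 \cup A_2)| \ge 1$; any such witness $z$ is non-adjacent to $L_1$, so the pair $(s, z)$ for any $s \in L_1$ is non-adjacent in $\underlying{G}$ and the same Menger-based counting argument concludes.
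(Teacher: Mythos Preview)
Your proposal is correct, but it follows a genuinely different route from the paper. The paper does not prove \theoremRef{thm reduction p2p} directly; instead it first establishes the more general directed statement (\theoremRef{thm reduction p2p directed}) that \conditionNC{} on a single-tail hypergraph is equivalent to \conditionpNC{} on the underlying directed graph $\underlying{G}$, and then obtains \theoremRef{thm reduction p2p} as a corollary by citing the known fact (from \cite{LewisByzantineDirected,LewisByzantineDirectedARXIV}) that for undirected graphs \conditionpNC{} is equivalent to $n\ge 3f+1$ together with $(2f+1)$-connectivity. Your argument bypasses \conditionpNC{} entirely: for necessity you build the violating split and partition directly from a small-$n$ three-part decomposition or a size-$\le 2f$ cut, and for sufficiency you run a self-contained Menger path-counting argument inside $\underlying{G}$. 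The paper's route is more modular and, as a by-product, yields the directed reduction; your route is more elementary and fully self-contained, requiring no appeal to the prior characterization of \conditionpNC{}. One small remark: your necessity case $n\le 3f$ tacitly assumes $n\ge 3$ to form three non-empty parts; the residual case $n\le 2$ (possible when $f\ge 1$) is trivial via $F=\emptyset$ and the two-node partition, so this is not a real gap.
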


In \sectionRef{section proof p2p},
we show a more general result
(\theoremRef{thm reduction p2p directed})
when $G$
is not necessarily bidirectional,
but each hyperedge has exactly one tail node.
This corresponds to the point-to-point communication model
on arbitrary \emph{directed} graphs
\cite{LewisByzantineDirected,LewisByzantineDirectedARXIV}.
\theoremRef{thm reduction p2p} follows as a corollary.

The local broadcast model corresponds to the
other extreme
where each node $u$ in $G$ has exactly one hyperedge in $\incidentedges_G(u)$,
so that the messages transmitted by $u$
are received identically and correctly by all out-neighbors of $u$.
Under the local broadcast model,
our earlier work \cite{khan2019undirectedPODC}
shows that node degree at least $2f$
and connectivity at least $\floor{3f/2} + 1$
are both necessary and sufficient
for consensus in arbitrary \emph{undirected} graphs.
The following theorem states that
if $G$ is bidirectional and has only local broadcast channels,
i.e.,
each node
is a head node of a single hyperedge,
then \conditionNC{} reduces to
minimum node degree $\ge 2f$ and node connectivity $\ge \floor{3f/2} + 1$.

\begin{theorem} \label{thm reduction local broadcast}
  A bidirectional hypergraph $G$,
  such that each node is a head node of exactly one hyperedge,
  satisfies \conditionNC{}
  if and only if
  for the underlying undirected graph $\underlying{G}$
  \begin{enumerate}[label=\arabic*)]
    \item each node in $\underlying{G}$ has degree at least $2f$, and
    \item $\underlying{G}$ has node connectivity at least $\floor{3f/2} + 1$.
  \end{enumerate}
\end{theorem}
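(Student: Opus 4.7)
The plan is to prove each direction of the iff: necessity via explicit constructions that witness the failure of \conditionNC{}, and sufficiency by contradiction, deriving a small vertex cut in $\underlying{G}$ from a hypothetical failure.

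\textbf{Necessity} (\conditionNC{} $\Rightarrow$ (1), (2)): I argue the contrapositive. If (1) fails via some $v$ with degree at most $2f-1$ in $\underlying{G}$, I choose $F \subseteq N_{\underlying{G}}(v)$ with $|F| = \min(f, d(v))$ so that $|N_{\underlying{G}}(v) \setminus F| \le f$, set $G' = G$ (no splitting), and take the partition $L = \{v\}$, $C = \emptyset$, $R = V \setminus (\{v\} \cup F)$. By bidirectionality the in-neighbors of $\{v\}$ in $R$ are exactly $N_{\underlying{G}}(v) \setminus F$, of size at most $f$, while the in-neighbors of $R$ in $\{v\}$ number at most $1$; both adjacencies fail. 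If (2) fails via a vertex cut $S$ in $\underlying{G}$ with $|S| \le \floor{3f/2}$ separating non-empty sides $L_0, R_0$, I choose $F \subseteq S$ with $|F| = \min(|S|, f)$, split every node in $F$, and partition $F = F^L \sqcup F^R$ with each part of size at most $2f - |S|$ (feasible precisely when $|S| \le \floor{3f/2}$). I assign each $s \in F^L$'s hyperedge to the copy placed in $L$, and symmetrically for $F^R$. With partition $L = L_0 \cup \{s^L : s \in F\}$, $C = S \setminus F$, $R = R_0 \cup \{s^R : s \in F\}$, the in-neighbor counts of $R - F' = R_0$ and $L - F' = L_0$ are bounded by $|F^L| + |S \setminus F| \le f$ and $|F^R| + |S \setminus F| \le f$ respectively, witnessing the failure.

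\textbf{Sufficiency} ((1), (2) $\Rightarrow$ \conditionNC{}): Suppose for contradiction that $F$ with $|F| \le f$, $G' \in \graphSplitFSet_F(G)$, and a partition $(L, C, R)$ witness a failure. Let $\hat L = L - F'$, $\hat R = R - F'$, $\hat C = C - F'$, which partition $V \setminus F$, and partition $F = F_L \sqcup F_C \sqcup F_R$ by the location in $G'$ of each $F$-node's active (hyperedge-carrying) copy. The failure hypothesis translates into $|P_R| + |J_R \setminus F_R| \le f$ and $|P_L| + |J_L \setminus F_L| \le f$, where $P_R = N_{\underlying{G}}(\hat R) \cap (\hat L \cup \hat C)$, $P_L = N_{\underlying{G}}(\hat L) \cap (\hat R \cup \hat C)$, $J_R = N_{\underlying{G}}(\hat R) \cap F$, and $J_L = N_{\underlying{G}}(\hat L) \cap F$. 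I then identify a suitable subset of $P_R \cup P_L \cup F$ that acts as a vertex cut of $\underlying{G}$ separating $\hat L \setminus P_R$ from $\hat R \setminus P_L$; a careful counting over the partition $F_L \sqcup F_C \sqcup F_R$ combined with condition (1) bounds the cut's size by $\floor{3f/2}$, contradicting (2). Degenerate sub-cases in which $\hat L \setminus P_R$ or $\hat R \setminus P_L$ is empty are resolved by applying (1) directly to locate a node of degree less than $2f$.

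The main obstacle is the cut-size bound in the sufficiency direction: the naive estimate $|P_R| + |P_L| + |F| \le 3f$ is twice the required $\floor{3f/2}$. Tightening requires simultaneously exploiting (i) the active-copy partition $F_L \sqcup F_C \sqcup F_R$, which dictates how each $F$-node contributes to the faulty-side counts $|J_R \setminus F_R|$ and $|J_L \setminus F_L|$, and (ii) the min-degree condition (1), which forces each $F$-node to have at least $f+1$ neighbors in $V \setminus F$ and hence to contribute substantially to $J_R$ or $J_L$. The case analysis hinges on the relative sizes of $|F_C|$ (nodes active in the middle partition) versus $|F_L|$ and $|F_R|$, since only $F_C$-nodes can simultaneously appear on both sides of the failure inequalities.
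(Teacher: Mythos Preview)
Your necessity argument for the degree condition has a concrete error. With $G'=G$ (no splitting) the partition $(L,C,R)$ must cover $V'=V$, but your choice $L=\{v\}$, $C=\emptyset$, $R=V\setminus(\{v\}\cup F)$ omits $F$ entirely. The obvious repair of placing $F$ into $L$ does give $|\Gamma_{G'}(R\cup C,\,L-F')|=|N(v)\setminus F|\le f$, but then $L\cup C=\{v\}\cup F$ can have up to $1+|F|=1+\min(f,d(v))$ in-neighbours of $R-F'$, which exceeds $f$ whenever $f<d(v)\le 2f-1$ (since $v$ itself has a neighbour in $R$ and you cannot force every node of $F$ to have all its neighbours inside $L$). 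The construction that actually works takes $A\subseteq N(v)$ with $|A|=\max(0,d(v)-f)\le f-1$, sets $F=A$ and $L=\{v\}\cup A$, so that $|\Gamma(R,\{v\})|=d(v)-|A|\le f$ and $|\Gamma(L,R)|\le 1+|A|\le f$; your size choice $|F|=\min(f,d(v))$ is the wrong end of the range.

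More broadly, the paper does not attempt a direct proof at all. It observes that because each node has exactly one outgoing hyperedge, whenever a node $v\in F$ is split one of $v^0,v^1$ has out-degree $0$ and therefore lies in neither $\Gamma_{G'}(L\cup C,\,R-F')$ nor $\Gamma_{G'}(R\cup C,\,L-F')$; hence without loss of generality $G'=G$, and \conditionNC{} collapses to \conditionbNC{} on $\underlying{G}$. The equivalence of \conditionbNC{} with the degree and connectivity bounds is then quoted from \cite{khan2019undirectedPODC}. Your sufficiency sketch is essentially an attempt to reprove that cited result from scratch, but as you yourself note, bounding the cut by $\lfloor 3f/2\rfloor$ requires a delicate case analysis that you have not carried out; the ``careful counting'' you allude to is the entire content of the argument, and nothing in your outline explains how condition~(1) is used to force each $F$-node to contribute to $J_L$ or $J_R$ in the way you need.
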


In \sectionRef{section proof local broadcast},
we show a more general result
(\theoremRef{thm reduction local broadcast directed})
when $G$ is not necessarily bidrectional,
but each node is a head node of exactly one hyperedge.
This corresponds to the local broadcast model
on arbitrary \emph{directed} graphs \cite{khan2020directedOPODIS}.
\theoremRef{thm reduction local broadcast} follows as a corollary.

The last model we consider in this section is the undirected hypergraph model.
Ravikant el. al. \cite{Ravikant10.1007/978-3-540-30186-8_32}
obtained tight conditions for this model.
Recall that a hypergraph $G = (V, E)$ is undirected if,
for every hyperedge $(u, S) \in E$ and tail node $v \in S$,
there exists a hyperedge $(v, (S - v) \cup \set{u})$.
For simplicity,
an undirected hyperedge $e$ can be viewed as a subset of nodes $e \subseteq V$,
representing $\abs{e}$ \emph{directed} hyperedges.
$e$ is called an $\abs{e}$-hyperedge.
Each hyperedge is effectively a local multicast channel
where \emph{any} node $u \in e$ can send a message,
which will be received identically and correctly by all nodes in $e - u$.

The tight characterization of undirected hypergraphs for consensus was given
by Ravikant et al. \cite{Ravikant10.1007/978-3-540-30186-8_32}.
We state this in \theoremRef{thm 2 3 hypergraph} below.
Observe that this is different from Theorem 1 in
\cite{Ravikant10.1007/978-3-540-30186-8_32}.
This is because we found a bug in the proof of Lemma 3 in
\cite{Ravikant10.1007/978-3-540-30186-8_32}
which is, in fact, not true, as documented in
\appendixRef{appendix hypergraphs misc}.
However,
\theoremRef{thm 2 3 hypergraph} still follows from the work in
\cite{Ravikant10.1007/978-3-540-30186-8_32}.
We observe that while this was presented as a tight characterization
for $(2,3)$-hypergraphs,\footnote{
  An undirected hypergraph $G$ is a $(2,3)$-hypergraph
  if each hyperedge is either a $2$-hyperedge or a $3$-hyperedge.
}
it also holds for general undirected hypergraphs.

\begin{theorem}
  [Fixed version of Theorem 1 in \cite{Ravikant10.1007/978-3-540-30186-8_32}]
  \label{thm 2 3 hypergraph}
  Byzantine consensus tolerating at most $f$ faulty nodes
  is achievable on an undirected hypergraph $G = (V, E)$ if and only if
  $G$ satisfies each of the following:
  \begin{enumerate}[label=\arabic*)]
    \item
    $n \ge 2f + 1$,

    \item
    the underlying simple graph $\underlying{G}$ is
    either a complete graph or is $(2f+1)$-connected,

    \item
    for every $V_1, V_2, V_3 \subseteq V$ such that
    $V_1 \cup V_2 \cup V_3 = V$ and $\abs{V_1} = \abs{V_2} = \abs{V_3} = f$,
    there exist three nodes
    \begin{enumerate}[label=(\roman*)]
      \item $u \in V_1 - (V_2 \cup V_3)$,
      \item $v \in V_2 - (V_1 \cup V_3)$, and
      \item $w \in V_3 - (V_1 \cup V_2)$,
    \end{enumerate}
    such that there is an undirected hyperedge in $G$
    that contains $u$, $v$, and $w$.
  \end{enumerate}
\end{theorem}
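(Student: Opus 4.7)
The plan is to leverage Theorem \ref{theorem hypergraphs main}, which identifies \refNC{} as the tight condition for consensus on arbitrary directed hypergraphs, and prove Theorem \ref{thm 2 3 hypergraph} by showing that for an \emph{undirected} hypergraph $G$, \refNC{} is equivalent to the conjunction of conditions 1, 2, 3 in the statement. This parallels the strategy used for Theorems \ref{thm reduction p2p} and \ref{thm reduction local broadcast}, which reduce \refNC{} to classical conditions in the point-to-point and local broadcast special cases.

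For the necessity direction (assume \refNC{}, derive 1, 2, 3), I would handle each classical condition separately by exhibiting a witness partition in the contrapositive. For condition 1, if $n \le 2f$, choose $F$ of size $f$, set $G' = G$ (no splitting), and construct a partition whose $L - F'$ and $R - F'$ each contain at most $f$ nodes with inadequate in-neighborhoods in the opposite side. For condition 2, a separator $S$ of size $\le 2f$ in $\underlying{G}$, together with the resulting components, yields a partition $(L, C, R)$ with $C = S$ (or a carefully split version of $S$) whose adjacency fails. For condition 3 (the hypergraph-specific piece), given $V_1, V_2, V_3$ of size $f$ covering $V$ with no hyperedge touching all three exclusive parts $A_i = V_i - \bigcup_{j\ne i} V_j$, take $F = V_2$ and split each $v \in V_2$ into $v^0, v^1$: assign a hyperedge $e \in \incidentedges_G(v)$ to $v^0$ if its remaining endpoints lie primarily in $V_1$ and to $v^1$ otherwise. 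Form $L = A_1 \cup \{v^0\}_{v\in V_2}$, $R = A_3 \cup \{v^1\}_{v\in V_2}$, $C = V - (V_1 \cup V_3 \cup V_2)$; by the choice of assignment and the absence of a ``crossing'' hyperedge, neither $L \cup C$ is adjacent to $R - F'$ nor vice versa in $G'$, contradicting \refNC{}.

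For sufficiency (assume 1, 2, 3, derive \refNC{}), I would take an arbitrary $F$ with $|F| \le f$, $G' \in \graphSplitFSet_F(G)$, and partition $(L, C, R)$ of $V'$ with both $L - F'$ and $R - F'$ non-empty, and argue one of the two adjacencies must hold. Map back to $G$ by letting $\pi: V' \to V$ collapse split copies. Define $V_2 \supseteq F$ of size exactly $f$ (padding arbitrarily from $C$ or elsewhere), $V_1 \supseteq \pi(L - F')$ of size $f$, and $V_3 \supseteq \pi(R - F')$ of size $f$, chosen so that $V_1 \cup V_2 \cup V_3 = V$ (using condition 1, $n \ge 2f+1$, to ensure enough nodes). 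Applying condition 3 yields an undirected hyperedge $e$ touching each exclusive $A_i$; by undirectedness, $e$ induces directed hyperedges with heads in each corner, and these lift to hyperedges in $G'$ supplying in-neighbors across the partition. If a single hyperedge is not enough to reach the $f+1$ count, condition 2 supplies additional disjoint paths across $C$ to boost the in-neighborhood count, reducing the residual case to the point-to-point analysis already used in Theorem \ref{thm reduction p2p}.

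The main obstacle is the sufficiency direction, specifically the interaction between the node-split operation and the ``three covering sets'' structure of condition 3. The bookkeeping requires care because (i) $V_1, V_2, V_3$ must have size \emph{exactly} $f$ while the input $F$ may be smaller, so padding must be done without accidentally emptying any $A_i$; (ii) the hyperedge produced by condition 3 must lift correctly through the assignment of hyperedges to split copies to actually contribute in-neighbors in $G'$ on the correct side of the partition; and (iii) the case split between $\underlying{G}$ being complete versus $(2f+1)$-connected must be reconciled with condition 3, since the two classical conditions interact differently with the hypergraph-specific one. This is precisely where the bug in Lemma 3 of \cite{Ravikant10.1007/978-3-540-30186-8_32} occurred, so the fix must handle these edge cases explicitly.
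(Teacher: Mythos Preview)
Your high-level strategy---derive \theoremRef{thm 2 3 hypergraph} from \theoremRef{theorem hypergraphs main} by proving that, for undirected hypergraphs, \conditionNC{} is equivalent to conditions 1, 2, 3---is exactly what the paper does (this equivalence is \theoremRef{theorem hypergraph to undirected}, proved via Lemmas~\ref{lem NC implies n > 2f}--\ref{lem hypergraph implies NC}). The necessity direction is also broadly in line with the paper, though note that for condition~3 the paper sets $F := V_1$ and splits only the \emph{exclusive} part $V_1 - (V_2 \cup V_3)$; the point is that the ``no crossing hyperedge'' hypothesis guarantees each hyperedge out of an exclusive node misses at least one of the other two exclusive parts, giving a clean assignment rule. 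Your ``primarily in $V_1$'' rule is too vague, and your partition as written omits the nodes of $(V_1\cap V_3)-V_2$.

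The real gap is in your sufficiency sketch. Two issues:
\begin{enumerate}
\item You cannot in general fit $\pi(L-F')$ into a set $V_1$ of size exactly $f$: nothing in the hypotheses bounds $|L-F'|$. In the paper's proof (\lemmaRef{lem hypergraph implies NC}) this is handled by a three-case contrapositive analysis. In Cases~1 and~2 (roughly, when $L\cup C$ or $R\cup C$ is ``too large'' relative to the in-neighborhood sets, or when the $C$-neighborhoods into $L-F'$ and $R-F'$ differ), one extracts a vertex cut of size at most $2f$ in $\underlying{G}$, violating condition~2. Only in Case~3, where both sides are tight and the $C$-neighborhoods coincide, does one first derive $n\le 3f$ and then build $A,B$ of size $\le f$ with $A\cup B\cup F=V$, which can be padded to the $V_1,V_2,V_3$ required by condition~3.
\item Your fallback ``condition~2 supplies additional disjoint paths to boost the in-neighborhood count'' conflates node-connectivity (paths) with the $\adjacent{G'}$ relation (direct in-neighbors). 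A single hyperedge from condition~3 contributes at most one in-neighbor, and $(2f{+}1)$-connectivity of $\underlying{G}$ does not by itself produce $f{+}1$ in-neighbors across an arbitrary $(L,C,R)$ cut in a split graph $G'$. The paper never uses condition~2 ``constructively'' in this way; it is used only in the contrapositive, by exhibiting a small separator when \conditionNC{} fails in Cases~1--2.
\end{enumerate}
So the missing idea is precisely the case split in \lemmaRef{lem hypergraph implies NC}: rather than trying to manufacture $V_1,V_2,V_3$ directly, assume \conditionNC{} fails and show that either $\underlying{G}$ has a $\le 2f$ cut, or $n\le 3f$ and condition~3 fails.
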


The following theorem states that if $G$ is an undirected hypergraph,
then \conditionNC{} reduces to the conditions in
\theoremRef{thm 2 3 hypergraph}.

\begin{theorem} \label{theorem hypergraph to undirected}
  An undirected hypergraph $G$ satisfies \conditionNC{}
  if and only if
  $G$ satisfies each of the following:
  \begin{enumerate}[label=\arabic*)]
    \item
    $n \ge 2f + 1$,

    \item
    the underlying simple graph $\underlying{G}$ is
    either a complete graph or is $(2f+1)$-connected,

    \item
    for every $V_1, V_2, V_3 \subseteq V$ such that
    $V_1 \cup V_2 \cup V_3 = V$ and $\abs{V_1} = \abs{V_2} = \abs{V_3} = f$,
    there exist three nodes
    \begin{enumerate}[label=(\roman*)]
      \item $u \in V_1 - (V_2 \cup V_3)$,
      \item $v \in V_2 - (V_1 \cup V_3)$, and
      \item $w \in V_3 - (V_1 \cup V_2)$,
    \end{enumerate}
    such that there is an undirected hyperedge in $G$
    that contains $u$, $v$, and $w$.
  \end{enumerate}
\end{theorem}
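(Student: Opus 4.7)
My approach is a direct combinatorial proof of equivalence in both directions. (A shorter indirect alternative would observe that \theoremRef{theorem hypergraphs main} and \theoremRef{thm 2 3 hypergraph} are both tight characterizations of consensus achievability on undirected hypergraphs, forcing the two sets of conditions to be equivalent as a tautology; I prefer a direct argument that exposes the structural correspondence between the two conditions.)

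For the forward direction (\refNC{} implies (1)-(3)), I argue the contrapositive of each condition by exhibiting a triple $(F, G', (L, C, R))$ witnessing failure of \refNC{}. If $n \le 2f$, take $F = \emptyset$, $G' = G$, $C = \emptyset$ and partition $V$ into nonempty $L \sqcup R$ with $|L|, |R| \le f$, so each in-neighborhood has size at most $f$. If $\underlying{G}$ is not complete and admits a vertex cut $S$ with $|S| \le 2f$ separating $V \setminus S$ into nonempty $L_0 \sqcup R_0$, split $S = S_L \sqcup S_R$ with $|S_L|, |S_R| \le f$ and set $F = S_L$, $L = L_0 \cup S_L$, $R = R_0 \cup S_R$, $C = \emptyset$; the separation bounds each in-neighborhood by $|S_L|$ or $|S_R|$. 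If (3) fails with witnesses $V_1, V_2, V_3$ of size $f$ and $A_i := V_i \setminus (V_j \cup V_k)$ for cyclic $(i,j,k)$, set $F = V_3$ and split \emph{only} the nodes of $A_3$ (leaving $V_3 \cap (V_1 \cup V_2)$ unsplit). For each $v \in A_3$ and each $(v,T) \in \incidentedges_G(v)$, assign $(v,T)$ to $v^0$ if $T \cap A_2 = \emptyset$ and to $v^1$ otherwise. Take the partition
\begin{align*}
  L &= A_1 \cup (V_1 \cap V_3 \setminus V_2) \cup \set{v^0 \mid v \in A_3}, \\
  R &= A_2 \cup (V_2 \cap V_3 \setminus V_1) \cup \set{v^1 \mid v \in A_3}, \\
  C &= V_1 \cap V_2.
\end{align*}
Then $L - F' = A_1$, $R - F' = A_2$, and $L \cup C = V_1 \cup \set{v^0 \mid v \in A_3}$. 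In-neighbors of $A_2$ from $V_1$ are bounded by $|V_1| = f$, while those from $\set{v^0 \mid v \in A_3}$ are zero by the assignment rule; symmetrically, any hyperedge assigned to $v^1$ (with $v \in A_3$) has a tail in $A_2$, so by the failure of (3) no such tail can lie in $A_1$, bounding the other in-neighborhood by $|V_2| = f$.

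For the backward direction ((1)-(3) imply \refNC{}), I argue the contrapositive: given a failing triple $(F, G', (L, C, R))$ with $N_R := \inneighborhood{L \cup C}{R - F'}{G'}$ and $N_L := \inneighborhood{R \cup C}{L - F'}{G'}$ each of size at most $f$, I construct candidates $V_1, V_2, V_3 \subseteq V$ of size $f$ that cover $V$ and violate (3). The natural choice builds $V_3$ from the projection of $F'$ back into $V$ (padded to size $f$), $V_1$ from $N_R$ plus elements of $L \setminus F$, and $V_2$ from $N_L$ plus elements of $R \setminus F$; any hyperedge meeting all three corners $A_1, A_2, A_3$ would have to provide a $G'$-edge connecting $L \setminus F'$ and $R \setminus F'$ outside of $N_L \cup N_R$, contradicting the definition of $N_L, N_R$. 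A case analysis shows that when the candidate sets cannot be assembled of the right size, the same failing triple exhibits a small $n$ or a small underlying cut, violating (1) or (2) instead.

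\textbf{Main obstacle.} The delicate step is the forward direction for (3). A naive approach that splits all of $V_3$ overcounts in-neighbors from the split copies of $V_3 \cap (V_1 \cup V_2)$: the tail-in-$A_2$ assignment rule prevents $v^0$ contributions only when (3) failure applies, and (3) failure applies only when the split head $v$ itself lies in $A_3$. Restricting the split to $A_3$ is exactly what makes the bounds collapse cleanly to $|V_1|$ and $|V_2|$, with the unsplit overlap $V_3 \cap (V_1 \cup V_2)$ absorbed into $F'$ on the $L$ and $R$ sides without contributing to $L - F'$ or $R - F'$. The backward direction is also intricate because the rigid requirement $|V_1| = |V_2| = |V_3| = f$ with $V_1 \cup V_2 \cup V_3 = V$ forces careful padding and a fallthrough to (1) or (2) when the sizes do not support a direct (3)-witnessing construction.
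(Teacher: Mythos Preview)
Your forward argument for condition~(2) has a genuine gap. With $F = S_L$, $L = L_0 \cup S_L$, $R = R_0 \cup S_R$, $C = \emptyset$, and no node splitting, you have $F' = S_L$, so $R - F' = R_0 \cup S_R$ (not $R_0$). The in-neighborhood $\inneighborhood{L \cup C}{R - F'}{G}$ therefore includes $\inneighborhood{L_0}{S_R}{G}$, which can be all of $L_0$: the vertex cut $S$ separates $L_0$ from $R_0$, but says nothing about edges between $L_0$ and $S_R \subseteq S$. Concretely, take $f = 1$, $G = C_5$ on $\{1,\dots,5\}$, $S = \{2,4\}$, $L_0 = \{3\}$, $R_0 = \{1,5\}$, $S_L = \{2\}$, $S_R = \{4\}$: then $\inneighborhood{L}{R - F'}{G} = \inneighborhood{\{2,3\}}{\{1,4,5\}}{G} = \{2,3\}$ has size $2 = f+1$, so $L \cup C \adjacent{G} R - F'$ and \refNC{} is not violated by your witness. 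The paper's fix (\lemmaRef{lem NC implies 2f+1 connectivity}) is to take $F$ and $C$ as two halves of $S$, then \emph{split} the nodes of $F$: because $G$ is an undirected hypergraph and $S$ is a cut in $\underlying{G}$, no undirected hyperedge meets both $L_0$ and $R_0$, so every directed hyperedge out of a node $z \in F$ has tails in at most one of $L_0, R_0$; assigning $z^0$ the $L_0$-side edges and $z^1$ the $R_0$-side edges gives a clean $(L, C, R)$ with $L - F' = L_0$, $R - F' = R_0$, and both in-neighborhoods contained in $C$, of size $\le f$.

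Your treatments of (1) and (3) in the forward direction are essentially the paper's (up to relabeling $V_1 \leftrightarrow V_3$); in particular, your observation that one must split only $A_3$ and not all of $V_3$ is exactly the paper's choice. For the backward direction, your sketch is in the right spirit but underestimates the work: the paper (\lemmaRef{lem hypergraph implies NC}) needs a three-way case split on whether $(L \cup C) \subseteq F' \cup N_R$, $(R \cup C) \subseteq F' \cup N_L$, and whether $\inneighborhood{C - F'}{R - F'}{G'} = \inneighborhood{C - F'}{L - F'}{G'}$; the first two cases produce a cut of size $\le 2f$ in $\underlying{G}$ (violating~(2)), and only the third case yields $n \le 3f$ and a $(V_1, V_2, V_3)$ violating~(3). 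Your padding recipe (``$V_1 = N_R$ plus elements of $L \setminus F$'') does not obviously give $V_1 \cup V_2 \cup V_3 = V$ outside the third case, and $N_R \subseteq V'$ may contain split copies that must be projected back to $V$ before padding.
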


The formal proof of the theorem is given in
\sectionRef{section proof hypergraph}.

\section{Application to New Models} \label{section new models}
As mentioned in \sectionRef{section introduction},
the local multicast model also encompasses
some other network models of practical interest
that, to the best of our knowledge,
have not been considered before in the literature.
Suppose the $n$ nodes are connected via a local multicast network
represented as a directed hypergraph $G_1$.
For example,
network connectivity in $G_1$ can be via point-to-point links
or via wireless channels modelled as local broadcast.
Additionally, the $n$ nodes are connected via another local multicast network
represented as a directed hypergraph $G_2$.
For example, $G_2$ may correspond to a wireless network
with different frequencies and/or technologies.
The complete system, where nodes can communicate on
channels in $G_1$ as well as on channels in $G_2$,
can also be characterized by the local multicast model.
We omit details for brevity,
but this corresponds to the natural union of $G_1$ and $G_2$,
with each node now having access to its multicast channels in $G_1$
as well as its multicast channels in $G_2$.
 
\section[Necessity of Condition \nameOfConditionHNC{}]
{Necessity of \ConditionNC{}} \label{section necessity}
Intuitively, consider a set $F \subseteq V$ of size at most $f$,
such that $G$ violates \conditionNC{} with parameter $F$.
With $F$ as a candidate faulty set,
the splitting of nodes in $F$ captures possible equivocation
by nodes in $F$:
a faulty node can behave as if it has input $0$ on some of its
hyperedges and behave as if it has input $1$ on the other
hyperedges.
Now consider the execution where non-faulty nodes in $L$ have input $0$.
Since $R \cup C \notadjacent{G'} L - F'$,
nodes in $L - F'$ can not distinguish between $F$ and
its neighbors in $R \cup C$, i.e.,
$\inneighborhood{R \cup C}{L - F'}{G'}$ as the set of faulty nodes.
So non-faulty nodes in $L$ are stuck with outputting $0$ in this case.
Similarly, if non-faulty nodes in $R$ have input $1$,
then they have no choice but to output $1$,
creating the desired contradiction.

A formal necessity proof is given in
\sectionRef{section proof necessity multiple}.
It follows the standard state machine based approach
\cite{Attiya:2004:DCF:983102,DOLEV198214,Fischer1986},
similar to \cite{khan2020directedOPODIS,LewisByzantineDirected}.
Suppose there exists a set $F \subseteq V$, of size at most $f$,
such that $G$ does not satisfy \conditionNC{} with parameter $F$,
but there exists an algorithm $\mathcal A$ that solves consensus on $G$.
Algorithm $\mathcal A$ outlines a procedure $\mathcal A_u$
for each node $u$ that describes $u$'s state transitions,
as well as messages transmitted on each channel of $u$ in each round.
Now there exists a hypergraph $G' \in \graphSplitFSet_F(G)$ and
a partition of $V'$ that does not satisfy the requirements
of \conditionNC{}.
To create the required contradiction,
we work with an algorithm for $G'$ instead of $\mathcal A$.
To see why this works,
observe that an algorithm $\mathcal A$ on hypergraph $G$
can be adapted to create
an algorithm $\mathcal A'$
for a hypergraph $G' \in \graphSplitFSet_F(G)$
as follows.
Consider a round $i$ in the algorithm $\mathcal A$.
We specify the steps for each node in $G'$ in round $i$
for the algorithm $\mathcal A'$.
Each node $v \in V' \cap V$
that was not split runs $\mathcal A_v$ as specified for round $i$.
For a node $v \in V' - V$ that was split into $v^0, v^1 \in V'$,
both $v^0$ and $v^1$ run $\mathcal A_v$ for round $i$
with the following modification.
Consider a hyperedge $e \in \incidentedges_G(v)$.
Let $e' \in \incidentedges_{G'}(v^0)$ (resp. $e' \in \incidentedges_{G'}(v^1)$)
be the corresponding hyperedge in $G'$.
If the algorithm $\mathcal A_v$ wants to transmit
a message on $e$,
then $v^0$ (resp. $v^1$)
sends the message on $e'$,
while $v^1$ (resp. $v^0$) ignores this message transmission.
Observe that, for any node $u \in \tail(e')$,
$u$ receives messages on the hyperedge from exactly one of $v^0$ and $v^1$.
Furthermore, by construction of $G'$,
each node $v \in V'$ receives all messages
needed to run the corresponding next steps in the algorithm $\mathcal A'$.

Now, $\mathcal{A'}$ might not solve consensus on $G'$,
or may not even terminate.
In the following lemma,
we show that as long as care is taken with regards to
which nodes are allowed to be faulty
in $G'$
and the input of the split nodes,
$\mathcal{A'}$ indeed solves consensus in $G'$.
So for necessity,
it is enough to show that no algorithm exists
for a hypergraph $G' \in \graphSplitFSet_F(G)$,
under the two identified conditions.
We use this in the formal necessity proof in
\sectionRef{section proof necessity multiple}.

\begin{lemma} \label{lemma hypergraphs algorithm G implies G'}
  For a directed hypergraph $G = (V, E)$,
  a set $F \subseteq V$ of size at most $f$,
  and a hypergraph $G' \in \graphSplitFSet_F(G)$,
  if there exists a Byzantine consensus algorithm $\mathcal{A}$
  on $G$ tolerating at most $f$ faulty nodes,
  then there exists an algorithm $\mathcal{A}'$
  on $G' = (V', E')$
  that solves the Byzantine consensus problem under the following conditions.
  \begin{enumerate}[label=\arabic*)]
    \item The faulty nodes in $G'$ correspond to at most $f$ nodes in $G$.
    \item For each node $v \in F - V'$ that was split into $v^0, v^1 \in V'$,
    either
    \begin{enumerate}[label=(\roman*)]
      \item 
      both ${v^0}$ and ${v^1}$ have the same input, or
      \item
      at least one of ${v^0}$ and ${v^1}$ is faulty.
    \end{enumerate}
  \end{enumerate}
\end{lemma}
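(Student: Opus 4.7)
My plan is to verify that the algorithm $\mathcal{A}'$ described in the paragraph preceding the lemma satisfies agreement, validity, and termination under conditions 1) and 2). Recall the construction: each unsplit $v \in V' \cap V$ runs $\mathcal{A}_v$ verbatim, while each pair of copies $v^0, v^1$ of a split node runs $\mathcal{A}_v$ with the understanding that on every hyperedge $e \in \incidentedges_G(v)$, transmissions are performed in $G'$ only by the copy of $v$ to which the corresponding hyperedge of $G'$ was assigned. In particular, each recipient in $G'$ hears from at most one copy on any given hyperedge, so the semantics of $\mathcal{A}'$ is well-defined.

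The approach is a simulation argument. Let $\phi : V' \to V$ be the identity on $V' \cap V$ and send each copy $v^0, v^1 \in V' - V$ to the underlying $v \in F$. Given any execution $\alpha'$ of $\mathcal{A}'$ on $G'$ whose faulty set $F^\ast \subseteq V'$ satisfies conditions 1) and 2), I would build a companion execution $\alpha$ of $\mathcal{A}$ on $G$ in which a node $w \in V$ is declared faulty exactly when $\phi^{-1}(w) \cap F^\ast \ne \emptyset$. Condition 1) guarantees that at most $f$ nodes are faulty in $\alpha$. For each non-faulty $w$ in $\alpha$, set its input to the common input of $\phi^{-1}(w) \setminus F^\ast$, which is unambiguous because condition 2) forces two non-faulty copies of a split node to share their input; inputs of faulty nodes in $\alpha$ are arbitrary.

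The crux is an induction on rounds establishing that, at the end of round $i$, every non-faulty $w \in V$ in $\alpha$ is in the same local state as every element of $\phi^{-1}(w) \setminus F^\ast$ at the end of round $i$ in $\alpha'$. For the inductive step, I would have each faulty $u \in V$ transmit, on each hyperedge $e \in \incidentedges_G(u)$, exactly whatever is sent in $\alpha'$ on the corresponding hyperedge $e'$ of $G'$ (by the copy of $u$ assigned $e'$ if $u$ was split, or by $u$ itself otherwise, possibly nothing if the sender stayed silent). Non-faulty $w$'s behavior in $\alpha$ is then forced by $\mathcal{A}_w$. Because the node-split construction places both copies of a split tail into the tail of the corresponding hyperedge of $G'$, the non-faulty copies of $w$ in $\alpha'$ receive in round $i$ the same multiset of messages as $w$ receives in round $i$ of $\alpha$, closing the induction.

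The main obstacle will be checking consistency when a split node $v$ has both copies non-faulty: one must be sure that $v^0$ and $v^1$ send identical messages on their respective assigned hyperedges, so that collapsing them into a single non-faulty $v$ in $\alpha$ is legitimate. This reduces to the inductive hypothesis, since equality of received messages combined with equality of inputs (condition 2)) yields equality of states and hence of outgoing messages. Once the simulation is in place, agreement, termination, and validity for $\mathcal{A}'$ on $\alpha'$ transfer from $\mathcal{A}$ on $\alpha$: non-faulty copies of $G'$ inherit the common output of $\alpha$, and the non-faulty input witnessing validity in $\alpha$ is, by our input assignment, the input of a non-faulty copy in $\alpha'$.
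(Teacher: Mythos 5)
Your proposal is correct and follows essentially the same route as the paper: a simulation argument in which an execution of $\mathcal{A}'$ on $G'$ is collapsed (via the copy-to-original map) into an execution of $\mathcal{A}$ on $G$, with faulty copies mapped to faulty originals and the round-by-round message correspondence guaranteed by the split construction. The only cosmetic difference is that the paper dispenses with the mixed case (one copy faulty, one not) by assuming WLOG that both copies of a split node are faulty, whereas you handle it directly by having the merged faulty node replay each copy's transmissions on its assigned hyperedges; both resolutions are fine.
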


\begin{proof}
  Suppose there exists an arbitrary directed hypergraph $G = (V, E)$
  such that there is a consensus algorithm $\mathcal{A}$ for $G$
  tolerating $\le f$ Byzantine faults.
  Consider any set $F \subseteq V$ of size at most $f$
  and a hypergraph $G' \in \graphSplitFSet_F(G)$.
  Construct an algorithm $\mathcal{A'}$ from $\mathcal{A}$
  as described in the text preceding the lemma.
  We show that $\mathcal{A'}$ solves the Byzantine consensus problem under the
  conditions in the lemma statement.

  Consider an execution $\mathcal{E}'$ of $\mathcal{A'}$ on $G'$
  under the two conditions in the lemma statement.
  Without loss of generality,
  we assume that for every node $v \in F - V'$
  that was split into $v^0, v^1 \in V'$,
  either both $v^0, v^1$ are non-faulty in $\mathcal{E}'$
  or both are faulty.
  Observe that the faulty nodes in $\mathcal{E}'$ still
  correspond to at most $f$ nodes in $G$.
  Now for each node $v \in F - V'$ that was split into $v^0, v^1 \in V'$,
  either
  \begin{enumerate}[label=\roman*)]
    \item 
    both ${v^0}$ and ${v^1}$ have the same input, or
    \item
    both ${v^0}$ and ${v^1}$ are faulty.
  \end{enumerate}
  It follows,\footnote{
    Recall from the split operation (\sectionRef{section notation})
    that for every node $v \in V'$ in the hypergraph $G'$,
    each hyperedge in $\incidentedges_{G'}(v)$ corresponds to
    a single hyperedge in $G$.
  }
  by construction of $\mathcal{A'}$,
  that the behavior of each node $v' \in V'$
  on a hyperedge $e' \in \incidentedges_{G'}(v')$
  in any round of $\mathcal{E}'$
  is modelled by
  the behavior of the corresponding node $v \in V$
  on the corresponding hyperedge $e \in \incidentedges_{G}(v)$
  in the corresponding round of $\mathcal{E}$.
  
  Since $\mathcal{A}$ solves consensus on $G$ while
  tolerating $\le f$ faulty nodes,
  so all non-faulty nodes in $\mathcal{E}$
  terminate in finite time, agreeing on an input of some non-faulty node.
  Recall that the behavior of each node $v' \in V'$ in execution $\mathcal{E}'$
  is modelled
  by the behavior of the corresponding node $v \in V$ in execution $\mathcal E$.
  Therfore,
  as required,
  all non-faulty nodes in $\mathcal{E}'$
  also terminate in finite time, agreeing on an input of some non-faulty node.
\end{proof} 
\section{Algorithm for Directed Hypergraphs} \label{section sufficiency}
To prove the sufficiency portion of \theoremRef{theorem hypergraphs main},
we work with a different network condition,
which we will show to be equivalent to \conditionNC{}.
We first introduce some notation that is used in the algorithm.
For a set of nodes $U \subseteq V$,
we use $G[U] = (V_U, E_U)$
to denote the sub-hypergraph induced by the nodes in $U$,
i.e.,
\begin{align*}
  V_U &:= U, \\
  E_U &:= \set{
    (u, S) \mid \exists e \in E \suchthat
      \head(e) = u \in U, \,
      \tail(e) \cap U = S, \,
      S \ne \emptyset
  }.
\end{align*}
We use the shorthand $G - U$ to denote the sub-hypergraph $G[V - U]$.
For an additional set of hyperedges $D \subseteq E$,
we use $G[U, D] = (V_{U,D}, E_{U,D})$ to denote the sub-hypergraph induced by the nodes in $U$
and the hyperedges in $D$, i.e.,
\begin{align*}
  V_{U,D} &:= U \cup \set{u \mid \exists e \in D \suchthat \text{
    $u = \head(e)$ or $u \in \tail(e)$
  }}, \\
  E_{U,D} &:= D \cup E_U.
\end{align*}
Observe that if $u \in V_{U,D} - U$,
then $\incidentedges_{G[U, D]}(u) \subseteq D$.

\paragraph{Paths in Hypergraph \boldmath$G$:}
We use the following notations for paths.
A \emph{path} in a hypergraph $G$
is an alternating sequence of distinct nodes and hyperedges,
starting and ending at two distinct nodes,
such that
\begin{itemize}
  \item if node $u$ immediately precedes a hyperedge $e$ in the sequence,
  then $u = \head(e)$ is the head of $e$, and
  \item if a hyperedge $e$ immediately precedes a node $u$ in the sequence,
  then $u \in \tail(e)$ is a tail of $e$.
\end{itemize}
For a path $P = u_1, e_1, u_2, e_2, \dots, e_{k-1}, u_k$,
we say that $P$ \emph{passes through} $u_1, \dots, u_k$.

\begin{observation} \label{observation hypergraphs paths}
  For a hypergraph $G$ with the underlying graph $\underlying{G}$,
  there exists a path in $G$
  that passes through some nodes $u_1, \dots, u_k$
  if and only if
  there exists a path in $\underlying{G}$
  that passes through $u_1, \dots, u_k$.
  A path in $G$ corresponds to a unique path in $\underlying{G}$
  that passes through the same nodes.
  But a path in $\underlying{G}$
  can possibly correspond to multiple paths in $G$
  that pass through the same nodes.
\end{observation}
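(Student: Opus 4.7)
The plan is to unpack the definition of the underlying simple graph $\underlying{G}$ and show that each direction of the equivalence is essentially immediate, then separately verify the uniqueness statement and its converse non-uniqueness statement. Recall from \sectionRef{section notation} that $(u,v) \in \underlying{E}$ precisely when there exists a hyperedge $e \in E$ with $u = \head(e)$ and $v \in \tail(e)$. This is the only fact I will need.

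For the forward direction, suppose $P = u_1, e_1, u_2, e_2, \dots, e_{k-1}, u_k$ is a path in $G$. By the definition of a path in a hypergraph, for each $i$ we have $u_i = \head(e_i)$ and $u_{i+1} \in \tail(e_i)$, so $(u_i, u_{i+1}) \in \underlying{E}$. Since the $u_i$ are distinct, the sequence $u_1, (u_1,u_2), u_2, \dots, (u_{k-1}, u_k), u_k$ is a valid path in $\underlying{G}$ passing through $u_1, \dots, u_k$. Moreover, this mapping is deterministic: each hyperedge $e_i$ determines a unique pair $(\head(e_i), u_{i+1})$, so the induced path in $\underlying{G}$ is unique.

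For the reverse direction, suppose there is a path $u_1, (u_1, u_2), \dots, (u_{k-1}, u_k), u_k$ in $\underlying{G}$. For each edge $(u_i, u_{i+1}) \in \underlying{E}$, the definition of $\underlying{E}$ guarantees at least one hyperedge $e_i \in E$ with $u_i = \head(e_i)$ and $u_{i+1} \in \tail(e_i)$; choose any such $e_i$. The nodes $u_1, \dots, u_k$ are distinct by assumption, and the hyperedges $e_1, \dots, e_{k-1}$ are pairwise distinct because their head nodes $u_1, \dots, u_{k-1}$ are pairwise distinct. Hence $u_1, e_1, u_2, \dots, e_{k-1}, u_k$ is a valid path in $G$ passing through $u_1, \dots, u_k$. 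Non-uniqueness is witnessed by the fact that an edge $(u_i, u_{i+1}) \in \underlying{E}$ may arise from several distinct hyperedges in $G$ (the head $u_i$ may appear in multiple hyperedges whose tails contain $u_{i+1}$); any choice among these yields a distinct lifted path in $G$.

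There is no real obstacle here; the only step that requires a moment of care is verifying that the hyperedges chosen in the reverse direction are distinct, which is immediate from distinctness of the head nodes along the path. I would keep the proof to a single short paragraph in the paper, since the observation is purely a definitional unpacking.
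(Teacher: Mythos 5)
Your proof is correct and is exactly the definitional unpacking the paper intends: the paper states this as an unproved observation, and your argument (projecting a hyperpath to the node sequence for one direction, lifting each edge of $\underlying{G}$ to some witnessing hyperedge for the other, with hyperedge distinctness following from distinctness of the head nodes) is the standard argument it implicitly relies on. No gaps; nothing further is needed.
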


\begin{itemize}
  \item {\emph{$uv$-paths:}}
  For two nodes $u,v\in V$,
  a $uv$-path $P_{uv}$ is a path from $u$ to $v$.
  $u$ is called the \emph{source} and $v$ the \emph{terminal} of $P_{uv}$.
  Any other node in $P_{uv}$ is called an \emph{internal} node of $P_{uv}$.
  Two $uv$-paths are \emph{node-disjoint}
  if they do not share a common internal node.
  
  \item {\emph{$Uv$-paths:}}
  For a set $U \subset V$ and a node $v \not \in U$,
  a $Uv$-path is a $uv$-path for some node $u \in U$.
  All $Uv$-paths have $v$ as terminal.
  Two $Uv$-paths are \emph{node-disjoint}
  if they do not have any nodes in common except the terminal node $v$.
  In particular, two node-disjoint $Uv$-paths have different source nodes.
  By definition,
  the number of disjoint $Uv$-paths is upper bounded by the size of the set $U$.
  Note the difference in definition between
  node-disjoint $uv$-paths and
  node-disjoint $Uv$-paths when $U = \set{u}$ is a singleton set.
  The former requires only internal nodes to be different,
  while the latter needs to have different source nodes as well. 
  For the former, there can be more than one such node-disjoint path,
  while for the latter, there is at most one.
  
  \item {\emph{Propagate:}}
  For two node sets $A, B \subseteq V$, we use $A \propagate{G} B$
  (read as $A$ ``propagates'' to $B$ in $G$)
  to denote that either
  \begin{enumerate}[label=(\roman*)]
    \item 
    $B = \emptyset$, or

    \item
    for every $v \in B$,
    there exist at least $f+1$ node-disjoint $Av$-paths
    in the hypergraph $G$.
  \end{enumerate}
  Note that the subscript is important.
  For example,
  if $B \ne \emptyset$, then
  for a set $X \subseteq V$ that is disjoint from both $A$ and $B$,
  $A \propagate{G - X} B$
  requires that for every $v \in B$,
  there exist at least $f+1$ node-disjoint $Av$-paths
  in $G$
  \emph{that do not contain any nodes from $X$}.
\end{itemize}

\paragraph{Directed Decomposition in Hypergraphs:}
A directed hypergraph $G$ is \emph{strongly connected}
if, for every pair of nodes $u, v$,
there is a $uv$-path as well as a $vu-$path in $G$.
A strongly connected sub-hypergraph of $G$ is called
a \emph{component} of $G$.
A \emph{directed decomposition} of $G$
partitions $G$ into $H_1, \dots, H_k$, with $k > 0$,
such that each $H_i$ is a maximal component of $G$.
A maximal component $H_i$ that has no in-neighbors,
i.e., $\inneighborhood{V - H_i}{H_i}{G} = \emptyset$,
is called a \emph{source component} of the decomposition.
In any directed decomposition of a hypergraph $G$,
there always exists at least one source component.

We now give a different network condition
which is equivalent to \conditionNC{},
but will be useful for specifying an algorithm
for the local multicast model and proving its correctness.
Recall that we use $F'$ to denote the set of nodes in $G'$ corresponding
to nodes in $F$ in $G$.

\refstepcounter{condition} \label{condition hypergraphs SC}
\begin{definition}[Condition \nameOfConditionHSC{}]
  \label{definition hypergraphs SC}
  A hypergraph $G$ satisfies
  \emph{condition \nameOfConditionHSC{} with parameter $F$}
  if for every $G' \in \graphSplitFSet_F(G)$
  and every partition $(A, B)$ of $V'$,
  we have that either
  \begin{enumerate}[label=\arabic*)]
    \item $A \propagate{G' - B \cap F'} B - F'$, or
    \item $B \propagate{G' - A \cap F'} A - F'$.
  \end{enumerate}
  We say that $G$ satisfies \emph{condition \nameOfConditionHSC{}},
  if $G$ satisfies condition \nameOfConditionHSC{} with parameter $F$
  for every set $F \subseteq V$ of cardinality at most $f$.
\end{definition}

The following theorem states that \conditionNC{}
is equivalent to \conditionSC{}.
It was shown for simple graphs in \cite{khan2020directedOPODIS},
but based on \obsRef{observation hypergraphs paths},
can be extended to directed hypergraphs as well.

\begin{theorem}[\cite{khan2020directedOPODIS}]
  \label{theorem hypergraphs menger}
  A hypergraph $G$ satisfies \conditionNC{}
  if and only if $G$
  satisfies \conditionSC{}.
\end{theorem}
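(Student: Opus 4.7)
The plan is to port the proof from \cite{khan2020directedOPODIS} for directed simple graphs to directed hypergraphs, using \obsRef{observation hypergraphs paths} as the bridge. The two conditions are Menger-dual to one another: \conditionNC{} bounds in-neighbor sets (a cut condition), while \conditionSC{} demands node-disjoint paths (a flow condition). Menger's theorem identifies a minimum vertex separator with the maximum number of node-disjoint paths, so the equivalence amounts to verifying that each ingredient of the simple-graph argument has a hypergraph analogue.

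The first step is to check these analogues. By construction, $\inneighborhood{A}{B}{G}$ is determined entirely by the edges of the underlying simple graph $\underlying{G}$. By \obsRef{observation hypergraphs paths}, a family of node-disjoint $Av$-paths in a hypergraph $G$ projects to a node-disjoint family in $\underlying{G}$ passing through the same internal nodes, and each simple-graph family lifts to at least one hypergraph family. Consequently, the maximum number of node-disjoint $Av$-paths, and correspondingly the minimum $Av$-separator size, coincide in $G$ and $\underlying{G}$. Moreover, the split operation $\graphSplitFSet_F$ and node-deletion both commute with passing to the underlying simple graph in the sense needed below.

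For the reverse direction (\conditionSC{} implies \conditionNC{}) I would argue by contrapositive. Suppose $(L, C, R)$ is a partition of $V'$ in some $G' \in \graphSplitFSet_F(G)$ with both adjacency conditions failing, so $\inneighborhood{L \cup C}{R - F'}{G'}$ and $\inneighborhood{R \cup C}{L - F'}{G'}$ each have size at most $f$. Consider the bipartition $(A, B) = (L \cup C, R)$ of $V'$. In $G' - B \cap F'$, every $Av$-path for $v \in R - F'$ must exit $A$ to reach $v$, and the last node in $A$ on such a path lies in $\inneighborhood{A}{R - F'}{G'}$; this set is a separator of size at most $f$, so by Menger there are at most $f$ node-disjoint $Av$-paths, whence $A \propagate{G' - B \cap F'} B - F'$ fails. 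The symmetric argument applied to $(R \cup C, L)$ yields failure of the other propagation, so $(A, B)$ witnesses the failure of \conditionSC{}.

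The forward direction (\conditionNC{} implies \conditionSC{}) is the technical heart, and I expect it to be the main obstacle. Given a bipartition $(A, B)$ witnessing failure of \conditionSC{}, Menger supplies an $Av$-separator $X_v \subseteq V' - (B \cap F')$ and a $Bu$-separator $X_u \subseteq V' - (A \cap F')$, each of size at most $f$, but they live in different restricted subgraphs and can overlap $A$, $B$, and each other in intricate ways. The job is to coherently merge them into a tripartition $(L, C, R)$ of $V'$ (roughly, $L$ is the set reachable from $A$ avoiding $X_v$, $R$ is the set that reaches $v$, and $C$ absorbs the remainder together with the cut vertices) and then to verify both adjacency conditions fail simultaneously. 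In the hypergraph setting there is the additional concern of keeping hyperedge assignments coherent if further splitting is invoked; the saving grace is that by \obsRef{observation hypergraphs paths} node-disjointness is a property of the underlying simple graph alone, so every hyperedge assignment is compatible with the Menger witnesses, and the simple-graph construction of \cite{khan2020directedOPODIS} transfers directly once recast in hyperedge language.
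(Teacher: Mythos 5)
Your proposal matches the paper's treatment of this theorem: the paper gives no self-contained proof, instead citing the simple-graph equivalence of \cite{khan2020directedOPODIS} and noting—exactly as you do—that \obsRef{observation hypergraphs paths}, together with the fact that in-neighborhoods and node-disjointness are determined by the underlying simple graph $\underlying{G'}$, lets that argument transfer to each split hypergraph $G' \in \graphSplitFSet_F(G)$ (the quantification over $\graphSplitFSet_F(G)$ being common to both conditions). Your explicit separator-counting argument for \refSC{} $\Rightarrow$ \refNC{} is correct, and deferring the converse to the construction of \cite{khan2020directedOPODIS} is precisely what the paper itself does.
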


We show the sufficiency of
\conditionSC{} (and hence \conditionNC{}) constructively.
For the rest of this section,
we assume that $G$ satisfies \conditionSC{}.
We defer all proofs to \sectionRef{section proof sufficiency multiple}.
The proposed algorithm is given in \algoRef{algorithm hypergraphs}.
It draws inspiration from algorithms in
\cite{khan2019undirectedPODC, khan2020directedOPODIS, LewisByzantineDirected}.
Each node $v$ maintains a binary state variable $\gamma_v$,
which we call $v$'s $\gamma$ value.
Each node $v$ initializes $\gamma_v$ to be its input value.

\LinesNotNumbered
\begin{algorithm}[p]
  \linespread{1.25}\selectfont
  \DontPrintSemicolon
  \caption{Proposed algorithm for Byzantine consensus
  under the local multicast model:
  Steps performed by node $v$ are shown here.}
  \label{algorithm hypergraphs}
  \SetAlgoVlined
  \SetKwFor{For}{For}{do}{end}
{\em Initialization:} $\gamma_v :=$ input value of node $v$

  \For{each $F \subseteq V$ such that $\abs{F} \le f$}{
    \medskip
    \underline{\Step{a}:}
    Perform directed decomposition of $G - F$.
    Let $S$ be the unique source component of the decomposition
    (\lemmaRef{lemma hypergraphs source unique}).

    \medskip
    \underline{\Step{b}:}
    \lIf{$v \in S \cup \inneighborhood{F}{S}{G}$}{flood value $\gamma_v$.}

    \medskip
    \underline{\Step{c}:}
    \If{$v \in S$}{
      \medskip
      Create a hypergraph $G'_v$ by splitting all nodes in $F$ as follows.
      Set
\[
        F'
          := \set{ u^0 \mid u \in F } \cup \set{ u^1 \mid u \in F }
        \quad \text{and} \quad
        V(G'_v)
          := (V - F) \cup F'.
\]
The edges of $G'_v$ are as determined by the split operation,
      with the following choices:
      for each node $u \in F$ and hyperedge $e \in \incidentedges_G(u)$,
      identify a single $uv$-path $P_{uv}$ (if it exists)
      in $G[S, \set{e}]$.
      If $P_{uv}$ exists
      and $v$ received value $0$ from $u$ along $P_{uv}$ in \step{b},
      then assign $e$ to $u^0$.
      Else, assign $e$ to $u^1$.

      For each node $u \in S$,
      identify a single $uv$-path $P_{uv}$ in $G - F$
      (\lemmaRef{lemma algorithm nodes connected multiple}).
      Note that path $P_{vv}$ trivially exists
      ($P_{vv}$ contains only $v$).
      Initialize $Z_v$ and $N_v$ as follows,
\begin{align*}
        Z_v
          &:= \set[1]{u^0 \mid u \in \inneighborhood{F}{S}{G}} \cup
              \set{ u \in S \mid
                \text{$v$ received $0$
                  along $P_{uv}$ in \step{b}}},\\
        N_v &:= \set[1]{u^1 \mid u \in \inneighborhood{F}{S}{G}} \cup (S - Z_v).
      \end{align*}
      \vspace{-2em}
    }

    \medskip
    \underline{\Step{d}:}\\
    \lIf{$Z_v \propagate{G'_v - (N_v \cap F')} N_v - F'$}
    {set $A_v := Z_v$ and $B_v := N_v$}
    \lElse
    {set $A_v := N_v$ and $B_v := Z_v$}
    \medskip

    \If{$v \in B_v - F'$ and, in \step{b},
      $v$ received a value $b \in \set{0,1}$ identically along
      any $f+1$ node-disjoint $A_v v$-paths in
      $G'_v - (B_v \cap F')$}
    {\medskip$\gamma_v := b$.}
    
    \medskip
    \underline{\Step{e}:}
    \lIf{$v \in S$}{flood value $\gamma_v$.}

    \medskip
    \underline{\Step{f}:}
    \If{$v \in V - S - F$ and, in \step{e},
    $v$ received a value $b \in \set{0,1}$ identically along
    any $f+1$ node-disjoint $S v$-paths in $G - F$}
    {\medskip$\gamma_v := b$.}
  }

  \bigskip
  Output $\gamma_v$
\end{algorithm}

The nodes use ``flooding'' to communicate with the rest of the nodes.
We refer the reader to \cite{khan2019undirectedPODC, khan2020directedOPODIS}
for details about the flooding primitive.
Briefly, when a node $u$ wants to flood a binary value $b \in \set{0, 1}$,
it transmits $b$ to all of its neighbors,
who forward it to their neighbors,
and so forth.
If a node $u$ receives a message on a hyperedge $e$,
then $u$ appends the channel id of $e$
when fowarding the message to its neighbors.
This way a node $v$, on receiving a message,
can trace the path that the message has travelled to reach $v$.
By adding some simple sanity checks,
one can assume that even a faulty node $v$
does indeed transmit some value, when it is $v$'s turn to forward a message.
In at most $n$ synchronous rounds, the value $b$ will be ``flooded'' in $G$.
However, faulty nodes may tamper messages when forwarding,
so some nodes may receive a value $\bar{b} \ne b$
along paths that contain faulty nodes.

The algorithm proceeds in phases.
Every iteration of the main \texttt{for} loop (starting at line 2)
is a phase numbered $1, \dots, 2^f$.
Let $F^*$ denote the actual set of faulty nodes.
Each iteration of the for loop, i.e. phase $> 0$,
considers a candidate faulty set $F$.
In this iteration,
nodes attempt to reach consensus,
by updating their $\gamma$ state variables,
assuming the candidate set $F$ is indeed faulty.
Each iteration has six steps.

\begin{itemize}
  \item
  In \step{a}, each node $v$ performs a directed decomposition
  of $G - F$.
  This decomposition must have a unique source component:
  \begin{lemma}[Similar to Lemma 6 in \cite{khan2020directedOPODIS}]
    \label{lemma hypergraphs source unique}
    If a hypergraph $G$ satisfies
    \conditionSC{},
    then for any set $F$ of size $\le f$,
    the directed decomposition of $G - F$ has a unique source component.
  \end{lemma}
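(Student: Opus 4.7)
The plan is to argue by contradiction using \conditionSC{}, which by \theoremRef{theorem hypergraphs menger} is equivalent to the hypothesis \conditionSC{} in the lemma. Suppose the directed decomposition of $G - F$ contains two distinct source components $S_1$ and $S_2$. I would apply \conditionSC{} to the trivial element $G' = G \in \graphSplitFSet_F(G)$ obtained by splitting no node of $F$, so that $V' = V$ and $F' = F$, together with the partition $(A, B) = (S_1, V - S_1)$. Since source components of $G - F$ are automatically disjoint from $F$, one has $A \cap F' = \emptyset$ and $B \cap F' = F$, so the two alternatives offered by \conditionSC{} read $S_1 \propagate{G - F} (V - S_1) - F$ and $(V - S_1) \propagate{G} S_1$.

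Assume the first alternative. Pick any $v \in S_2 \subseteq (V - S_1) - F$; the propagation guarantees at least one $S_1 v$-path $P$ in $G - F$. Since $P$ starts in $S_1$ and ends in $S_2$, at some step along $P$ there are consecutive path-nodes $u_{i-1} \notin S_2$ and $u_i \in S_2$ connected by a hyperedge $e_i$ with $\head(e_i) = u_{i-1}$ and $u_i \in \tail(e_i)$. Because $P$ lies entirely in $G - F$, $u_{i-1} \in V - F - S_2$, which makes $u_{i-1}$ an in-neighbor of $u_i \in S_2$ in $G - F$, contradicting $S_2$ being a source component of $G - F$.

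Assume instead the second alternative. Fix any $v \in S_1$ and take the $f+1$ node-disjoint $(V - S_1)v$-paths in $G$ it provides. Since the paths share only the terminal $v$ and have $f+1$ distinct source nodes with otherwise distinct internal nodes, each node of $F$ lies on at most one of them; with $\abs{F} \le f$, at least one such path $P$ avoids $F$ entirely and therefore lies in $G - F$. Its source is in $(V - S_1) - F$ while its terminal is $v \in S_1$, so the same boundary-crossing argument produces a hyperedge on $P$ whose head lies in $V - F - S_1$ and a tail of which lies in $S_1$, contradicting $S_1$ being a source component of $G - F$.

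The only real subtlety is the set-algebra bookkeeping at the start (verifying $A \cap F' = \emptyset$, $B \cap F' = F$, $A - F' = S_1$, and $B - F' = (V - S_1) - F$); once these are fixed, the argument is the natural hypergraph extension of Lemma 6 in \cite{khan2020directedOPODIS}, because by \obsRef{observation hypergraphs paths} a path in $G$ passes through the same nodes as its image in $\underlying{G}$, and the only hypergraph-specific ingredient is that the boundary-crossing edge is now a hyperedge with the head-outside / tail-inside structure used above.
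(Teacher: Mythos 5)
Your proof is correct, but it takes a genuinely different route from the paper's. The paper argues the contrapositive via \conditionNC{}: assuming two source components $S_1, S_2$ of $G-F$, it builds the partition $L = S_1$, $R = S_2 \cup F$, $C = V - S_1 - S_2 - F$ of $V$ and shows both $\abs{\inneighborhood{L \cup C}{R-F}{G}} \le f$ and $\abs{\inneighborhood{R \cup C}{L-F}{G}} \le f$ (using that a source component's only in-neighbors in $G$ lie in $F$), which violates \conditionNC{} for $G \in \graphSplitFSet_F(G)$; since the lemma's hypothesis is \conditionSC{}, this implicitly leans on the equivalence \theoremRef{theorem hypergraphs menger}. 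You instead apply \conditionSC{} directly to the two-part partition $(S_1, V-S_1)$ of $V' = V$ and refute each propagation alternative with a boundary-crossing path argument: in the first alternative an $S_1 v$-path in $G-F$ into $v \in S_2$ yields an in-neighbor of $S_2$ outside $F$, and in the second alternative the $f+1$ node-disjoint $(V-S_1)v$-paths, of which at most $f$ can meet $F$, yield an $F$-avoiding path whose crossing hyperedge gives an in-neighbor of $S_1$ outside $F$ --- contradicting source-ness in both cases. Your set-algebra bookkeeping ($A \cap F' = \emptyset$, $B \cap F' = F$, $B - F' \supseteq S_2 \ne \emptyset$) and the disjoint-path counting are sound. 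What each approach buys: yours avoids invoking the equivalence theorem and works purely with the propagation condition the algorithm actually uses, at the cost of slightly more path bookkeeping; the paper's is shorter and uses only in-neighbor counting, but only after the equivalence of \refNC{} and \refSC{} is in hand (which the paper assumes throughout that section anyway). One cosmetic slip: your opening sentence says \conditionSC{} is ``equivalent to the hypothesis \conditionSC{}''; presumably you meant the equivalence with \conditionNC{}, but in fact your argument never needs that equivalence, so the remark can simply be dropped.
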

  We remind the reader that all proofs in this section are
  deferred to \sectionRef{section proof sufficiency multiple}.
  Each node $v$ identifies this unique source component $S$ of $G - F$.
  In steps \texttt{(b)}-\texttt{(d)},
  nodes in $S$ will attempt to reach consensus on a single value,
  and then propagate that to the remaining nodes
  in steps \texttt{(e)} and \texttt{(f)}.

  \item
  In \step{b}, each node $v \in S \cup \inneighborhood{F}{S}{G}$
  floods its $\gamma_v$ value.
  Nodes in $S$ may not be able to reach consensus by themselves,
  but they can pull in the nodes in
  $\inneighborhood{F}{S}{G}$ to help:
  \begin{lemma}[Similar to Lemma 7 in \cite{khan2020directedOPODIS}]
    \label{lemma hypergraphs source SC}
    For a hypergraph $G = (V, E)$,
    that satisfies \conditionSC{},
    and a set $F \subseteq V$ of size $\le f$,
    let $S$ be the unique source component in
    the directed decomposition of $G - F$.
    Then $G[S \cup \inneighborhood{F}{S}{G}]$ satisfies
    \conditionSC{} with parameter $\inneighborhood{F}{S}{G}$.
  \end{lemma}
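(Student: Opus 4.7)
The plan is to work with the equivalent condition \refNC{}, via Theorem~\ref{theorem hypergraphs menger}, and reduce the requirement on $H := G[S \cup F_H]$, where $F_H := \inneighborhood{F}{S}{G}$, back to the assumed condition on $G$. Since $F_H \subseteq F$ has size $\le f$, it suffices to prove that $H$ satisfies \refNC{} with parameter $F_H$: for every $H' \in \graphSplitFSet_{F_H}(H)$ obtained by splitting some $X \subseteq F_H$ and every partition $(L, C, R)$ of $V(H')$, one of $L \cup C \adjacent{H'} R - F_H'$ or $R \cup C \adjacent{H'} L - F_H'$ must hold.

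Given such an $H'$ and such a partition, I would lift the construction up to $G$ by splitting the same set $X \subseteq F$ to obtain some $G' \in \graphSplitFSet_F(G)$. For each $u \in X$ and each hyperedge $e \in \incidentedges_G(u)$ that has some tail in $V(H)$, assign $e$ in $G'$ to the same copy of $u$ ($u^0$ or $u^1$) to which the corresponding hyperedge of $H$ is assigned in $H'$; remaining hyperedges of $u$, whose tails lie entirely outside $V(H)$, are assigned arbitrarily. A direct check on the split operation shows that $G'[V(H')] = H'$. Then extend the partition to $V(G')$ by setting $L^* := L$, $R^* := R$, and $C^* := C \cup (V(G') - V(H'))$. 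By \refNC{} applied to $G$ with parameter $F$, at least one of $L^* \cup C^* \adjacent{G'} R^* - F'$ or $R^* \cup C^* \adjacent{G'} L^* - F'$ holds in $G'$.

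The key observation that closes the reduction is that $S$ being the unique source component of $G - F$ means no node in $V - S - F$ has an edge into $S$ in $G$. Hence every in-neighbor (in $G$) of any $v \in S$ lies in $S \cup F_H$, and correspondingly every in-neighbor (in $G'$) of such $v$ lies in $V(H') = S \cup F_H'$. Because $R^* - F' = R - F_H' \subseteq S$ and $L^* - F' = L - F_H' \subseteq S$, this in-neighborhood localization forces
\[
  \inneighborhood{L^* \cup C^*}{R - F_H'}{G'} = \inneighborhood{L \cup C}{R - F_H'}{H'},
\]
and symmetrically for the other side. The adjacency that holds in $G'$ therefore transfers verbatim to $H'$, completing the proof via Theorem~\ref{theorem hypergraphs menger}.

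The main delicate step is the bookkeeping needed to guarantee $G'[V(H')] = H'$ under a consistent rule for assigning hyperedges at split nodes. This is case-heavy because a hyperedge of $u \in F_H$ in $G$ may have tails outside $V(H)$, tails inside $V(H)$, or both, and in each case the assignment in $G'$ must be matched (or can be chosen freely) so that restricting back to $V(H')$ reproduces $H'$ exactly. Once this correspondence is in place, the in-neighborhood localization afforded by $S$ being a source component is the entire conceptual content of the argument, and the reduction is routine.
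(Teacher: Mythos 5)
Your proposal is correct and follows essentially the same route as the paper's proof: both pass to the equivalent condition \refNC{} via \theoremRef{theorem hypergraphs menger}, lift the split hypergraph and the partition from $G[S \cup \inneighborhood{F}{S}{G}]$ to a $G' \in \graphSplitFSet_F(G)$ with matching hyperedge assignments, absorb $V - S - \inneighborhood{F}{S}{G}$ into $C$, and use the source-component property to show these extra nodes contribute no in-neighbors of $L - F'$ or $R - F'$, so the adjacency transfers between $G'$ and the induced sub-hypergraph. The only difference is presentational: the paper argues the contrapositive (a violating partition in the subgraph yields one in $G$), while you argue the direct implication.
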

  
  \item
  In \step{c},
  each node $v \in S$
  splits all nodes in the candidate faulty set $F$
  to construct a hypergraph $G'_v \in \graphSplitFSet_F(G)$.
  For the assignment of hyperedges in $G'_v$,
  consider a node $u \in F$ and a hyperedge $e \in \incidentedges_G(u)$.
  $v$ assigns $e$ to $v^0$ if there exists a $uv$-path $P_{uv}$ such that
  \begin{enumerate}[label=\arabic*)]
    \item
    $e$ is the first hyperedge on $P_{uv}$,

    \item
    the rest of $P_{uv}$ is contained entirely in $G[S]$, and

    \item
    $v$ received value $0$ from $u$ along $P_{uv}$ in \step{b}.
  \end{enumerate}
  Otherwise, $v$ assigns $e$ to $v^1$.

  Next,
  $v$ partitions nodes in $G'_v$ that
  correspond to nodes in $S \cup \inneighborhood{F}{S}{G}$
  in the original hypergraph $G$,
  into sets $Z_v$ and $N_v$,
  as follows.
  If $u \in \inneighborhood{F}{S}{G}$,
  then $v$ places $u^0$ in $Z_v$ and $u^1$ in $N_v$.
  If $u \in S$,
  then $v$ identifies a single $uv$-path $P_{uv}$ in $G[S]$.
  Such a path always exists
  since $S$ is strongly connected by construction.
If $v$ received 0 along $P_{uv}$ in \step{b},
  then $v$ places $u$ in $Z_v$.
  Otherwise $v$ places $u$ in $N_v$.
  For the purpose of \step{c},
  node $v$ is deemed to have received its own $\gamma_v$
  value along path $P_{vv}$, containing only node $v$,
  in \step{b}.

  The hypergraph $G'_v$, and sets $Z_v$ and $N_v$,
  are created in a manner so that
  \begin{enumerate}[label=\arabic*)]
    \item
    when $F \ne F^*$, nodes in $S$ may disagree on these constructions,
    i.e., it is possible that in this iteration,
    for two non-faulty nodes $u, w \in S$,
    we have either
    \[
      G'_u \ne G'_w, \qquad Z'_u \ne Z'_w, \qquad \text{or} \qquad N'_u \ne N'_w,
    \]
but
    
    \item
    when $F = F^*$, all nodes in $S$ agree on these constructions,
i.e.,
    in this iteration,
    for any two non-faulty nodes $u, w \in S$,
    we have
    \[
      G'_u = G'_w, \qquad Z'_u = Z'_w, \qquad \text{and} \qquad N'_u = N'_w.
    \]
\end{enumerate}
  
  \item
  In \step{d}, based on the estimates created in \step{c},
  a node $v \in S$ may update its $\gamma_v$ value.
  The update rules ensure that

  \begin{enumerate}[label=\arabic*)]
    \item in each iteration,
    for each non-faulty node $v \in S$,
    its $\gamma_v$ value at the end of this
    equals the $\gamma$ state value of some non-faulty node
    at the beginning of the iteration
    (\lemmaRef{lemma hypergraphs validity}).

    \medskip
    \item when $F = F^*$,
    all non-faulty nodes in $S$ have identical $\gamma$ values
    at the end of this step
    (\lemmaRef{lemma hypergraphs agreement}).
  \end{enumerate}

  \item
  In the iteration where $F = F^*$,
  by the end of \step{d},
  nodes in $S$ have reached consensus by adopting a single value
  in each of their $\gamma$ states.
  In steps \texttt{(e)} and \texttt{(f)},
  nodes in $S$ propagate the consensus value
  to the rest of the nodes,
  using the following property:
  \begin{lemma}[Similar to Lemma 10 in \cite{khan2020directedOPODIS}]
    \label{lemma hypergraphs source propagates}
    For a hypergraph $G = (V, E)$,
    that satisfies \conditionSC{},
    and a set $F \subseteq V$ of size $\le f$,
    let $S$ be the unique source component in
    the directed decomposition of $G - F$.
    Then $S \propagate{G - F} V - S - F$.
  \end{lemma}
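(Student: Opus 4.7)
The plan is to derive the conclusion directly from \conditionSC{} applied to $G$ with parameter $F$. First, handle the trivial case $V - S - F = \emptyset$, in which $S \propagate{G - F} V - S - F$ holds by definition of $\propagate{}$; so assume $V - S - F \ne \emptyset$ for the remainder.

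Next, apply \conditionSC{} using the trivial split $G' = G$ (so $V' = V$ and $F' = F$) together with the partition $(A, B) = (S,\, V - S)$ of $V$. Since $S$ is a source component of $G - F$, we have $S \cap F = \emptyset$, and hence $(V - S) \cap F = F$ and $S - F = S$. \ConditionSC{} then yields one of the following:
\begin{enumerate}
  \item $S \propagate{G - F} V - S - F$, which is exactly what is to be shown; or
  \item $V - S \propagate{G} S$.
\end{enumerate}

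The core of the proof is ruling out the second case. Pick an arbitrary node $v \in S$. Under case~2, there exist $f+1$ node-disjoint $(V - S)v$-paths in $G$. Along each such path $P$, let $u_P$ denote the last node of $P$ that lies in $V - S$; this is well defined since $P$ starts in $V - S$ and ends at $v \in S$. By the choice of $u_P$, there is a hyperedge $e$ on $P$ with $\head(e) = u_P$ and some tail in $S$. If $u_P \notin F$, then $e$ restricted to $V - F$ still has head $u_P$ and a tail in $S$ (since $S \cap F = \emptyset$), making $u_P \in V - S - F$ an in-neighbor of $S$ in $G - F$, which contradicts the hypothesis that $S$ is a source component of $G - F$. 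Hence $u_P \in F$.

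Because the $f+1$ paths are node-disjoint except at the terminal $v \in S$, and each $u_P$ differs from $v$ (as $F \cap S = \emptyset$), the nodes $\{u_P\}$ across the $f+1$ paths are pairwise distinct and all lie in $F$, forcing $|F| \ge f+1$ and contradicting $|F| \le f$. Thus case~2 is impossible and case~1 yields the lemma. The main obstacle is spotting the ``last exit'' node $u_P$ and exploiting the $(V-S)v$-path notion of node-disjointness (which permits sharing only at the terminal $v$) to extract $f+1$ distinct elements of $F$; once this is in hand, the source-component hypothesis on $S$ does the rest of the work.
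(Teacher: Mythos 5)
Your proof is correct and follows essentially the same route as the paper: apply \conditionSC{} to the trivial split $G' = G$ with the partition $(A,B) = (S, V-S)$, and rule out the alternative $V - S \propagate{G} S$ by observing that any family of node-disjoint $(V-S)v$-paths with $v \in S$ can have at most $\abs{F} \le f$ members, since every in-neighbor of $S$ outside $S$ lies in $F$. The only difference is cosmetic: where the paper cites Menger's theorem for this bound, you carry out the counting explicitly via the ``last exit'' node on each path, which is a perfectly valid (and arguably more self-contained) way to make the same point.
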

\end{itemize}

At the end, after all iterations of the main for loop,
each output node $v$ outputs its $\gamma_v$ value.

The correctness of \algoRef{algorithm hypergraphs}
relies on the following two key lemmas,
which are proven in \sectionRef{section proof sufficiency multiple}
along with the 3 lemmas stated above.
Recall that
$G$ satisfies \conditionSC{} and
we use $F^*$ to denote the actual set of faulty nodes.

\begin{lemma} \label{lemma hypergraphs validity}
  For a non-faulty node $v \in V - F^*$,
  its state $\gamma_v$ at the end of any given phase
  of \algoRef{algorithm hypergraphs}
  equals the state of some non-faulty node at the start of that phase.
\end{lemma}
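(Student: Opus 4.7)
The plan is to argue separately for the two steps in which a non-faulty $v$ may update $\gamma_v$ during a phase, namely \step{d} (which fires only when $v \in S$) and \step{f} (which fires only when $v \in V - S - F$). If $v$ updates in neither step, then $\gamma_v$ is unchanged in the phase, and since $v$ itself is non-faulty the claim holds trivially. In each of the two update rules, I rely on a Menger-style ``clean path'' counting argument applied to the $f+1$ node-disjoint paths whose existence triggered the update.

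For \step{d}, suppose $v$ adopts the value $b$. Then $v \in B_v - F'$ and there exist $f+1$ node-disjoint $A_v v$-paths $P_1, \dots, P_{f+1}$ in $G'_v - (B_v \cap F')$ each carrying $b$ to $v$ in the \step{b} flooding. The key claim is that every $u^* \in F^*$ sits on at most one of these paths, as either source or internal node. There are three cases.
(i) If $u^* \in \inneighborhood{F}{S}{G}$, then $u^* \in F$, and exactly one of $(u^*)^0, (u^*)^1$ lies in $A_v$ while the other lies in $B_v \cap F'$ and has been removed; the surviving split version can only be a source, and by distinctness of sources in a $Uv$-path family appears in at most one path.
(ii) If $u^* \in F - \inneighborhood{F}{S}{G}$, then for every $e \in \incidentedges_G(u^*)$ we have $\tail(e) \cap S = \emptyset$, so no $u^*v$-path in $G[S, \set{e}]$ exists, and by the default branch of the \step{c} assignment rule every such $e$ is assigned to $(u^*)^1$. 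Thus $(u^*)^0$ has no outgoing hyperedges in $G'_v$; it cannot be internal on any $P_i$, is not a source (since $(u^*)^0 \notin Z_v \cup N_v \supseteq A_v$), and is not terminal ($v \ne (u^*)^0$), so $(u^*)^0$ appears in no path, while $(u^*)^1$ appears in at most one by node-disjointness.
(iii) If $u^* \notin F$, then $u^*$ is a single vertex of $G'_v$ and node-disjointness again bounds its appearances by $1$.
Summing these bounds gives at most $\abs{F^*} \le f$ contaminated paths, so at least one $P_j$ is clean: its source $u$ and all its internal nodes are non-faulty. The value flooded by $u$ in \step{b} is therefore the start-of-phase $\gamma_u$, and it reaches $v$ unaltered along $P_j$, giving $b = \gamma_u$ for a non-faulty $u$, as required.

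For \step{f}, suppose instead that $v \in V - S - F$ adopts $b$ on the basis of $f+1$ node-disjoint $Sv$-paths in $G - F$ all carrying $b$ in the \step{e} flooding. This is a direct disjoint-path count in $G - F$: vertices in $F^* \cap F$ are absent from $G - F$, and each $u^* \in F^* - F$ appears in at most one path, so at most $\abs{F^* - F} \le f$ paths are contaminated and at least one clean path remains. Its source $u \in S$ is non-faulty, the value it flooded in \step{e} is its $\gamma_u$ at the end of \step{d}, and by the \step{d} argument just given this equals the start-of-phase $\gamma$-value of some non-faulty node, which is then $b$.

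The main obstacle is case~(ii) of the \step{d} analysis: a naive count would suggest up to $2f$ contaminated paths because a faulty $u^* \in F$ corresponds to two distinct vertices of $G'_v$. Recovering the tight bound of $f$ rests on the observation that $(u^*)^0$ is an outgoing-isolated sink whenever $u^* \notin \inneighborhood{F}{S}{G}$, a fact that only emerges once one unpacks the defaulting branch of the \step{c} hyperedge-assignment rule. With that structural observation in hand, the remaining pieces, namely the forwarding-integrity of flooding along a path with no faulty internal node and the Menger-style counting, are routine.
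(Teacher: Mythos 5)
Your proof is correct and follows essentially the same route as the paper's: a case split on updates in \step{d} and \step{f}, followed by a clean-path count among the $f+1$ node-disjoint paths against at most $f$ faulty nodes, tracing the received value back to a non-faulty source. The only difference is that you spell out why a faulty node of the candidate set $F$, despite having two copies in $G'_v$, contaminates at most one of the $A_v v$-paths (a count the paper asserts without elaboration); note that in your case (i) the surviving copy need not appear only as a source, but the bound holds anyway since node-disjoint $A_v v$-paths share no node other than the terminal $v$.
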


\begin{lemma} \label{lemma hypergraphs agreement}
  Consider a phase $> 0$ of \algoRef{algorithm hypergraphs}
  wherein $F = F^*$.
  At the end of this phase,
  every pair of non-faulty nodes $u, v \in V - F^*$ have identical state,
  i.e., $\gamma_u=\gamma_v$.
\end{lemma}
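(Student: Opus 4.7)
\emph{Proof plan.} Fix the phase where $F = F^*$, so every node in $S$ is non-faulty. I argue agreement in three stages.

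First, I would show that in this phase all non-faulty $v \in S$ construct identical objects $G'_v$, $Z_v$, and $N_v$ (denote them $G'$, $Z$, $N$). The set $S$ is common because $F$ is determined by the phase index and the directed decomposition of $G - F$ is deterministic. In \step{b}, every hyperedge $e$ with head $u$ carries a single well-defined value $b_e$ to all of $\tail(e)$, even when $u \in F$, because of the local-multicast guarantee. Any candidate path $P_{uv}$ used by some $v \in S$ in \step{c} starts with such an $e$ and then stays inside the non-faulty set $S$, which relays faithfully; hence $v$'s received value along $P_{uv}$ is exactly $b_e$, independent of $v$. Therefore the assignment of each hyperedge to $u^0$ or $u^1$ is the same for every non-faulty $v \in S$, and similarly $v$'s classification of nodes $u \in S$ into $Z_v$ versus $N_v$ matches the bit $\gamma_u$, which every $v \in S$ observes faithfully via flooding through $S$.

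Second, I would apply \lemmaRef{lemma hypergraphs source SC} to the sub-hypergraph $G[S \cup \inneighborhood{F}{S}{G}]$ with parameter $\inneighborhood{F}{S}{G}$: it satisfies \refSC{}, so the partition $(Z, N)$ of its split vertex set yields propagation in one direction. Say $Z \propagate{G' - (N \cap F')} N - F'$; then every non-faulty $v \in S$ picks $A_v := Z$, $B_v := N$. The key claim, which I view as the main obstacle, is that every $Zv$-path in $G' - (N \cap F')$ delivers the value $0$ to $v$ in \step{b}. I would prove this by tracking the value transmitted at each hyperedge along the path and looking at the last faulty node, if any: because the path avoids $N \cap F'$, every faulty node it meets must be of the form $u^0 \in Z \cap F'$, and by construction the hyperedge leaving $u^0$ in $G'$ is precisely one on which $u$ actually transmitted $0$, as witnessed by $S$-nodes in \step{b}. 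From the last faulty node onward the path traverses only non-faulty nodes in $S$, which preserve the value $0$; if no faulty node appears, the source already lies in $Z \cap S$ with $\gamma = 0$ and the same conclusion holds. Hence $v$ receives $0$ on all $f+1$ node-disjoint paths and sets $\gamma_v := 0$ in \step{d}, while every $v \in Z \cap S$ already has $\gamma_v = 0$. The symmetric subcase gives $\gamma_v = 1$ throughout $S$.

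Third, I would propagate the common value out of $S$ using \lemmaRef{lemma hypergraphs source propagates}: $S \propagate{G - F} V - S - F$, so for each non-faulty $v \in V - S - F^*$ there are $f + 1$ node-disjoint $Sv$-paths in $G - F$. Since $F = F^*$, all internal nodes on these paths are non-faulty and forward the value flooded by $S$-nodes in \step{e} correctly; \step{f} then updates $\gamma_v$ to that common value. Combined with the previous stage, this gives $\gamma_u = \gamma_v$ for every pair of non-faulty $u, v$ at the end of the phase, which is exactly the statement of the lemma.
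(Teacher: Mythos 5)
Your overall plan coincides with the paper's proof: stage 1 is the paper's two auxiliary lemmas (\lemmaRef{lemma algorithm correct split graph multiple} and \lemmaRef{lemma hypergraphs alg correct partition}), stage 2 is the same case analysis on the direction of propagation for $(Z,N)$, and stage 3 is exactly the use of \lemmaRef{lemma hypergraphs source propagates} for steps \texttt{(e)}--\texttt{(f)}. The problem is in the step you yourself single out as the main obstacle. The claim that \emph{every} $Zv$-path in $G' - (N \cap F')$ delivers the value $0$ to $v$ in \step{b} is false, and your justification for it conflates two different messages. The hyperedge-assignment rule (and the stage-1 argument) pins down only the value that a faulty $u \in \inneighborhood{F^*}{S}{G}$ \emph{originated} on a hyperedge $e$ when flooding its own $\gamma_u$: if $e$ is assigned to $u^0$, then $u$'s own flood on $e$ carried $0$. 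It says nothing about how $u$ \emph{relays} the flood of the path's source. The value $v$ ``receives along a path'' is the path-source's flooded message as forwarded hop by hop, and a faulty internal relay may tamper with that message; it can forward the source's bit as $1$ on the very same hyperedge on which its own originated flood was $0$ (the local-multicast guarantee only forces consistency among the tails for each individual transmission, not consistency between distinct messages). So a $Zv$-path with a non-faulty source in $Z \cap S$ but a faulty internal node $u^0$ need not deliver $0$. In addition, the assertion that every faulty node on such a path lies in $Z \cap F'$ is unjustified: a $Zv$-path may detour outside $S \cup F'$ through split copies of faulty nodes not in $\inneighborhood{F^*}{S}{G}$ (these are in neither $Z$ nor $N$, hence not removed), before re-entering $S$.

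What is actually needed is weaker, and this is how the paper closes the argument: one only needs \emph{some} $f+1$ node-disjoint $Zv$-paths along which $v$ received $0$. Take the $f+1$ node-disjoint paths guaranteed by $Z \propagate{G' - (N \cap F')} N - F'$ and truncate each at its last node in $Z$, so that no internal node lies in $Z$. Because $S$ is the source component of $G - F^*$, every in-neighbor of an $S$-node in $G'$ lies in $S$ or is a copy of a node of $\inneighborhood{F^*}{S}{G}$, and all such copies lie in $Z \cup (N \cap F')$; working backwards from $v$ this forces every internal node of the truncated paths to lie in $S$, hence to be non-faulty. Thus only the source can be faulty, and the source's transmission on the first hyperedge is $0$ either because the source is a non-faulty node of $Z$ or, if it is some $u^0$, by the assignment rule via \lemmaRef{lemma algorithm correct split graph multiple}; the non-faulty $S$-relays then deliver $0$ to $v$. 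With this correction (which replaces your ``last faulty node'' step), the remainder of your proposal — identical $G'_v, Z_v, N_v$ for all non-faulty $v \in S$, the dichotomy from \conditionSC{} (your appeal to \lemmaRef{lemma hypergraphs source SC} here is fine, since $(Z,N)$ partitions only the split of $S \cup \inneighborhood{F^*}{S}{G}$), and the propagation out of $S$ — matches the paper's proof.
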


\lemmaRef{lemma hypergraphs validity} ensures validity,
i.e., that the output of each non-faulty node
is an input of some non-fautly node.
It also ensures that agreement among non-faulty nodes, once acheived,
is not lost.
\lemmaRef{lemma hypergraphs agreement} ensures
that agreement is reached in at least one phase of the algorithm.
These two lemmas imply correctness of \algoRef{algorithm hypergraphs}
as shown in \sectionRef{section proof sufficiency multiple}. 
\section{Conclusion}
In this paper,
we introduced the
local multicast model which,
to the best our knowledge,
has not been studied before in the literature.
The local multicast model corresponds to directed hypergraphs
and encompasses
the point-to-point, local broadcast, and undirected hypergraph
communication models,
as well as some new models which have not been considered before.
We identified a tight network condition
for Byzantine consensus under the local multicast model,
along the lines of \cite{khan2020directedOPODIS,LewisByzantineDirected},
and proved its necessity and sufficiency.
When the local multicast model
represents one of point-to-point, local broadcast, or undirected hypergraph
communication models,
we showed how the identified network condition
reduces to the known tight requirements for the corresponding case.

\bibliography{bib}

\begin{thebibliography}{17}
\providecommand{\natexlab}[1]{#1}
\providecommand{\url}[1]{\texttt{#1}}
\expandafter\ifx\csname urlstyle\endcsname\relax
  \providecommand{\doi}[1]{doi: #1}\else
  \providecommand{\doi}{doi: \begingroup \urlstyle{rm}\Url}\fi

\bibitem[Attiya and Welch(2004)]{Attiya:2004:DCF:983102}
Hagit Attiya and Jennifer Welch.
\newblock \emph{Distributed Computing: Fundamentals, Simulations and Advanced
  Topics}.
\newblock John Wiley \& Sons, Inc., USA, 2004.
\newblock ISBN 0471453242.

\bibitem[Bhandari and Vaidya(2005)]{Bhandari:2005:RBR:1073814.1073841}
Vartika Bhandari and Nitin~H. Vaidya.
\newblock On reliable broadcast in a radio network.
\newblock In \emph{Proceedings of the Twenty-fourth Annual ACM Symposium on
  Principles of Distributed Computing}, PODC '05, pages 138--147, New York, NY,
  USA, 2005. ACM.
\newblock ISBN 1-58113-994-2.
\newblock \doi{10.1145/1073814.1073841}.
\newblock URL \url{http://doi.acm.org/10.1145/1073814.1073841}.

\bibitem[Dolev(1982)]{DOLEV198214}
Danny Dolev.
\newblock The byzantine generals strike again.
\newblock \emph{Journal of Algorithms}, 3\penalty0 (1):\penalty0 14 -- 30,
  1982.
\newblock ISSN 0196-6774.
\newblock \doi{https://doi.org/10.1016/0196-6774(82)90004-9}.
\newblock URL
  \url{http://www.sciencedirect.com/science/article/pii/0196677482900049}.

\bibitem[Fischer et~al.(1986)Fischer, Lynch, and Merritt]{Fischer1986}
Michael~J. Fischer, Nancy~A. Lynch, and Michael Merritt.
\newblock Easy impossibility proofs for distributed consensus problems.
\newblock \emph{Distributed Computing}, 1\penalty0 (1):\penalty0 26--39, Mar
  1986.
\newblock ISSN 1432-0452.
\newblock \doi{10.1007/BF01843568}.
\newblock URL \url{https://doi.org/10.1007/BF01843568}.

\bibitem[Fitzi and Maurer(2000)]{Fitzi:2000:PCG:335305.335363}
Mattias Fitzi and Ueli Maurer.
\newblock From partial consistency to global broadcast.
\newblock In \emph{Proceedings of the Thirty-second Annual ACM Symposium on
  Theory of Computing}, STOC '00, pages 494--503, New York, NY, USA, 2000. ACM.
\newblock ISBN 1-58113-184-4.
\newblock \doi{10.1145/335305.335363}.
\newblock URL \url{http://doi.acm.org/10.1145/335305.335363}.

\bibitem[Jaffe et~al.(2012)Jaffe, Moscibroda, and
  Sen]{Jaffe:2012:PEB:2332432.2332491}
Alexander Jaffe, Thomas Moscibroda, and Siddhartha Sen.
\newblock On the price of equivocation in byzantine agreement.
\newblock In \emph{Proceedings of the 2012 ACM Symposium on Principles of
  Distributed Computing}, PODC '12, pages 309--318, New York, NY, USA, 2012.
  ACM.
\newblock ISBN 978-1-4503-1450-3.
\newblock \doi{10.1145/2332432.2332491}.
\newblock URL \url{http://doi.acm.org/10.1145/2332432.2332491}.

\bibitem[Khan and Vaidya(2021)]{TBD_disc2021}
Muhammad~Samir Khan and Nitin~H. Vaidya.
\newblock Byzantine consensus with local multicast channels.
\newblock Accepted in DISC 2021, 2021.

\bibitem[Khan et~al.(2019)Khan, Naqvi, and Vaidya]{khan2019undirectedPODC}
Muhammad~Samir Khan, Syed~Shalan Naqvi, and Nitin~H. Vaidya.
\newblock {Exact Byzantine Consensus on Undirected Graphs under Local Broadcast
  Model}.
\newblock In \emph{Proceedings of the 2019 ACM Symposium on Principles of
  Distributed Computing}, PODC ’19, page 327–336, New York, NY, USA, 2019.
  Association for Computing Machinery.
\newblock ISBN 9781450362177.
\newblock \doi{10.1145/3293611.3331619}.
\newblock URL \url{https://doi.org/10.1145/3293611.3331619}.

\bibitem[Khan et~al.(2020)Khan, Tseng, and Vaidya]{khan2020directedOPODIS}
Muhammad~Samir Khan, Lewis Tseng, and Nitin~H. Vaidya.
\newblock {Exact Byzantine Consensus on Arbitrary Directed Graphs Under Local
  Broadcast Model}.
\newblock In \emph{23rd International Conference on Principles of Distributed
  Systems (OPODIS 2019)}, volume 153 of \emph{Leibniz International Proceedings
  in Informatics (LIPIcs)}, pages 30:1--30:16, Dagstuhl, Germany, 2020. Schloss
  Dagstuhl--Leibniz-Zentrum fuer Informatik.
\newblock ISBN 978-3-95977-133-7.
\newblock \doi{10.4230/LIPIcs.OPODIS.2019.30}.
\newblock URL \url{https://drops.dagstuhl.de/opus/volltexte/2020/11816}.

\bibitem[Koo(2004)]{Koo:2004:BRN:1011767.1011807}
Chiu-Yuen Koo.
\newblock Broadcast in radio networks tolerating byzantine adversarial
  behavior.
\newblock In \emph{Proceedings of the Twenty-third Annual ACM Symposium on
  Principles of Distributed Computing}, PODC '04, pages 275--282, New York, NY,
  USA, 2004. ACM.
\newblock ISBN 1-58113-802-4.
\newblock \doi{10.1145/1011767.1011807}.
\newblock URL \url{http://doi.acm.org/10.1145/1011767.1011807}.

\bibitem[Lamport et~al.(1982)Lamport, Shostak, and
  Pease]{Lamport:1982:BGP:357172.357176}
Leslie Lamport, Robert Shostak, and Marshall Pease.
\newblock The byzantine generals problem.
\newblock \emph{ACM Trans. Program. Lang. Syst.}, 4\penalty0 (3):\penalty0
  382--401, July 1982.
\newblock ISSN 0164-0925.
\newblock \doi{10.1145/357172.357176}.
\newblock URL \url{http://doi.acm.org/10.1145/357172.357176}.

\bibitem[Lynch(1996)]{Lynch:1996:DA:2821576}
Nancy~A. Lynch.
\newblock \emph{Distributed Algorithms}.
\newblock Morgan Kaufmann Publishers Inc., San Francisco, CA, USA, 1996.
\newblock ISBN 9780080504704.

\bibitem[Naqvi et~al.(2018)Naqvi, Khan, and Vaidya]{NaqviBroadcast}
Syed~Shalan Naqvi, Muhammad~Samir Khan, and Nitin~H. Vaidya.
\newblock Exact byzantine consensus under local-broadcast model.
\newblock \emph{CoRR}, abs/1811.08535, 2018.
\newblock URL \url{http://arxiv.org/abs/1811.08535}.

\bibitem[Pease et~al.(1980)Pease, Shostak, and
  Lamport]{Pease:1980:RAP:322186.322188}
M.~Pease, R.~Shostak, and L.~Lamport.
\newblock Reaching agreement in the presence of faults.
\newblock \emph{J. ACM}, 27\penalty0 (2):\penalty0 228--234, April 1980.
\newblock ISSN 0004-5411.
\newblock \doi{10.1145/322186.322188}.
\newblock URL \url{http://doi.acm.org/10.1145/322186.322188}.

\bibitem[Ravikant et~al.(2004)Ravikant, Muthuramakrishnan, Srikanth, Srinathan,
  and Rangan]{Ravikant10.1007/978-3-540-30186-8_32}
D.~V.~S. Ravikant, V.~Muthuramakrishnan, V.~Srikanth, K.~Srinathan, and
  C.~Pandu Rangan.
\newblock On byzantine agreement over (2,3)-uniform hypergraphs.
\newblock In \emph{Distributed Computing}, pages 450--464, Berlin, Heidelberg,
  2004. Springer Berlin Heidelberg.
\newblock ISBN 978-3-540-30186-8.

\bibitem[Tseng and Vaidya(2014)]{LewisByzantineDirectedARXIV}
Lewis Tseng and Nitin Vaidya.
\newblock Exact byzantine consensus in directed graphs.
\newblock \emph{arXiv preprint arXiv:1208.5075}, 2014.

\bibitem[Tseng and Vaidya(2015)]{LewisByzantineDirected}
Lewis Tseng and Nitin~H. Vaidya.
\newblock Fault-tolerant consensus in directed graphs.
\newblock In \emph{Proceedings of the 2015 ACM Symposium on Principles of
  Distributed Computing}, PODC '15, page 451–460, New York, NY, USA, 2015.
  Association for Computing Machinery.
\newblock ISBN 9781450336178.
\newblock \doi{10.1145/2767386.2767399}.
\newblock URL \url{https://doi.org/10.1145/2767386.2767399}.

\end{thebibliography}

\appendix

\section{Reduction to Point-to-Point Channels}
\label{section proof p2p}
In this section,
we consider the case where each hyperedge in the hypergraph $G$
has exactly one tail node.
This corresponds to the classical point-to-point communication model
on arbitrary \emph{directed} graphs.
In this case, Tseng and Vaidya
\cite{LewisByzantineDirectedARXIV,LewisByzantineDirected}
showed that the following network condition,
which is similar to \conditionNC{},
is both sufficient and necessary.

\refstepcounter{condition} \label{condition p2p NC}
\begin{definition}
	[Condition \nameOfConditionPNC{}
	\cite{LewisByzantineDirected,LewisByzantineDirectedARXIV}]
	\label{definition p2p NC}
	A directed graph $G$ satisfies
	{\emph{condition \nameOfConditionPNC{} with parameter $F$}}
	if for every partition $(L, C, R)$ of $V - F$,
	we have that either
	\begin{enumerate}[label=\arabic*)]
    \item $R \cup C \adjacent{G} L$, or
    \item $L \cup C \adjacent{G} R$.
  \end{enumerate}
  We say that $G$ satisfies \emph{condition \nameOfConditionPNC{}},
  if $G$ satisfies condition \nameOfConditionPNC{} with parameter $F$
  for every set $F \subseteq V$ of cardinality at most $f$.
\end{definition}

When $G$ is undirected,
\conditionpNC{} reduces to 
$n \ge 3f + 1$ and node connectivity at least $2f + 1$
\cite{LewisByzantineDirectedARXIV,LewisByzantineDirected}.
Here, we show that \conditionNC{} reduces to \conditionpNC{}
when each hyperedge in the hypergraph $G$
has exactly one tail node.
\theoremRef{thm reduction p2p} follows as a corollary.

\begin{theorem} \label{thm reduction p2p directed}
  A directed hypergraph $G$, such that each hyperedge has exactly one tail node,
  satisfies \conditionNC{} if and only if
  the underlying directed graph $\underlying{G}$ satisfies \conditionpNC{}.
\end{theorem}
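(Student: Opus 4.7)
The plan is to exploit the following structural fact implied by the single-tail hypothesis: splitting a head $v \in F$ amounts to partitioning the out-edges of $v$ between its two copies $v^0$ and $v^1$ in $G'$, while the edges between unsplit nodes coincide in $G$, $\underlying{G}$, and $G'$. Both directions then reduce to choosing a partition on the correct side.

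For the direction $(\Leftarrow)$, assume $\underlying{G}$ satisfies \conditionpNC{}. Fix any $F\subseteq V$ of size $\le f$, any $G'\in\graphSplitFSet_F(G)$, and any partition $(L,C,R)$ of $V'$. Since only nodes of $F$ are ever split, $V'-F'=V-F$, so $(L-F',\,C-F',\,R-F')$ is a partition of $V-F$. Applying \conditionpNC{} to this partition yields, without loss of generality, a set $N\subseteq(L-F')\cup(C-F')$ of at least $f+1$ in-neighbors of $R-F'$ in $\underlying{G}$. Every node of $N$ lies in $V-F$ and is therefore unsplit, so each edge from $N$ into $R-F'$ persists in $G'$. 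Hence $N$ witnesses $L\cup C\adjacent{G'} R-F'$, as required.

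For the direction $(\Rightarrow)$, assume $G$ satisfies \conditionNC{}. Fix any $F$ of size $\le f$ and any partition $(L,C,R)$ of $V-F$. Construct $G'\in\graphSplitFSet_F(G)$ by splitting \emph{every} node $v\in F$ and assigning each out-edge $(v,\{w\})$ as follows: to $v^0$ if $w\in L$, to $v^1$ if $w\in R$, and arbitrarily if $w\in C\cup F$. This is well-defined precisely because each hyperedge has a unique tail $w$. Writing $F^0=\{v^0:v\in F\}$ and $F^1=\{v^1:v\in F\}$, the assignment ensures that in $G'$ no node of $F^0$ has an out-neighbor in $R$ and no node of $F^1$ has an out-neighbor in $L$. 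Since all of $F$ has been split, $F'=F^0\cup F^1$, so $(R\cup F^1)-F'=R$ and $(L\cup F^0)-F'=L$. Apply \conditionNC{} to $G'$ with the partition $(L\cup F^0,\,C,\,R\cup F^1)$ of $V'$: either $(L\cup F^0)\cup C\adjacent{G'}R$ or $(R\cup F^1)\cup C\adjacent{G'}L$. In the first case, no in-neighbor of $R$ can lie in $F^0$, so the $f+1$ in-neighbors all lie in $L\cup C\subseteq V-F$; since edges between unsplit nodes coincide in $G'$ and $\underlying{G}$, we conclude $L\cup C\adjacent{\underlying{G}}R$. The second case yields $R\cup C\adjacent{\underlying{G}}L$ symmetrically.

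The main obstacle is the $(\Rightarrow)$ direction, where we must translate a partition of $V-F$ supplied by \conditionpNC{} into a partition of $V'$ without allowing the nodes of $F'$ to contribute spurious in-neighbors to the ``wrong'' side. The node-split operation is exactly the right tool, and the single-tail hypothesis is precisely what makes the assignment well-defined, because each out-edge of $v\in F$ has a unique destination that unambiguously determines whether it should be carried by $v^0$ or $v^1$. Once that assignment is in place, the remaining bookkeeping (checking $V'-F'=V-F$, $F'=F^0\cup F^1$, and persistence of edges among unsplit nodes) is straightforward.
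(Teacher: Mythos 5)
Your proposal is correct and follows essentially the same route as the paper: the paper argues both directions contrapositively, but it uses the very same constructions you do — restricting the partition of $V'$ to $V-F'=V-F$ for one direction, and for the other splitting every node of $F$ with out-edges into $L$ assigned to $v^0$ and into $R$ assigned to $v^1$, then taking the partition $(L\cup F^0,\,C,\,R\cup F^1)$. The only cosmetic differences are your direct (rather than contrapositive) phrasing, the arbitrary assignment of edges into $C\cup F$ (the paper sends them all to $v^1$), and the implicit handling of the empty-target cases, which the definition of $\adjacent{}$ covers trivially.
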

\begin{proof}
  Since $G$ and $\underlying{G}$ are essentially the same in this case,
  so we simply refer to $G$ for both in this proof.
  We show the contrapositive in both directions.
  First, consider a set $F \subseteq V$ of size at most $f$
  such that $G$ does not satisfy \conditionNC{} with parameter $F$.
  We will show that $G$ does not satisfy
  \conditionpNC{} with parameter $F$ either.
  Now, there exists a graph $G' \in \graphSplitFSet_F(G)$ and
  a partition $(L', C', R')$ of $V'$ such that,
  using $F'$ to denote the set of nodes in $G'$
  corresponding to nodes in $F$ in $G$,
  \begin{enumerate}[label=\arabic*)]
    \item $L' \cup C' \notadjacent{G'} R' - F'$, and
    \item $R' \cup C' \notadjacent{G'} L' - F'$.
  \end{enumerate}
  We create a partition $(L, C, R)$ of $V - F$ as follows:
  \begin{align*}
    L &:= L' - F',\\
    R &:= R' - F',\\
    C &:= C' - F'.
  \end{align*}
  Observe that $(L, C, R)$ is a partition of $V - F$.
  Furthermore, both $L$ and $R$ are non-empty.
  Now, we have
  \begin{align*}
    \abs{\inneighborhood{L \cup C}{R}{G}}
      &=  \abs{\inneighborhood{(L' \cup C') - F'}{R' - F'}{G}}
\\
      &= \abs{\inneighborhood{(L' \cup C') - F'}{R' - F'}{G'}}
\\
      &\le \abs{\inneighborhood{L' \cup C'}{R' - F'}{G'}} \\
      &\le f.
  \end{align*}
  Similarly, $\abs{\inneighborhood{R \cup C}{L}{G}} \le f$ as well.
  Therefore,
  \begin{enumerate}[label=\arabic*)]
    \item $L \cup C \notadjacent{G} R$, and
    \item $R \cup C \notadjacent{G} L$.
  \end{enumerate}
  So $G$ does not satisfy \conditionpNC{} with parameter $F$,
  as required.

  For the other direction,
  consider a set $F \subseteq V$ of size at most $f$
  such that $G$ does not satisfy \conditionpNC{} with parameter $F$.
  We will show that $G$ does not satisfy
  \conditionNC{} with parameter $F$ either.
  Now, there exists a partition $(L, C, R)$ of $V - F$ such that
  \begin{enumerate}[label=\arabic*)]
    \item $L \cup C \notadjacent{G} R$, and
    \item $R \cup C \notadjacent{G} L$.
  \end{enumerate}
  We create a graph $G' \in \graphSplitFSet_F(G)$
  by splitting all nodes in $F$,
  with the following choices:
  for each node $u \in F$ and an edge $(u, v)$,
  if $v \in L$, then add a hyperedge $(u^0, v)$ in $G'$;
  otherwise add $(u^1, v)$ in $G'$.
We create a partition $(L', C', R')$ of $V'$ as follows:
  \begin{align*}
    L' &:= L \cup \set{u^0 \mid u \in F},  \\
    R' &:= R \cup \set{u^1 \mid u \in F},  \\
    C' &:= C.
  \end{align*}
  Observe that, by construction,
  nodes in $R' - F' = R$ have no in-neighbors in
  $L' \cap F' = \set{u^0 \mid u \in F}$ and
  nodes in $L' - F' = L$ have no in-neighbors in
  $R' \cap F' = \set{u^1 \mid u \in F}$.
  So we have
  \begin{align*}
    \abs{\inneighborhood{L' \cup C'}{R' - F'}{G'}}
       &=  \abs{\inneighborhood{L \cup C}{R}{G}}
       \le  f,\\
\abs{\inneighborhood{R' \cup C'}{L' - F'}{G'}}
       &=  \abs{\inneighborhood{R \cup C}{L}{G}}
       \le  f.
  \end{align*}
  Therefore,
  \begin{enumerate}[label=\arabic*)]
    \item $L' \cup C' \notadjacent{G'} R' - F'$, and
    \item $R' \cup C' \notadjacent{G'} L' - F'$.
  \end{enumerate} 
  So $G$ does not satisfy \conditionNC{} with parameter $F$, as required.
\end{proof} 
\section{Reduction to Local Broadcast Model}
\label{section proof local broadcast}
In this section,
we consider the case where each node
is a head node of exactly one hyperedge.
This corresponds to the local broadcast model
on arbitrary \emph{directed} graphs.
In thise case,
Khan et. al. \cite{khan2020directedOPODIS}
showed that the following network condition,
which is similar to \conditionNC{},
is both sufficient and necessary.

\refstepcounter{condition} \label{condition local NC}
\begin{definition}[Condition \nameOfConditionBNC{}
  \cite{khan2020directedOPODIS,NaqviBroadcast}]
	\label{definition local NC}
	A directed graph $G$ satisfies
	{\emph{condition \nameOfConditionBNC{} with parameter $F$}}
	if for every partition $(L, C, R)$ of $V$,
	we have that either
	\begin{enumerate}[label=\arabic*)]
    \item ${R \cup C} \adjacent{G} L - F$, or
		\item ${L \cup C} \adjacent{G} R - F$.
  \end{enumerate}	
	We say that $G$ satisfies {\emph{condition \nameOfConditionBNC{}}},
	if $G$ satisfies condition \nameOfConditionBNC{} with parameter $F$
	for every set $F \subseteq V$ of cardinality at most $f$.
\end{definition}

When $G$ is undirected,
\conditionbNC{} reduces to
minimum node degree at least $2f$
and node connectivity at least $\floor{3f/2} + 1$.
Here, we show that \conditionNC{} reduces to \conditionbNC{}
when each node in the hypergraph $G$ is a head node of exactly one hyperedge.
\theoremRef{thm reduction local broadcast}
follows as a corollary.

\begin{theorem} \label{thm reduction local broadcast directed}
  A directed hypergraph $G$,
  such that each node is a head node of exactly one hyperedge,
  satisfies \conditionNC{} if and only if
  the underlying directed graph $\underlying{G}$
  satisfies \conditionbNC{}.
\end{theorem}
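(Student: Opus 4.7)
The plan is to prove both directions by contrapositive, along the lines of \theoremRef{thm reduction p2p directed}, and to exploit one observation that is special to the local-broadcast setting: since each node is the head of exactly one hyperedge, splitting $u \in F$ into $u^0, u^1$ forces the unique hyperedge $e_u$ to be assigned to one copy (call it the \emph{active} copy $u^c$), leaving the \emph{inactive} copy $u^{1-c}$ with no outgoing hyperedge in $G'$ at all. Consequently, inactive copies never appear in any in-neighborhood set $\inneighborhood{\cdot}{\cdot}{G'}$, and in-neighbors of a non-$F$ set in $G$ stand in bijection, via active copies, with in-neighbors in $G'$.

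For the converse, suppose $G$ violates \conditionNC{}, witnessed by $F$, a hypergraph $G' \in \graphSplitFSet_F(G)$, and a partition $(L', C', R')$ of $V'$. Build a partition $(L, C, R)$ of $V$ by keeping every non-$F$ node in its original part and placing each $u \in F$ into $L$, $C$, or $R$ according to the part of $(L', C', R')$ containing its active copy $u^c$. Using $L - F = L' - F'$ and $R - F = R' - F'$, the active-copy correspondence yields
\[
  \abs{\inneighborhood{L \cup C}{R - F}{G}} = \abs{\inneighborhood{L' \cup C'}{R' - F'}{G'}} \le f,
\]
and symmetrically for the other side, so $\underlying{G}$ violates \conditionbNC{} with the same $F$. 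For the forward direction, given $F$ and a partition $(L, C, R)$ of $V$ witnessing failure of \conditionbNC{}, split every $u \in F$ and assign $e_u$ to $u^0$ if $u \in L \cup C$ and to $u^1$ if $u \in R$. Place the copies into $(L', C', R')$ as follows: every non-$F$ node keeps its original part; for $u \in F \cap L$ put the active $u^0$ in $L'$ and the inactive $u^1$ in $R'$; for $u \in F \cap R$ put the active $u^1$ in $R'$ and the inactive $u^0$ in $L'$; for $u \in F \cap C$ put both copies in $C'$. The same bijection then gives the matching inequalities, and $L' - F' = L - F \ne \emptyset$, $R' - F' = R - F \ne \emptyset$, so $G$ violates \conditionNC{}.

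The only subtlety is that the placement of an inactive copy must not manufacture a spurious in-neighbor of $L' - F'$ or $R' - F'$; this is automatic because an inactive copy has no outgoing hyperedge in $G'$ and therefore sits inertly in whichever part we put it in. With that pinned down, both directions reduce to the single active-copy bijection, and \theoremRef{thm reduction local broadcast} follows as a corollary by specializing to the case where $\underlying{G}$ is undirected and invoking the undirected tight bound of \cite{khan2019undirectedPODC}.
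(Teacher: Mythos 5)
Your proof is correct, and its crux --- that when every node heads exactly one hyperedge, splitting $u \in F$ leaves one copy with no outgoing hyperedge, so ``inactive'' copies never contribute to any in-neighborhood --- is exactly the observation the paper's proof turns on. The difference is in how the two directions are packaged. For the direction from a violation of \conditionNC{} to a violation of \conditionbNC{}, the paper notes that the inactive copy of any split node lies in $F'$ and appears in neither $\inneighborhood{L \cup C}{R - F'}{G'}$ nor $\inneighborhood{R \cup C}{L - F'}{G'}$, and concludes that WLOG no node was split, so $G' = G$, $F' = F$, and the given partition of $V'$ is already a partition of $V$; your active-copy relabelling is an equivalent, and if anything slightly more explicit, way of performing that collapse, since it pins down which part the merged node lands in. For the other direction, however, the paper's argument is a one-liner: because $G \in \graphSplitFSet_F(G)$ and the in-neighbor relation in $G$ coincides with that in $\underlying{G}$, the partition $(L, C, R)$ of $V$ witnessing the failure of \conditionbNC{} directly witnesses the failure of \conditionNC{} with no splitting at all. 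Your split-all-of-$F$ construction with active/inactive placements (modelled on the point-to-point reduction of \theoremRef{thm reduction p2p directed}, where splitting genuinely is needed) is valid --- the inactive copies sit inertly wherever you put them, as you observe --- but it does extra work that the local-broadcast structure renders unnecessary.
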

\begin{proof}
We show the contrapositive in both directions.
  First, consider a set $F \subseteq V$ of size at most $f$
  such that $G$ does not satisfy \conditionNC{} with parameter $F$.
  We will show that $\underlying{G}$ does not satisfy
  \conditionbNC{} with parameter $F$.
  Now, there exists a hypergraph $G' \in \graphSplitFSet_F(G)$ and
  a partition $(L, C, R)$ of $V'$ such that,
  using $F'$ to denote the set of nodes in $G'$
  corresponding to nodes in $F$ in $G$,
  \begin{enumerate}[label=\arabic*)]
    \item $L \cup C \notadjacent{G'} R - F'$, and
    \item $R \cup C \notadjacent{G'} L - F'$.
  \end{enumerate}
Consider a node $v \in F$ that was split into $v^0, v^1$ in $G'$.
  Since there is exactly one hyperedge in $\incidentedges_G(v)$,
  it follows that either $v^0$ or $v^1$ has degree $0$ in $G'$.
  So at least one of $v^0$ and $v^1$
  is neither in $\inneighborhood{L \cup C}{R - F'}{G'}$
  nor in $\inneighborhood{R \cup C}{L - F'}{G'}$.
  Therefore, WLOG, we can assume that $v$ was not split in $G'$.
  Thus $F' = F$, $G' = G$, and
  $(L, C, R)$ is a partition of $V$ such that
  \begin{enumerate}[label=\arabic*)]
    \item $L \cup C \notadjacent{G} R - F$, and
    \item $R \cup C \notadjacent{G} L - F$.
  \end{enumerate}
  Observe that
  \begin{align*}
    \inneighborhood{L \cup C}{R - F}{G} &= \inneighborhood{L \cup C}{R - F}{\underlying{G}},  \\
    \inneighborhood{R \cup C}{L - F}{G} &= \inneighborhood{R \cup C}{L - F}{\underlying{G}}.
  \end{align*}
  So $G$ does not satisfy \conditionbNC{} with parameter $F$,
  as required.

  For the other direction,
  consider a set $F \subseteq V$ of size at most $f$
  such that $\underlying{G}$ does not satisfy \conditionbNC{} with parameter $F$.
  We will show that $G$ does not satisfy
  \conditionNC{} with parameter $F$ either.
  Now, there exists a partition $(L, C, R)$ of $V$ such that
  \begin{enumerate}[label=\arabic*)]
    \item
    $L \cup C \notadjacent{\underlying{G}} R - F$ and so
    $L \cup C \notadjacent{{G}} R - F$, and
    \item
    $R \cup C \notadjacent{\underlying{G}} L - F$ and so
    $R \cup C \notadjacent{{G}} L - F$.
  \end{enumerate}
  Since $G \in \graphSplitFSet_F(G)$ and
  $(L, C, R)$ is a partition of $V$,
  so $G$ does not satisfy \conditionNC{} with parameter $F$, as required.
\end{proof} 
\section{Reduction to Undirected Hypergraphs}
\label{section proof hypergraph}
\begin{proof}[Proof of \theoremRef{theorem hypergraph to undirected}]
  Directly from Lemmas
  \ref{lem NC implies n > 2f},
  \ref{lem NC implies 2f+1 connectivity},
  \ref{lem NC implies 3 f hyperconn}, and
  \ref{lem hypergraph implies NC}
  below.
\end{proof}

We first show that if a hypergraph $G$ is undirected
and satisfies \conditionNC{},
then $G$ satisfies each of the conditions in
\theoremRef{theorem hypergraph to undirected}.

\begin{lemma} \label{lem NC implies n > 2f}
  If an undirected hypergraph $G$
  satisfies \conditionNC{},
  then $n \ge 2f + 1$.
\end{lemma}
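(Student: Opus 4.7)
The plan is to argue by contrapositive: assuming $n \le 2f$, I will exhibit a concrete witness showing that $G$ violates \conditionNC{}. Since \conditionNC{} must hold for every $F \subseteq V$ with $\abs{F} \le f$, it suffices to find a single such $F$, a split hypergraph $G' \in \graphSplitFSet_F(G)$, and a partition $(L, C, R)$ of $V'$ for which both adjacency conditions of \defRef{definition hypergraphs NC} fail.

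The natural choice is $F = \emptyset$. This is legal since $0 \le f$, and it sidesteps the node-split operation entirely: $\graphSplitFSet_F(G) = \set{G}$ and $F' = \emptyset$. Assuming $n \ge 2$, I will pick any partition $V = L \cup R$ into two non-empty parts with $\abs{L}, \abs{R} \le f$, which exists precisely because $n \le 2f$, and take $(L, \emptyset, R)$ as the witness partition of $V' = V$.

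Verifying that both conditions of \conditionNC{} fail on this partition reduces to a trivial cardinality bound. For condition~1, since $R - F' = R \ne \emptyset$, one needs $\abs{\inneighborhood{L}{R}{G}} \ge f+1$; but $\inneighborhood{L}{R}{G} \subseteq L$, hence $\abs{\inneighborhood{L}{R}{G}} \le \abs{L} \le f$, and condition~1 fails. The symmetric bound $\abs{\inneighborhood{R}{L}{G}} \le \abs{R} \le f$ shows condition~2 fails as well, yielding the desired contradiction.

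There is no real technical obstacle; the only thing to mention is the degenerate case $n \le 1$. The case $n = 0$ is vacuous, and $n = 1$ combined with $n \le 2f$ forces $f \ge 1 \ge n$, which is the ill-posed regime of no non-faulty nodes and lies outside the scope of the consensus problem. I also note that the construction does not actually exploit undirectedness of the hypergraph; the undirected hypothesis is reserved for the sibling lemmas that establish the connectivity and three-node hyperedge conditions of \theoremRef{theorem hypergraph to undirected}.
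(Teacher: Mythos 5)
Your proposal is correct and matches the paper's proof essentially verbatim: both take the contrapositive with $F = \emptyset$ (so $\graphSplitFSet_F(G) = \set{G}$), partition $V$ into non-empty $L, R$ of size at most $f$ with $C = \emptyset$, and conclude that neither side can have $f+1$ in-neighbors. Your extra remarks on the degenerate cases and on not needing undirectedness are accurate but not needed beyond what the paper does.
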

\begin{proof}
  Consider an undirected hypergraph $G$.
  We show the contrapositive that if $n \le 2f$,
  then there exists $F \subseteq V$ of size at most $f$
  such that $G$ does not satisfy \conditionNC{} with parameter $F$.
  Let $F = \emptyset$.
  Observe that $\graphSplitFSet_F(G) = \set{G}$.
  Partition $V$ into $(L, R)$ such that
  $0 < \abs{L} \le f$ and
  $0 < \abs{R} \le f$.
  With $C = \emptyset$, $(L, C, R)$ is a partition of $V$.
  But, since $0 < \abs{L}, \abs{R} \le f$, we have that
  \begin{enumerate}[label=\arabic*)]
    \item $L \cup C = L \notadjacent{G} R = R - F$, and
    \item $R \cup C = R \notadjacent{G} L = L - F$,
  \end{enumerate}
  as required.
\end{proof}

\begin{lemma} \label{lem NC implies 2f+1 connectivity}
  If an undirected hypergraph $G$ satisfies \conditionNC{},
  then
  the underlying simple graph $\underlying{G}$ is
  either a complete graph or is $(2f+1)$-connected.
\end{lemma}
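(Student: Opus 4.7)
The plan is to prove the contrapositive: assume $\underlying{G}$ is not complete and has node connectivity at most $2f$, and exhibit a set $F \subseteq V$ with $|F| \le f$ for which $G$ fails \conditionNC{}. Non-completeness together with connectivity at most $2f$ gives a vertex cut $\kappa \subseteq V$ with $|\kappa| \le 2f$ whose removal leaves $\underlying{G}$ disconnected; I take $L_0, R_0$ to be two nonempty sides of $\underlying{G} - \kappa$ with no edges between them, partition $\kappa = \kappa_1 \sqcup \kappa_2$ with each part of size at most $f$, and set $F := \kappa_1$.

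The key structural observation, and the only place where the undirected hypothesis is used, is that no hyperedge $e \in E$ can contain a node of $L_0$ and a node of $R_0$ simultaneously. Indeed, for $v_L \in e \cap L_0$ and $v_R \in e \cap R_0$, the undirectedness of $G$ produces a directed hyperedge in $E$ with head $v_L$ and $v_R$ in its tail, forcing the edge $(v_L, v_R)$ into $\underlying{E}$; but $v_L, v_R \in V - \kappa$ contradicts the separation by $\kappa$. Hence every hyperedge of $G$ lies inside $L_0 \cup \kappa$ or inside $R_0 \cup \kappa$.

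Using this, I would construct $G' \in \graphSplitFSet_F(G)$ by splitting every $u \in F$ and assigning each $e \in \incidentedges_G(u)$ to $u^0$ if $e \subseteq L_0 \cup \kappa$ and to $u^1$ otherwise (hyperedges lying entirely inside $\kappa$ may be assigned to either copy without harm). I then take the partition
\[
L = L_0 \cup \{u^0 : u \in F\}, \qquad C = \kappa_2, \qquad R = R_0 \cup \{u^1 : u \in F\}
\]
of $V(G')$, so that $L - F' = L_0$ and $R - F' = R_0$. A direct in-neighborhood count then shows $\inneighborhood{L \cup C}{R_0}{G'} \subseteq \kappa_2$: nodes of $L_0$ are not in $F$ and all their incident hyperedges lie in $L_0 \cup \kappa$, while each $u^0$ received only hyperedges whose tails avoid $R_0$; a symmetric argument bounds in-neighbors of $L_0$ inside $R \cup C$ by $\kappa_2$. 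Both sets have size at most $|\kappa_2| \le f$, so both adjacency clauses of \conditionNC{} with parameter $F$ fail, giving the required contradiction. The only genuinely delicate step is the non-straddling claim (which really does need undirectedness, not merely bidirectionality); once that is in hand, the choice of split, the choice of partition, and the in-neighborhood counts are routine.
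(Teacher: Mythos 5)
Your proof is correct and follows essentially the same route as the paper: pass to the contrapositive, extract a cut of size at most $2f$ in $\underlying{G}$, observe (using undirectedness) that no hyperedge straddles the two sides, split the cut into $F$ and $C$, split the nodes of $F$ with hyperedges assigned by which side their node set touches, and check that the partition $(L, C, R)$ violates \conditionNC{}. The only differences are cosmetic: you handle the $n \le 2f$ and $n = 2f+1$ situations uniformly via the minimum-vertex-cut characterization of non-complete graphs instead of the paper's explicit case analysis, and you spell out the straddling-hyperedge argument that the paper leaves implicit.
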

\begin{proof}
  We show the contrapositive that,
  for an undirected hypergraph $G$,
  if the underlying simple graph $\underlying{G}$
  is neither a complete graph nor is $(2f+1)$-connected,
  then $G$ does not satisfy \conditionNC{}.
If $n \le 2f$, then by \lemmaRef{lem NC implies n > 2f},
  we have that $G$ does not satisfy \conditionNC{}.
  So suppose that $n \ge 2f+1$.

  First, in each of the following two cases,
  we show that there exists a set $X$ of size at most $2f$
  that partitions $V - X$ into $(A, B)$ such that
  $\abs{A}, \abs{B} > 0$ and
  there is no undirected hyperedge in $G$
  that contains a node from both $A$ and $B$.
  \begin{enumerate}[label={Case \arabic*:},leftmargin=*]
    \item
    $n = 2f+1$.\\
    Since the underlying simple graph $\underlying{G}$ is not complete,
    there exist two nodes $u$ and $v$
    such that there is no undirected hyperedge containing both $u$ and $v$.
    Then,
    choosing $A := \set{u}$, $B := \set{v}$, and $X = V - A - B$
    satisfies the requirements above.

    \item
    $n > 2f+1$.\\
    Since the underlying simple graph $\underlying{G}$ is
    not $(2f+1)$-connected and $n > 2f+1$,
    there exists a set $X$ of size at most $2f$ that partitions $V - X$
    into $(A, B)$ such that $\abs{A}, \abs{B} > 0$
    and there is no undirected hyperedge in $G$
    that contains a node from both $A$ and $B$,
    as required.
  \end{enumerate}
  Partition $X$ into $(F, C)$ such that $\abs{F} \le f$ and $\abs{C} \le f$.
  Recall that there is no \emph{undirected} hyperedge that contains a node from
  both $A$ and $B$.
  It follows that,
for each node $z \in F$
  and \emph{directed} hyperedge $e \in \incidentedges_G(z)$,
  either $A \cap \tail(e) = \emptyset$ or $B \cap \tail(e) = \emptyset$.

  Now, we create a graph $G' \in \graphSplitFSet_F(G)$ by splitting
  all nodes in $F$, with the following choices:
  for each node $z \in F$ and a directed hyperedge $e \in \incidentedges_G(z)$,
  if $A \cap \tail(e) \ne \emptyset$, then assign $e$ to $z^0$;
  otherwise assign $e$ to $z^1$.
  Observe that $z^0$ does not have any hyperedge with tail nodes in $B$
  and $z^1$ does not have any hyperedge with tail nodes in $A$.
  Let
  \begin{align*}
    L &:= A \cup \set{z^0 \mid z \in F},  \\
    R &:= B \cup \set{z^1 \mid z \in F}.
  \end{align*}
Then $(L, C, R)$ is a partition of $V'$.
  We use $F'$ to denote the nodes in $G'$ corresponding to nodes in $F$ in $G$.
  Recall that $A = L - F'$ and $B = R - F'$ are both non-empty.
  Note that nodes in $L - F' = A$ (resp. $R - F' = B$)
  do not have any in-neighbors in $R$ (resp. $L$)
  and $\abs{C} \le f$.
  It follows that
  \begin{enumerate}[label=\arabic*)]
    \item $L \cup C \notadjacent{G'} A = R - F'$, and
    \item $R \cup C \notadjacent{G'} B = L - F'$.
  \end{enumerate}
  Thus, $G$ does not satisfy \conditionNC{}, as required.
\end{proof}

\begin{lemma} \label{lem NC implies 3 f hyperconn}
  If an undirected hypergraph $G$ satisfies \conditionNC{},
  then,
  for every $V_1, V_2, V_3 \subseteq V$ such that
  $V_1 \cup V_2 \cup V_3 = V$ and $\abs{V_1} = \abs{V_2} = \abs{V_3} = f$,
  there exist three nodes
  \begin{enumerate}[label=(\roman*)]
    \item $u \in V_1 - (V_2 \cup V_3)$,
    \item $v \in V_2 - (V_1 \cup V_3)$, and
    \item $w \in V_3 - (V_1 \cup V_2)$,
  \end{enumerate}
  such that there is an undirected hyperedge in $G$
  that contains $u$, $v$, and $w$.
\end{lemma}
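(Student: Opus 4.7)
The plan is to prove the contrapositive: assuming there exist $V_1, V_2, V_3$ of size $f$ covering $V$ such that no undirected hyperedge contains one pure node from each part, I will exhibit an explicit $(F, G', L, C, R)$ violating \conditionNC{}. Writing $U_i := V_i \setminus \bigcup_{j \ne i} V_j$ for the pure parts, my first step is to handle the degenerate case where some $U_i$ is empty: WLOG $U_1 = \emptyset$ forces $V_1 \subseteq V_2 \cup V_3$ and hence $n \le 2f$, at which point \lemmaRef{lem NC implies n > 2f} already supplies the violation. So I may assume $U_1, U_2, U_3$ are all non-empty, and the remainder of the work lies in this main case.

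For the construction I take $F := V_3$ (of size $f$) and form $G' \in \graphSplitFSet_F(G)$ by splitting \emph{only} the pure part $U_3$, leaving the shared nodes $V_3 \setminus U_3$ unsplit (they still belong to $F'$ via $V' \cap F$). For each $z \in U_3$ and each hyperedge $e = (z, S) \in \incidentedges_G(z)$, the corresponding undirected hyperedge contains the pure $V_3$-node $z$, so by the standing hypothesis it misses $U_1$ or $U_2$, i.e.\ $S \cap U_1 = \emptyset$ or $S \cap U_2 = \emptyset$. I assign $e$ to $z^0$ when $S \cap U_2 = \emptyset$ and to $z^1$ when $S \cap U_1 = \emptyset$. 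The partition of $V'$ is then
\begin{align*}
  L &:= U_1 \cup \bigl((V_1 \cap V_3) \setminus V_2\bigr) \cup \{z^0 : z \in U_3\}, \\
  R &:= U_2 \cup \bigl((V_2 \cap V_3) \setminus V_1\bigr) \cup \{z^1 : z \in U_3\}, \\
  C &:= \bigl((V_1 \cap V_2) \setminus V_3\bigr) \cup (V_1 \cap V_2 \cap V_3),
\end{align*}
designed so that $L \setminus F' = U_1$, $R \setminus F' = U_2$, $(L \cup C) \cap V = V_1$, and $(R \cup C) \cap V = V_2$.

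What remains is a pair of symmetric counting arguments. For $|\inneighborhood{L \cup C}{R - F'}{G'}|$ I decompose in-neighbors of $U_2$ into those in $V_1$ (bounded by $|V_1| = f$) and those in $\{z^0 : z \in U_3\}$ (none, by the assignment rule); the symmetric argument bounds $|\inneighborhood{R \cup C}{L - F'}{G'}|$ by $|V_2| = f$. Hence both clauses of \conditionNC{} fail for this witness, completing the contrapositive. The main obstacle is identifying the right $F$ and split: splitting all of $V_3$ does not work because a shared $V_3$-node $z \in V_3 \setminus U_3$ can head a hyperedge containing both $U_1$- and $U_2$-nodes while missing $U_3$ (a configuration permitted by the hypothesis), which would preclude any clean assignment of $e$ to $z^0$ or $z^1$; splitting only $U_3$ and absorbing the shared $V_3$-nodes into $F'$ through their $V_3$-membership is precisely what makes the $|V_1| = f$ and $|V_2| = f$ bounds line up without double-counting.
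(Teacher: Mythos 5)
Your proposal is correct and is essentially the paper's own argument up to a cyclic relabeling of the indices: the paper takes $F := V_1$, splits only the pure part $V_1 - (V_2 \cup V_3)$ with the same assignment rule, and uses the partition $L = (V_2 - V_3) \cup \{z^0\}$, $R = (V_3 - V_2) \cup \{z^1\}$, $C = V_2 \cap V_3$, which coincides region-by-region with your $(L, C, R)$ after permuting the roles of $V_1, V_2, V_3$. The degenerate case (empty pure part $\Rightarrow n \le 2f$ $\Rightarrow$ \lemmaRef{lem NC implies n > 2f}) and the two counting bounds by $\abs{V_1} = f$ and $\abs{V_2} = f$ are likewise identical in substance.
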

\begin{proof}
  Consider an undirected hypergraph $G$.
  We show the contrapositive that
  if $G$ does not satisfy the condition in the lemma,
  then $G$ does not satisfy \conditionNC{}.
  Suppose that there exist
  $V_1, V_2, V_3 \subseteq V$ such that
  \begin{enumerate}[label=(\roman*)]
    \item 
    $V_1 \cup V_2 \cup V_3 = V$,

    \item
    $\abs{V_1} = \abs{V_2} = \abs{V_3} = f$,
    and

    \item
    no \emph{undirected} hyperedge crosses all the three sets
    $V_1 - (V_2 \cup V_3)$, $V_2 - (V_1 \cup V_3)$, and $V_3 - (V_1 \cup V_2)$.
  \end{enumerate}
  By (i) and (ii) above, we have $n \le 3f$.
  Furthermore, if $n \le 2f$,
  then we are done by \lemmaRef{lem NC implies n > 2f}.
  So for the rest of the proof, we assume that $2f < n \le 3f$.
  Let
  \begin{align*}
    V'_1 &:= V_1 - \del{V_2 \cup V_3},  \\
    V'_2 &:= V_2 - \del{V_1 \cup V_3},  \\
    V'_3 &:= V_3 - \del{V_1 \cup V_2}.
  \end{align*}
Observe that if either of the three sets $V'_1$, $V'_2$, and $V'_3$ is empty,
  then $\abs{V} \le 2f$, a contradiction.
  So each of $V'_1$, $V'_2$, and $V'_3$ is non-empty.
  Recall, from (iii) above, that
  there is no \emph{undirected} hyperedge crossing all the three sets
  $V'_1$, $V'_2$, and $V'_3$.
  It follows that,
  for each node $z \in V'_1$
  and \emph{directed} hyperedge $e \in \incidentedges_G(z)$,
  either $V'_2 \cap \tail(e) = \emptyset$
  or $V'_3 \cap \tail(e) = \emptyset$.
  
  Let $F := V_1$.
  We create a graph $G' \in \graphSplitFSet_F(G)$ by splitting
  all nodes in $V'_1 = F - (V_2 \cup V_3)$,
  with the following choices in the node split operation:
  for each node $z \in V'_1$
  and a directed hyperedge $e \in \incidentedges_G(z)$,
  if $V'_2 \cap \tail(e) \ne \emptyset$, then assign $e$ to $z^0$;
  otherwise assign $e$ to $z^1$.
  Observe that $z^0$ does not have any edge with tail nodes in $V'_3$
  and $z^1$ does not have any edge with tail nodes in $V'_2$.
  Let
  \begin{align*}
    L &:= \del{V_2 - V_3} \cup \set{z^0 \mid z \in V'_1}, \\
    R &:= \del{V_3 - V_2} \cup \set{z^1 \mid z \in V'_1}, \\
    C &:= V_2 \cap V_3.
  \end{align*}
  Note that $L$, $C$, and $R$ are disjoint.
  Furthermore, we have
  \begin{align*}
    L \cup C \cup R
      &=  \set{z^0, z^1 \mid z \in V'_1} \cup
          \del{V_2 - V_3} \cup \del{V_3 - V_2} \cup \del{V_2 \cap V_3} \\
      &=  \set{z^0, z^1 \mid z \in V'_1} \cup
      \del{V_2 \cup V_3} \\
      &=  (V - V'_1) \cup \set{z^0, z^1 \mid z \in V'_1}  \\
      &=  V'.
  \end{align*}
  Therefore, $(L, C, R)$ is a partition of
  $V'$.
We use $F'$ to denote the nodes in $G'$ corresponding to nodes in $F$ in $G$.
  Now $R - F' = V'_3$ and $L - F' = V'_2$ are both non-empty.
  Therefore,
  \begin{align*}
    \inneighborhood{L \cup C}{R - F'}{G'}
      &=  \inneighborhood{V_2 \cup \set[0]{z^0 \mid z \in F}}{V'_3}{G'}  \\
      &=  \inneighborhood{V_2}{V'_3}{G'}  \\
      &\subseteq  {V_2}.
  \end{align*}
  Since $\abs{V_2} = f$, so $L \cup C \notadjacent{G'} R - F'$.
  Similarly, $R \cup C \notadjacent{G'} L - F'$.
  Thus, $G$ does not satisfy \conditionNC{}, as required.
\end{proof}

We now show that if $G$ satisfies each of the three conditions in
\theoremRef{theorem hypergraph to undirected},
then $G$ satisfies \conditionNC{}.

\begin{lemma} \label{lem hypergraph implies NC}
  An undirected hypergraph $G$
  satisfies \conditionNC{} if $G$ satisfies each of the following:
  \begin{enumerate}[label=\arabic*)]
    \item
    $n \ge 2f + 1$,

    \item
    the underlying simple graph $\underlying{G}$ is
    either a complete graph or is $(2f+1)$-connected,

    \item
    for every $V_1, V_2, V_3 \subseteq V$ such that
    $V_1 \cup V_2 \cup V_3 = V$ and $\abs{V_1} = \abs{V_2} = \abs{V_3} = f$,
    there exist three nodes
    \begin{enumerate}[label=(\roman*)]
      \item $u \in V_1 - (V_2 \cup V_3)$,
      \item $v \in V_2 - (V_1 \cup V_3)$, and
      \item $w \in V_3 - (V_1 \cup V_2)$,
    \end{enumerate}
    such that there is an undirected hyperedge in $G$
    that contains $u$, $v$, and $w$.
  \end{enumerate}
\end{lemma}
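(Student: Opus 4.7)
The plan is to prove the contrapositive: if $G$ fails \conditionNC{}, then $G$ fails one of the three numbered hypotheses. Extract a witness $F \subseteq V$ with $|F| \le f$, $G' \in \graphSplitFSet_F(G)$, and a partition $(L, C, R)$ of $V'$ with $A := L - F'$ and $B := R - F'$ both non-empty satisfying $|\inneighborhood{L \cup C}{B}{G'}| \le f$ and $|\inneighborhood{R \cup C}{A}{G'}| \le f$. If $n \le 2f$, hypothesis (1) fails and we are done; assume $n \ge 2f + 1$ henceforth.

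Let $\pi : V' \to V$ be the canonical projection sending each split pair $v^0, v^1$ back to $v$ and fixing all non-split nodes, and set $P_A := \pi(\inneighborhood{R \cup C}{A}{G'})$ and $P_B := \pi(\inneighborhood{L \cup C}{B}{G'})$, both of size at most $f$. The key observation, which exploits the undirectedness of $G$, is the following: if some undirected hyperedge $e$ of $G$ meets $A$ at a node $a$ and $B$ at a node $b$, then because $a \notin F$ is not split, the directed hyperedge $(a, e - a)$ of $G$ survives unchanged in $G'$ with $b$ in its tail, so $a$ is an in-neighbor of $b$ in $G'$ and hence $a \in P_B$; symmetrically $b \in P_A$.

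The plan now splits on $n$. In the regime $2f + 1 \le n \le 3f$, I construct $V_1, V_2, V_3 \subseteq V$ each of size exactly $f$ with $V_1 \cup V_2 \cup V_3 = V$ that violate hypothesis (3). I place $F \subseteq V_1$, force the exclusive parts to satisfy $V_2 \setminus (V_1 \cup V_3) \subseteq A \setminus P_B$ and $V_3 \setminus (V_1 \cup V_2) \subseteq B \setminus P_A$, and put the ``bridging'' nodes $P_A \cup P_B \cup (C - F')$ into the pairwise intersection $V_2 \cap V_3$, padding each $V_i$ with arbitrary nodes to reach size $f$. The observation then rules out any undirected hyperedge meeting all three exclusive parts: such a hyperedge would contain $a \in A \setminus P_B$ and $b \in B \setminus P_A$, forcing $a \in P_B$, a contradiction. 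In the regime $n > 3f$, hypothesis (3) is vacuous, so I instead exhibit a vertex cut of size at most $2f$ in $\underlying{G}$, contradicting hypothesis (2). The candidate cut $S := F \cup (C - F') \cup (A \cap P_B) \cup (B \cap P_A)$ separates $A \setminus P_B$ from $B \setminus P_A$ in $\underlying{G}$ by the observation; a minimality argument on the witness --- iteratively moving nodes of $C - F'$ into $L$ or $R$ while preserving both in-neighbor bounds --- shrinks $|C - F'|$ until $|S| \le 2f$, and the existence of $S$ also precludes $\underlying{G}$ being complete.

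The main obstacle is the combinatorial bookkeeping. In the $n \le 3f$ case, the simultaneous constraints of exact size $f$, covering $V$, and exclusive-part containment form a tight system; verifying feasibility requires preliminary WLOG normalizations of the witness, such as ensuring $|A \setminus F|, |B \setminus F| \le f$ and handling the possibility that a split node $v \in F$ contributes to both $P_A$ and $P_B$ via its two copies. In the $n > 3f$ case, the minimality reduction is the crux: moving a $c \in C - F'$ to $L$ can inflate $|\inneighborhood{R \cup (C - \{c\})}{A \cup \{c\}}{G'}|$ above $f$ via $c$'s own in-neighbors, so each move must be justified by showing at least one side ($L$ or $R$) admits it, analogous to the simplification moves used in the directed-graph analyses of \cite{khan2020directedOPODIS,LewisByzantineDirected}.
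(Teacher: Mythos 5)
Your contrapositive setup and your key observation are sound and are exactly the mechanism the paper exploits: if an undirected hyperedge of $G$ meets $A = L - F'$ at $a$ and $B = R - F'$ at $b$, then since $a \notin F$ is not split, the corresponding directed hyperedge with head $a$ keeps $b$ in its tail in $G'$, so $a \in P_B$ and symmetrically $b \in P_A$. The genuine gap is your case split on $n$. In the regime $2f+1 \le n \le 3f$ you commit to violating hypothesis (3), but a failure of \conditionNC{} there need not be witnessed by any violation of (3). Concretely, take $f = 2$, $n = 6$, and let the undirected hyperedges be all $3$-subsets of $V$ that do not contain both of two fixed nodes $x, y$. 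Hypothesis (1) holds; hypothesis (3) holds (any cover of the six nodes by three $2$-sets is a partition, and one can always pick one representative per part so that $x$ and $y$ are not both chosen, giving a legal $3$-hyperedge); yet $\underlying{G}$ is $K_6$ minus the edge $xy$, neither complete nor $5$-connected, so \conditionNC{} fails (a witness: $F$ two of the remaining four nodes, $C$ the other two, $L \supseteq \{x\}$, $R \supseteq \{y\}$, with both in-neighborhoods equal to $C$ and hence of size $f$). For this witness your planned $V_1, V_2, V_3$ cannot exist, so your proof stalls: the only violated hypothesis is (2), which your plan reserves for $n > 3f$. The paper avoids this by splitting on structure rather than on $n$: if $(L \cup C) \not\subseteq F' \cup \inneighborhood{L \cup C}{R - F'}{G'}$ (or symmetrically), or if $\inneighborhood{C - F'}{R - F'}{G'} \ne \inneighborhood{C - F'}{L - F'}{G'}$, it produces a cut of size at most $2f$ in $\underlying{G}$ even when $n \le 3f$, and only in the remaining case (which forces $V = F \cup P_A \cup P_B$, hence $n \le 3f$) does it violate (3) or (1).

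Two further points would also need repair even where your target is the right one. First, your packing in the small-$n$ branch requires $P_A \cup P_B \cup (C - F') \subseteq V_2 \cap V_3$ with $\abs{V_2} = f$, but $\abs{C - F'}$ (and $\abs{P_A \cup P_B}$) can exceed $f$, so this is infeasible in general; the paper instead seeds $V_1, V_2, V_3$ with $P_B$, $P_A$, and $F$ (each of size at most $f$, and covering $V$ precisely because of its Case-3 hypotheses) and pads, never forcing $C - F'$ into an intersection. Second, in your $n > 3f$ branch the cut $F \cup (C - F') \cup (A \cap P_B) \cup (B \cap P_A)$ followed by an iterative ``move nodes of $C - F'$ into $L$ or $R$'' minimality argument is exactly the step that needs proof, and it is not clear such moves are always available; the paper sidesteps this entirely by putting the non-neighboring part of $(L \cup C) - F'$ on one side of the cut and taking the cut to be just $F \cup \inneighborhood{(L \cup C) - F'}{R - F'}{G'}$, which has size at most $2f$ with no shrinking argument needed (non-emptiness of that side being exactly its Case-1 hypothesis, which, as noted above, always holds when $n > 3f$ but must be argued).
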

\begin{proof}
  Consider an undirected hypergraph $G$.
  We show the contrapositive that
  if there exists a set $F \subseteq V$, of size at most $f$,
  such that $G$ does not satisfy \conditionNC{} with parameter $F$,
  then one of the conditions in the lemma statement is violated.
  Now, there exists a graph $G' \in \graphSplitFSet_F(G)$ and
  a partition $(L, C, R)$ of $V'$ such that,
  using $F'$ to denote the set of nodes in $G'$
  corresponding to nodes in $F$ in $G$,
  \begin{enumerate}[label=\arabic*)]
    \item $L \cup C \notadjacent{G'} R - F'$, and
    \item $R \cup C \notadjacent{G'} L - F'$.
  \end{enumerate}
  Note that this implies that both $R - F'$ and $L - F'$ are non-empty.
  There are the following cases to consider.
  \begin{enumerate}[label={Case \arabic*:},leftmargin=*]
    \item Either
    $(L \cup C) \not \subseteq F' \cup \inneighborhood{L \cup C}{R - F'}{G'}$
    or
    $(R \cup C) \not \subseteq F' \cup  \inneighborhood{R \cup C}{L - F'}{G'}$.
    \\
    Suppose that $(L \cup C) - F' - \inneighborhood{L \cup C}{R - F'}{G'}$
    is non-empty (the other case follows similarly).
    Let $B := R - F'$ (recall that $R - F'$ is non-empty).
    Let
    \begin{align*}
      A &:= (L \cup C) - F' - \inneighborhood{L \cup C}{R - F'}{G'},\\
      B &:= R - F',\\
      X' &:= F' \cup \inneighborhood{L \cup C}{R - F'}{G'}.
    \end{align*}
    Observe that both $A$ and $B$ are non-empty.
    Then,
    removing $X'$
    partitions $V' - X'$ into $(A, B)$
    such that there is no undirected hyperedge between $A$ and $B$ in $G'$.
    Note that $A \cup B \subseteq V \cap V'$.
    Let $X$ be the set of nodes in $G$ that correspond
    to nodes in $X'$ in $G'$.
    Then $X$ is a cut in the underlying simple graph $\underlying{G}$
    that partitions $V - X$ into $(A, B)$.
    We bound the size of $X$ as follows.
    By definition of $X$ and $X'$,
    \begin{align*}
      \abs{X}
        &= \abs{F \cup \del{\inneighborhood{L \cup C}{R - F'}{G'} - F'}}  \\
        &= \abs{F \cup \inneighborhood{(L \cup C) - F'}{R - F'}{G'}} \\
        &\le \abs{F} + \abs{\inneighborhood{(L \cup C) - F'}{R - F'}{G'}} \\
        &\le f + f  \\
        &= 2f.
\end{align*}
Therefore,
    since $A$ and $B$ are both non-empty,
    $X$ is a cut of size at most $2f$ in $\underlying{G}$.
    It follows that $\underlying{G}$ is
    neither a complete graph
    nor $(2f+1)$-connected.
    This violates the second condition in the lemma statement.

    \item
    $(L \cup C) \subseteq F' \cup \inneighborhood{L \cup C}{R - F'}{G'}$,
    $(R \cup C) \subseteq F' \cup \inneighborhood{R \cup C}{L - F'}{G'}$,
    and
    $\inneighborhood{C-F'}{R - F'}{G'} \ne \inneighborhood{C-F'}{L - F'}{G'}$.
    \\
    Without loss of generality assume that
    $\inneighborhood{C-F'}{R - F'}{G'} - \inneighborhood{C-F'}{L - F'}{G'}$
    is non-empty.
    Note that, by assumption of Case 2,
    \begin{align*}
      V'  &= L \cup R \cup C \\
          &= F' \cup
            \inneighborhood{L \cup C}{R - F'}{G'} \cup
            \inneighborhood{R \cup C}{L - F'}{G'}.
    \end{align*}
    Let
    \begin{align*}
      A &:= L - F',  \\
      B &:=
      \inneighborhood{C - F'}{R - F'}{G'} -
      \inneighborhood{C - F'}{L - F'}{G'},
      \\
      X' &:= F' \cup \inneighborhood{R \cup C}{L - F'}{G'}.
    \end{align*}
Observe that $A$ is non-empty since $L - F'$ is non-empty
    and $B$ is non-empty by assumption of Case 2.
    Then,
    removing
    $X'$ partitions $V' - X'$ into $(A, B)$.
    By construction of $B$,
    there is no \emph{undirected} hyperedge between $A$ and $B$.
    Note that $A \cup B \subseteq V \cap V'$.
    Let $X$ be the set of nodes in $G$ that correspond
    to nodes in $X'$ in $G'$.
    Then $X$ is a cut in the underlying simple graph $\underlying{G}$
    that partitions $V - X$ into $(A, B)$.
    We bound the size of $X$ as follows.
    By definition of $X$ and $X'$,
    \begin{align*}
      \abs{X}
        &= \abs{F \cup \del{\inneighborhood{R \cup C}{L - F'}{G'} - F'}}  \\
        &= \abs{F \cup \inneighborhood{(R \cup C) - F'}{L - F'}{G'}} \\
        &\le \abs{F} + \abs{\inneighborhood{(R \cup C) - F'}{L - F'}{G'}} \\
        &\le f + f  \\
        &= 2f.
\end{align*}
    Therefore,
    since $A$ and $B$ are both non-empty,
    $X$ is a cut of size at most $2f$ in $\underlying{G}$.
    It follows that $\underlying{G}$ is
    neither a complete graph
    nor $(2f+1)$-connected.
    This violates the second condition in the lemma statement.

    \item 
    $(L \cup C) \subseteq F' \cup \inneighborhood{L \cup C}{R - F'}{G'}$,
    $(R \cup C) \subseteq F' \cup \inneighborhood{R \cup C}{L - F'}{G'}$,
    and
    $\inneighborhood{C-F'}{R - F'}{G'} = \inneighborhood{C-F'}{L - F'}{G'}$.
    \\
First observe that, by assumption of Case 2,
    \begin{align*}
      V'  &= L \cup R \cup C \\
          &= F' \cup
            \inneighborhood{L \cup C}{R - F'}{G'} \cup
            \inneighborhood{R \cup C}{L - F'}{G'} \\
      \implies V &= F \cup
        \inneighborhood{(L \cup C) - F'}{R - F'}{G'} \cup
        \inneighborhood{(R \cup C) - F'}{L - F'}{G'}.
    \end{align*}
    This implies that $n = \abs{V} \le 3f$ as follows.
    \begin{align*}
      n &=  \abs{V} \\
        &=  \abs{ F \cup \inneighborhood{L \cup C - F'}{R - F'}{G'} \cup
              \inneighborhood{R \cup C - F'}{L - F'}{G'} } \\
        &\le  \abs{ F } + \abs{ \inneighborhood{L \cup C - F'}{R - F'}{G'}} +
              \abs{\inneighborhood{R \cup C - F'}{L - F'}{G'}} \\
        &\le  3f.
    \end{align*}
    Let
    \begin{align*}
      A' &:= \inneighborhood{L \cup C}{R - F'}{G'}, \\
      B' &:= \inneighborhood{R \cup C}{L - F'}{G'}.
    \end{align*}
    Let $A$ and $B$ denote the set of nodes in $G$
    corresponding to nodes in $A'$ and $B'$, respectively, in $G'$.
    Observe that $A \cup B \cup F = V$.
    Observe also that
    $\abs{A} \le \abs{A'} \le f$, $\abs{B} \le \abs{B'} \le f$, and
    $\abs{F} \le f$.
    If one of $A - (B \cup F)$, $B - (A \cup F)$, and $F - (A \cup B)$
    is empty, then $n \le 2f$,
    which violates the first condition in the lemma statement, and we are done.
    So assume that all three of the above sets is non-empty.
    By assumption of Case 3,
    \begin{align*}
      L - F'
        &\subseteq \inneighborhood{L - F'}{R - F'}{G'} \\
        &\subseteq L - F'.
    \end{align*}
    Therefore, $L - F' = \inneighborhood{L - F'}{R - F'}{G'}$,
    and we have
    \begin{align*}
      A - (B \cup F)
        &= (A - F) - (B - F)  \\
        &= (A' - F') - (B' - F')  \\
        &= \rlap{$\displaystyle \inneighborhood{(L \cup C) - F'}{R - F'}{G'}
            - \inneighborhood{(R \cup C) - F'}{L - F'}{G'}$}  \\
        &= \rlap{$\displaystyle \inneighborhood{L - F'}{R - F'}{G'}
        - \inneighborhood{R - F'}{L - F'}{G'}$}
        \\
&&\llap{since $\inneighborhood{C-F'}{R - F'}{G'}
                = \inneighborhood{C-F'}{L - F'}{G'}$,}\\
        &&\text{by assumption of Case 3}  \\
        &=  \inneighborhood{L - F'}{R - F'}{G'}
        &\text{since $L$ and $R$ are disjoint} \\
        &= L - F'
        &\text{since $L - F' = \inneighborhood{L - F'}{R - F'}{G'}$.}
\end{align*}
Similarly, $B - (A \cup F) = R - F'$.

    Now, we show that there is no \emph{undirected} hyperedge in $G$
    that crosses each of the 3 non-empty sets above.
    Consider any three nodes
    \begin{enumerate}[label=(\roman*)]
      \item $u \in A - (B \cup F) = L - F'$,
      \item $v \in B - (A \cup F) = R - F'$, and
      \item $z \in F - (A \cup B)$.
    \end{enumerate}
    If there is an \emph{undirected} hyperedge
    that contains all three of $u$, $v$, and $z$,
    then there is a \emph{directed} hyperedge $e \in \incidentedges_G(z)$
    such that $u, v \in \tail(e)$.
    We will create a contradiction with (iii) above, by showing that
    $z \in A \cup B$.
    Observe that $u, v \in V \cap V'$, i.e.,
    $u$ and $v$ were not split in $G'$.
    Let
    \begin{align*}
      z_e = \begin{cases}
        z &\qquad\text{if $z$ was not split in $G'$,}  \\
        z^0 &\qquad\text{if $z$ was split into $z^0, z^1$ in $G'$ and $e$ was assigned to $z^0$,}  \\
        z^1 &\qquad\text{if $z$ was split into $z^0, z^1$ in $G'$ and $e$ was assigned to $z^1$.}
      \end{cases}
    \end{align*}
    In each case,
    there is a \emph{directed} hyperedge $e'$ in $G'$,
    corresponding to $e$ in $G$,
    such that $z_e = \head(e')$ and $u, v \in \tail(e')$.
    Note that $z_e \in V' = L \cup C \cup R$
    and there are two cases to consider.
    \begin{itemize}[label={$z_e \in L \cup C$:},leftmargin=*]
      \item[$z_e \in L \cup C$:]
      then $z_e \in \inneighborhood{L \cup C}{R - F'}{G'} = A'$
      since $v \in \tail(e')$ and $v \in R - F'$.
      \item[$z_e \in R$:]
      then $z_e \in \inneighborhood{R \cup C}{L - F'}{G'} = B'$
      since $u \in \tail(e')$ and $u \in L - F'$.
    \end{itemize}
    In either case, $z_e \in A' \cup B'$.
    It follows that $z \in A \cup B$, a contradiction.
Therefore, there is no undirected hyperedge
    that contains all three of $u$, $v$, and $z$.

    Finally,
    we show that the third condition in the lemma statement
    is violated.
    Since $n > f$, we can find three sets
    $V_1 \supseteq A, V_2 \supseteq B, V_3 \supseteq F$ such that
    $\abs{V_1} = \abs{V_2} = \abs{V_3} = f$.
    Observe that $V_1 \cup V_2 \cup V_3 = A \cup B \cup F = V$,
    and so
    \begin{align*}
      V_1 - (V_2 \cup V_3) & \subseteq A - (B \cup F),  \\
      V_2 - (V_1 \cup V_3) & \subseteq B - (A \cup F),  \\
      V_3 - (V_1 \cup V_2) & \subseteq F - (A \cup B).
    \end{align*}
    Therefore,
    there is no undirected hyperedge in $G$ across the three
    (possibly empty) sets
    \begin{enumerate}[label=(\roman*)]
      \item $V_1 - (V_2 \cup V_3)$,
      \item $V_2 - (V_1 \cup V_3)$, and
      \item $V_3 - (V_1 \cup V_2)$.
    \end{enumerate}
    This violates the third condition in the lemma statement.
\end{enumerate}
  In all cases,
  we have that one of the conditions in the lemma statement is violated.
\end{proof} 
\section[Proof of Necessity of Condition \nameOfConditionHNC{}]
{Proof of Necessity of \ConditionNC{}}
\label{section proof necessity multiple}

In this section,
we show the necessity portion of \theoremRef{theorem hypergraphs main},
following the discussion in \sectionRef{section necessity}.

\begin{proof}
  [Proof of \theoremRef{theorem hypergraphs main} ($\Rightarrow$ direction)]
  Suppose for the sake of contradiction that there exists a set $F$,
  of cardinality at most $f$,
  such that $G$ does not satisfy \conditionNC{} with parameter $F$,
  but there exists an algorithm $\mathcal A$
  that solves Byzantine consensus on $G$.
  Then there is a hypergraph ${G'} \in \graphSplitFSet_F(G)$
  and a partition $(L, C, R)$ of $V'$ such that,
  using $F'$ to denote the set of nodes in $G'$
  corresponding to nodes in $F$ in $G$,
  \begin{enumerate}[label=\arabic*)]
    \item $L \cup C \notadjacent{{G'}} R - F'$, and
    \item $R \cup C \notadjacent{{G'}} L - F'$.
  \end{enumerate}
  Note that this implies that both $R - F'$ and $L - F'$ are non-empty.
  Consider the nodes in $C \cap F'$.
  By moving them from $C$ to $L$,
  the required condition stays violated.
  Therefore, without loss of generality, we assume that $C \cap F' = \emptyset$
  for the rest of the proof.
  As described in \sectionRef{section necessity},
  we work with the algorithm $\mathcal A'$ on $G'$
  that corresponds to $\mathcal A$,
  with appropriate inputs and faulty nodes to create the desired contradiction.
  
  We first create a directed hypergraph
  $\mathcal{G} = (\mathcal{V}, \mathcal{E})$
  to model the behavior
  of nodes in three different executions
  $E_1$, $E_2$, and $E_3$
  of algorithm $\mathcal A'$ on $G' = (V', E')$.
  We will describe these executions later.
  \figureRef{figure hypergraph necessity network} depicts the
  underlying simple graph
  $\underlying{\mathcal{G}} = (\underlying{\mathcal{V}}, \underlying{\mathcal{E}})$.
  Recall that for the set $F \subseteq V$,
  we use $F'$ to denote the corresponding nodes in $V'$.
  Let
  \begin{align*}
    L' &:= L - \inneighborhood{L}{R-F'}{{G'}},\\
    R' &:= R - \inneighborhood{R}{L-F'}{{G'}},\\
    C' &:= C - \del{\inneighborhood{C}{L-F'}{{G'}} \cup \inneighborhood{C}{R-F'}{{G'}}}.
  \end{align*}
  A node $u$ in ${G'}$ may have up to 3 copies in $\mathcal{G}$,
  denoted by $u_0, u_1, u_2$.
  If a node has a single copy in $\mathcal{G}$, then we omit the subscript.
  This notation extends to sets as well so that
  $C'_1, C'_2, C'_3$ denote the three copies of the nodes in $C'$.
  The nodes have the following number of copies in $\mathcal{G}$,
  as depicted in \figureRef{figure hypergraph necessity network}.
  \begin{itemize}
    \item
    Nodes in $C'$ have three copies.

    \item
    Nodes in $\inneighborhood{L}{R-F'}{{G'}}$,
    $\inneighborhood{R}{L-F'}{{G'}}$, and
    $\inneighborhood{C}{L-F'}{{G'}} \cap \inneighborhood{C}{R-F'}{{G'}}$
    have a single copy.

    \item
    All other nodes have two copies.
  \end{itemize}
  
  We describe the hyperedges of $\mathcal{G}$
  based on the hyperedges of $G'$ and
  the simple edges of $\underlying{\mathcal{G}}$
  depicted in \figureRef{figure hypergraph necessity network}.
  We use $v' \in \mathcal{V}$ to denote a copy of a node $v \in V'$.
  Consider a copy $u' \in \mathcal{V}$ of a node $u \in V'$.
  For a hyperedge $e = (u, S) \in \incidentedges_{G'}(u)$,
  let
  \[S' := \set{ v' \mid \text{$v \in S$ and $(u', v') \in \underlying{\mathcal{E}}$}}. \] 
  If $S' \ne \emptyset$, then $(u', S')$ is a hyperedge in $\mathcal{G}$.
  $\underlying{\mathcal{G}}$ has been constructed to ensure that
  for each edge $(u, v) \in \underlying{E'}$ of $\underlying{G'}$,
  each copy of $v$ has an edge
  from exactly one copy of $u$ in $\underlying{\mathcal{G}}$.
Hence, for each hyperedge $e = (u, S) \in E'$ of $G'$ such that $v \in S$,
  each copy of $v$ receives messages
  on exactly one hyperedge corresponding to $e$ in $\mathcal{G}$.
  However, there can be multiple copies of $v$
  that receive messages from a copy of $u$.

  The algorithm $\mathcal A'$ outlines a procedure $\mathcal A'_u$
  for each node $u \in V'$ that describes $u$'s state transitions,
  as well as messages transmitted to each neighbor $v$ of $u$ in each round.
  We create an algorithm for $\mathcal{G}$, corresponding to $\mathcal A'$,
  as follows.
  Consider a hyperedge $(u, S) \in E'$ in $G'$.
  Let $u'$ be a copy of $u$ in $\mathcal{G}$
  and let $(u', S')$ be the hyperedge in $\mathcal{G}$
  corresponding to the hyperedge $(u, S)$
  (using $S' = \emptyset$ for the case there is no such hyperedge).
  Then $u'$ runs the procedure $\mathcal A'_u$, with the following modification.
  When $\mathcal A'_u$ requires a message $m$
  to be sent on the hyperedge $(u, S)$,
  $u'$ sends the message $m$ on the hyperedge $(u', S')$ in $\mathcal{G}$.
  Recall that, by construction of $\mathcal{G}$,
  for any in-neighbor $v$ of node $u$ in $G'$,
  each copy of $u$ receives messages from exactly one copy of $v$
  in $\mathcal{G}$.
  So each copy of $u$ in $\mathcal{G}$
  can correctly run the procedure $\mathcal A'_u$.
  Observe that it is not guaranteed that
  the nodes will agree on the same value,
  or even if the algorithm will terminate.

  Consider an execution $\Sigma$ of the above algorithm on $\mathcal{G}$
  with the following inputs.
  All (copies of) nodes denoted with subscript $0$ have input $0$.
  All (copies of) nodes denoted with subscript $1$ have input $1$.
  $C'_2$ is the only set with subscript $2$, and has input $1$.
  For the single copy nodes,
  $\inneighborhood{L}{R-F'}{{G'}}$ has input $0$,
  while all others have input $1$.
  We show that with these inputs, the algorithm above does terminate,
  but the output of the nodes will help us in deriving the desired contradiction.
  We use the execution $\Sigma$ to model three executions
  $E_1$, $E_2$, and $E_3$
  of $\mathcal A'$ on the hypergraph ${G'}$.
  In each of the three executions,
  we ensure that the conditions of
  \lemmaRef{lemma hypergraphs algorithm G implies G'}
  are met so that
  $\mathcal A'$ solves consensus in finite time.
  $E_1$, $E_2$, and $E_3$ are as follows.

  \begin{enumerate}[label=$E_{\arabic*}:$,leftmargin=*]
    \item $\inneighborhood{R \cup C}{L - F'}{{G'}}$ is the set of faulty nodes
    in this execution.
    Recall that $\abs{\inneighborhood{R \cup C}{L - F'}{{G'}}} \le f$.
    All non-faulty nodes have input $0$.
    Observe that this satisfies the conditions of
    \lemmaRef{lemma hypergraphs algorithm G implies G'}
    so that $\mathcal A'$ solves consensus in finite time in this execution.
    \figureRef{figure hypergraph necessity exec 0}
    depicts the execution $E_1$.

Consider any arbitrary round in $E_1$.
    We describe the messages transmitted by faulty nodes in this round.
    If a faulty node $u$ has a single copy in $\mathcal{G}$,
    then, in $E_1$,
    $u$ transmits the same messages as the copy in execution $\Sigma$.
    If a faulty node $u \in V'$ has two copies
    $u_0$ and $u_1$ in $\mathcal{G}$,
    then, in $E_1$,
    $u$ transmits the same messages as the copy $u_0$ in execution $\Sigma$.
    \figureRef{figure hypergraph necessity exec 0}
    depicts how the behavior of each node, faulty or non-faulty,
    in $E_1$
    is modelled by the corresponding copy in $\Sigma$.
    Observe that each node in ${G'}$ is being modelled
    by exactly one copy in $\mathcal{G}$.
    Since $\mathcal A'$ solves Byzantine consensus on ${G'}$,
    so all non-faulty nodes decide on output $0$ (by validity) in finite time.
    In particular,
    all nodes in $L - F'$ in $E_1$ decide on output $0$.
    In $\Sigma$, these are modelled by copies in either
    $(L' - F')_0$ or $\inneighborhood{L-F'}{R-F'}{{G'}}$.
    Therefore, all nodes in
    $(L' - F')_0$ and $\inneighborhood{L-F'}{R-F'}{{G'}}$
    decide on output $0$ in $\Sigma$.

    \item $\inneighborhood{L \cup C}{R - F'}{{G'}}$ is the set of faulty nodes
    in this execution.
    Recall that $\abs{\inneighborhood{L \cup C}{R - F'}{{G'}}} \le f$.
    All non-faulty nodes have input $1$.
    Observe that this satisfies the conditions of
    \lemmaRef{lemma hypergraphs algorithm G implies G'}
    so that $\mathcal A'$ solves consensus in finite time in this execution.
    \figureRef{figure hypergraph necessity exec 1}
    depicts the execution $E_2$.

Consider any arbitrary round in $E_2$.
    We describe the messages transmitted by faulty nodes in this round.
    If a faulty node $u$ has a single copy in $\mathcal{G}$,
    then, in $E_2$,
    $u$ transmits the same messages as the copy in execution $\Sigma$.
    If a faulty node $u \in V'$ has two copies
    $u_0$ and $u_1$ in $\mathcal{G}$,
    then, in $E_2$,
    $u$ transmits the same messages as the copy $u_1$ in execution $\Sigma$.
    \figureRef{figure hypergraph necessity exec 1}
    depicts how the behavior of each node, faulty or non-faulty,
    in $E_2$
    is modelled by the corresponding copy in $\Sigma$.
    Observe that each node in ${G'}$ is being modelled
    by exactly one copy in $\mathcal{G}$.
    Since $\mathcal A'$ solves Byzantine consensus on ${G'}$,
    so all non-faulty nodes decide on output $1$ (by validity) in finite time.
    In particular,
    all nodes in $R - F'$ in $E_2$ decide on output $1$.
    In $\Sigma$, these are modelled by copies in either
    $(R' - F')_1$ or $\inneighborhood{R-F'}{L-F'}{{G'}}$.
    Therefore, all nodes in
    $(R' - F')_1$ and $\inneighborhood{R-F'}{L-F'}{{G'}}$
    decide on output $1$ in $\Sigma$.

    \item $F'$ is the set of faulty nodes.
    Recall that $C \cap F' = \emptyset$ and that some nodes in $F'$ in ${G'}$
    might have been split from original nodes in $F$ in $G$.
    However, $\abs{F} \le f$, i.e.
    the total number of corresponding faulty nodes in $G$ is at most $f$.
    \figureRef{figure hypergraph necessity exec 2}
    depicts the execution $E_3$.
    There are no split nodes outside of $F'$ in ${G'}$.
    Therefore,
    the conditions of
    \lemmaRef{lemma hypergraphs algorithm G implies G'}
    are satisfied and
    $\mathcal A'$ solves consensus in finite time in this execution.
All non-faulty nodes in the set $L - F'$ have input $0$.
    All non-faulty nodes in the set
    $\inneighborhood{C}{L - F'}{{G'}} - \inneighborhood{C}{R - F'}{{G'}}$
    also have input $0$.
    All the other non-faulty nodes have input $1$.
    Consider any arbitrary round in $E_3$.
    We describe the messages transmitted by faulty nodes in this round.
    If a faulty node $u \in F'$ has a single copy in $\mathcal{G}$,
    then, in $E_3$,
    $u$ transmits the same messages as the copy in execution $\Sigma$.
    If a faulty node $u \in F'$ has two copies $u_0$ and $u_1$ in $\mathcal{G}$,
    then, $u \in L' \cap F'$ (resp. $u \in R' \cap F'$) in $G'$.
    $u$ transmits the same messages as the copy $u_0$ (resp. $u_1$)
    in execution $\Sigma$.
    \figureRef{figure hypergraph necessity exec 2}
    depicts how the behavior of each node, faulty or non-faulty,
    in $E_3$
    is modelled by the corresponding copy in $\Sigma$.
    Observe that each node in $G'$ is being modelled
    by exactly one copy in $\mathcal{G}$,
    even if it comes from an original node in $F - F'$ in $G$ that was split.
    We show that the output of nodes in excecution $E_3$
    is not the same, thus deriving the contradiction.
  \end{enumerate}
  
  \def\SES{{"black","gray","black","gray"}}   \def\EFE{{"black","black","red","black"}}  \def\ESE{{"black","black","gray","black"}}  \def\SEE{{"black","gray","black","black"}}   \def\FEE{{"black","red","black","black"}}    \def\ESS{{"black","black","gray","gray"}}  \def\SFS{{"black","gray","red","gray"}}   \def\SEF{{"black","gray","black","red"}}   \def\ESF{{"black","black","gray","red"}}   \def\SSF{{"black","gray","gray","red"}}   \def\SSE{{"black","gray","gray","black"}}   \def\SFF{{"black","gray","red","red"}}    \def\SFE{{"black","gray","red","black"}}    \def\EFF{{"black","black","red","red"}}    \def\FEF{{"black","red","black","red"}}    \def\FFS{{"black","red","red","gray"}}   \def\FFE{{"black","red","red","black"}}  \def\FFF{{"black","red","red","red"}}  \def\FSS{{"black","red","gray","gray"}}  \def\FSF{{"black","red","gray","red"}}  \def\FSE{{"black","red","gray","black"}}  

\newcommand{\getColor}[1]{\pgfmathparse{#1[\theindex]}}
\newcommand{\withColor}[2]{
  \getColor{#1}
  #2
}

\begin{figure}[p]
  \centering
  \setcounter{index}{0}
\begin{tikzpicture}[scale=0.55, every node/.style={scale=0.55}]
  \withColor{\SES}{
    \node[draw, circle, minimum size=1cm,\pgfmathresult] at (0, 0) (L_1 cap F) {$(L' \cap F')_1$};
  }
  \withColor{\SES}{
    \node[draw, circle, minimum size=1cm,\pgfmathresult] at (0, -3) (L_1) {$(L' - F')_1$};
  }
  \withColor{\EFF}{
    \node[draw, ellipse, text width=1.75cm,\pgfmathresult, label=left:{\color{\pgfmathresult} $0$}]
      at (0, -6) (NL cap FR) {${\inneighborhood{}}_{G'} ({L \cap F'},$\\${R-F'})$};
  }
  \withColor{\EFE}{
    \node[draw, ellipse, text width=1.75cm,\pgfmathresult, label=left:{\color{\pgfmathresult} $0$}]
      at (0, -9) (NL minus FR) {${\inneighborhood{}}_{G'} ({L - F',}$\\${R-F'})$};
  }
  \withColor{\ESE}{
    \node[draw, circle, minimum size=1cm,\pgfmathresult]
      at (0, -12) (L_0) {$(L' - F')_0$};
  }
  \withColor{\ESF}{
    \node[draw, circle, minimum size=1cm,\pgfmathresult]
      at (0, -15) (L_0 cap F) {$(L' \cap F')_0$};
  }

  \withColor{\SEF}{
    \node[draw, circle, minimum size=1cm,\pgfmathresult]
      at (6, 0) (R_1 cap F) {$(R' \cap F')_1$};
  }
  \withColor{\SEE}{
    \node[draw, circle, minimum size=1cm,\pgfmathresult]
      at (6, -3) (R_1) {$(R' -F')_1$};
  }
  \withColor{\FEF}{
    \node[draw, ellipse, text width=1.75cm,\pgfmathresult, label=right:{\color{\pgfmathresult} $1$}]
      at (6, -6) (NR cap FL) {${\inneighborhood{}}_{G'}({R \cap F'},$ ${L-F'})$};
  }
  \withColor{\FEE}{
    \node[draw, ellipse, text width=1.75cm,\pgfmathresult, label=right:{\color{\pgfmathresult} $1$}]
      at (6, -9) (NR minus FL) {${\inneighborhood{}}_{G'}({R - F'},$ ${L-F'})$};
  }
  \withColor{\ESS}{
    \node[draw, circle, minimum size=1cm,\pgfmathresult]
      at (6, -12) (R_0) {$(R' -F')_0$};
  }
  \withColor{\ESS}{
    \node[draw, circle, minimum size=1cm,\pgfmathresult]
      at (6, -15) (R_0 cap F) {$(R' \cap F')_0$};
  }

  \withColor{\SFS}{\draw[\pgfmathresult,\arrow,out=135,in=225] (NL cap FR) to (L_1 cap F);}
  \withColor{\SFS}{\draw[\pgfmathresult,\arrow,out=180-15,in=180+15] (NL minus FR) to (L_1 cap F);}
  \withColor{\SFS}{\draw[\pgfmathresult,\arrow] (NL cap FR) to (L_1);}
  \withColor{\SFS}{\draw[\pgfmathresult,\arrow,out=180-30,in=180+30] (NL minus FR) to (L_1);}

  \withColor{\SES}{\draw[\pgfmathresult,-,out=45,in=-45] (NR cap FL) to (R_1 cap F);}
  \withColor{\SEF}{\draw[\pgfmathresult,-,out=15,in=-15] (NR minus FL) to (R_1 cap F);}
  \withColor{\SEF}{\draw[\pgfmathresult,-] (NR cap FL) to (R_1);}
  \withColor{\SEE}{\draw[\pgfmathresult,-,out=30,in=-30] (NR minus FL) to (R_1);}

  \withColor{\SES}{\draw[\pgfmathresult,-] (L_1 cap F) to (L_1);}
  \withColor{\SEF}{\draw[\pgfmathresult,-] (R_1 cap F) to (R_1);}

  \withColor{\SES}{\draw[\pgfmathresult,-] (L_1 cap F) to (R_1 cap F);}
  \withColor{\SES}{\draw[\pgfmathresult,-] (NR cap FL) to (L_1 cap F);}

  \withColor{\SFS}{\draw[\pgfmathresult,\arrow] (NL cap FR) to (R_1 cap F);}

  \withColor{\SFS}{\draw[\pgfmathresult,-] (NR cap FL) to (L_1);}

  \withColor{\SFF}{\draw[\pgfmathresult,\arrow] (NL cap FR) to (R_1);}

  \withColor{\SEF}{\draw[\pgfmathresult,-] (NR cap FL) to (NR minus FL);}
  \withColor{\ESF}{\draw[\pgfmathresult,-] (NL cap FR) to (NL minus FR);}
  \withColor{\FFS}{\draw[\pgfmathresult,-] (NL cap FR) to (NR cap FL);}
  \withColor{\FFE}{\draw[\pgfmathresult,-] (NL minus FR) to (NR minus FL);}
  \withColor{\FFF}{\draw[\pgfmathresult,-] (NL cap FR) to (NR minus FL);}
  \withColor{\FFF}{\draw[\pgfmathresult,-] (NR cap FL) to (NL minus FR);}

  \withColor{\FSS}{\draw[\pgfmathresult,\arrow,out=-15,in=15] (NR cap FL) to (R_0 cap F);}
  \withColor{\FSS}{\draw[\pgfmathresult,\arrow,out=-45,in=45] (NR minus FL) to (R_0 cap F);}
  \withColor{\FSS}{\draw[\pgfmathresult,\arrow,out=-30,in=30] (NR cap FL) to (R_0);}
  \withColor{\FSS}{\draw[\pgfmathresult,\arrow] (NR minus FL) to (R_0);}

  \withColor{\ESS}{\draw[\pgfmathresult,-,out=180+15,in=180-15] (NL cap FR) to (L_0 cap F);}
  \withColor{\ESF}{\draw[\pgfmathresult,-,out=180+45,in=180-45] (NL minus FR) to (L_0 cap F);}
  \withColor{\ESF}{\draw[\pgfmathresult,-,out=180+30,in=180-30] (NL cap FR) to (L_0);}
  \withColor{\ESE}{\draw[\pgfmathresult,-] (NL minus FR) to (L_0);}

  \withColor{\ESF}{\draw[\pgfmathresult,-] (L_0 cap F) to (L_0);}
  \withColor{\ESS}{\draw[\pgfmathresult,-] (R_0 cap F) to (R_0);}

  \withColor{\ESS}{\draw[\pgfmathresult,-] (L_0 cap F) to (R_0 cap F);}
  \withColor{\FSS}{\draw[\pgfmathresult,\arrow] (NR cap FL) to (L_0 cap F);}

  \withColor{\ESS}{\draw[\pgfmathresult,-] (NL cap FR) to (R_0 cap F);}

  \withColor{\FSF}{\draw[\pgfmathresult,\arrow] (NR cap FL) to (L_0);}

  \withColor{\ESS}{\draw[\pgfmathresult,-] (NL cap FR) to (R_0);}

  \withColor{\SES}{\draw[\pgfmathresult,\otherarrow] (L_1) to (R_1 cap F);}
  \withColor{\SFS}{\draw[\pgfmathresult,\otherarrow] (NL minus FR) to (R_1 cap F);}
  \withColor{\SFE}{\draw[\pgfmathresult,\otherarrow] (NL minus FR) to (R_1);}
  \withColor{\ESS}{\draw[\pgfmathresult,\otherarrow] (NL minus FR) to (R_0);}
  \withColor{\ESS}{\draw[\pgfmathresult,\otherarrow] (NL minus FR) to (R_0 cap F);}
  \withColor{\ESS}{\draw[\pgfmathresult,\otherarrow] (L_0) to (R_0 cap F);}

  \withColor{\SES}{\draw[\pgfmathresult,\otherarrow] (R_1) to (L_1 cap F);}
  \withColor{\SES}{\draw[\pgfmathresult,\otherarrow] (NR minus FL) to (L_1 cap F);}
  \withColor{\SES}{\draw[\pgfmathresult,\otherarrow] (NR minus FL) to (L_1);}
  \withColor{\FSE}{\draw[\pgfmathresult,\otherarrow] (NR minus FL) to (L_0);}
  \withColor{\FSS}{\draw[\pgfmathresult,\otherarrow] (NR minus FL) to (L_0 cap F);}
  \withColor{\ESS}{\draw[\pgfmathresult,\otherarrow] (R_0) to (L_0 cap F);}
\end{tikzpicture}
\hfill
\begin{tikzpicture}[scale=0.5, every node/.style={scale=0.5}]
  \withColor{\SES}{
    \node[draw, circle, minimum size=1cm,\pgfmathresult] 
      at (0, 0) (L_1 cap F) {$(L' \cap F')_1$};
  }
  \withColor{\SES}{
    \node[draw, circle, minimum size=1cm,\pgfmathresult] 
      at (0, -3) (L_1) {$(L' - F')_1$};
  }
  \withColor{\EFF}{
    \node[draw, ellipse, minimum size=1cm,text width=1.75cm,\pgfmathresult, label=above:{\color{\pgfmathresult} $0$}] 
      at (0, -6) (NL cap FR) {${\inneighborhood{}}_{G'}({L \cap F'},$ $R-F)$};
  }
  \withColor{\EFE}{
    \node[draw, ellipse, minimum size=1cm,text width=1.75cm,\pgfmathresult, label=below:{\color{\pgfmathresult} $0$}] 
      at (0, -9) (NL minus FR) {${\inneighborhood{}}_{G'}({L - F'},$ $R-F)$};
  }
  \withColor{\ESE}{
    \node[draw, circle, minimum size=1cm,\pgfmathresult] 
      at (0, -12) (L_0) {$(L' - F')_0$};
  }
  \withColor{\ESF}{
    \node[draw, circle, minimum size=1cm,\pgfmathresult] 
      at (0, -15) (L_0 cap F) {$(L' \cap F')_0$};
  }

  \withColor{\SEF}{
    \node[draw, circle, minimum size=1cm,\pgfmathresult] 
      at (12, 0) (R_1 cap F) {$(R' \cap F')_1$};
  }
  \withColor{\SEE}{
    \node[draw, circle, minimum size=1cm,\pgfmathresult] 
      at (12, -3) (R_1) {$(R' -F')_1$};
  }
  \withColor{\FEF}{
    \node[draw, ellipse, minimum size=1cm,text width=1.75cm,\pgfmathresult, label=above:{\color{\pgfmathresult} $1$}] 
      at (12, -6) (NR cap FL) {${\inneighborhood{}}_{G'}({R \cap F'},$ $L-F)$};
  }
  \withColor{\FEE}{
    \node[draw, ellipse, minimum size=1cm,text width=1.75cm,\pgfmathresult, label=below:{\color{\pgfmathresult} $1$}] 
      at (12, -9) (NR minus FL) {${\inneighborhood{}}_{G'}({R - F'},$ $L-F)$};
  }
  \withColor{\ESS}{
    \node[draw, circle, minimum size=1cm,\pgfmathresult] 
      at (12, -12) (R_0) {$(R' -F')_0$};
  }
  \withColor{\ESS}{
    \node[draw, circle, minimum size=1cm,\pgfmathresult] 
      at (12, -15) (R_0 cap F) {$(R' \cap F')_0$};
  }
  
  \withColor{\SFE}{
    \node[\pgfmathresult,draw,ellipse,minimum size=1cm,text width=3cm] at
      (6, -3.75) (NCR_1)
      {$\big({\inneighborhood{C}{R-F'}{{G'}}} - \inneighborhood{C}{L-F'}{{G'}}\big)_1$};
  }
  \withColor{\SES}{
    \node[\pgfmathresult,draw,ellipse,minimum size=1cm,text width=3cm] at
      (6, -0) (NCL_1)
      {$\big({\inneighborhood{C}{L - F'}{{G'}}} - \inneighborhood{C}{R-F'}{{G'}}\big)_1$};
  }
  \withColor{\FFE}{
    \node[\pgfmathresult,draw,ellipse,minimum size=1cm,text centered,text width=3cm, label=above:{\color{\pgfmathresult} $1$}] at
      (6, -7.5) (NCLR)
      {${\inneighborhood{C}{L - F'}{{G'}}}$\\$\cap$\\$\inneighborhood{C}{R-F'}{{G'}}$};
  }
  \withColor{\FSE}{
    \node[\pgfmathresult,draw,ellipse,minimum size=1cm,text width=3cm] at
      (6, -11.25) (NCL_0)
      {$\big({\inneighborhood{C}{L - F'}{{G'}}} - \inneighborhood{C}{R-F'}{{G'}}\big)_0$};
  }
  \withColor{\ESS}{
    \node[\pgfmathresult,draw,ellipse,minimum size=1cm,text width=3cm] at
      (6, -15) (NCR_0)
      {$\big({\inneighborhood{C}{R-F'}{{G'}}} - \inneighborhood{C}{L - F'}{{G'}}\big)_0$};
  }
  
  \withColor{\FSE}{\draw[\pgfmathresult,-] (L_0) to (NCL_0);}
  \withColor{\FSE}{\draw[\pgfmathresult,-] (L_0) to (NCLR);}
  \withColor{\FSF}{\draw[\pgfmathresult,-] (L_0 cap F) to (NCL_0);}
  \withColor{\FSF}{\draw[\pgfmathresult,-] (L_0 cap F) to (NCLR);}
  \withColor{\ESS}{\draw[\pgfmathresult,-] (L_0 cap F) to (NCR_0);}
  \withColor{\SSF}{\draw[\pgfmathresult,\arrow] (L_0 cap F) to (NCR_1);}
  
  \withColor{\SES}{\draw[\pgfmathresult,-] (L_1) to (NCL_1);}
  \withColor{\SFS}{\draw[\pgfmathresult,\arrow] (NCLR) to (L_1);}
  \withColor{\SES}{\draw[\pgfmathresult,-] (L_1 cap F) to (NCL_1);}
  \withColor{\SFS}{\draw[\pgfmathresult,\arrow] (NCR_1) to (L_1 cap F);}
  \withColor{\SFS}{\draw[\pgfmathresult,\arrow] (NCLR) to (L_1 cap F);}
  
  \withColor{\FSF}{\draw[\pgfmathresult,-] (NL cap FR) to (NCL_0);}
  \withColor{\SFS}{\draw[\pgfmathresult,\arrow] (NL cap FR) to (NCL_1);}
  \withColor{\ESS}{\draw[\pgfmathresult,-] (NL cap FR) to (NCR_0);}
  \withColor{\SSF}{\draw[\pgfmathresult,\arrow] (NL cap FR) to (NCR_1);}
  \withColor{\FSF}{\draw[\pgfmathresult,-] (NL cap FR) to (NCLR);}
  
  \withColor{\FSE}{\draw[\pgfmathresult,-] (NL minus FR) to (NCL_0);}
  \withColor{\SFS}{\draw[\pgfmathresult,\arrow] (NL minus FR) to (NCL_1);}
  \withColor{\FSE}{\draw[\pgfmathresult,-] (NL minus FR) to (NCLR);}

  \withColor{\ESS}{\draw[\pgfmathresult,-] (R_0) to (NCR_0);}
  \withColor{\FSS}{\draw[\pgfmathresult,\arrow] (NCLR) to (R_0);}
  \withColor{\FSS}{\draw[\pgfmathresult,\arrow] (NCLR) to (R_0 cap F);}
  \withColor{\FSS}{\draw[\pgfmathresult,\arrow] (NCL_0) to (R_0 cap F);}
  \withColor{\ESS}{\draw[\pgfmathresult,-] (R_0 cap F) to (NCR_0);}

  \withColor{\SFE}{\draw[\pgfmathresult,-] (R_1) to (NCR_1);}
  \withColor{\SFE}{\draw[\pgfmathresult,-] (R_1) to (NCLR);}
  \withColor{\SSF}{\draw[\pgfmathresult,\arrow] (R_1 cap F) to (NCL_0);}
  \withColor{\SFF}{\draw[\pgfmathresult,-] (R_1 cap F) to (NCR_1);}
  \withColor{\SFF}{\draw[\pgfmathresult,-] (R_1 cap F) to (NCLR);}
  \withColor{\SES}{\draw[\pgfmathresult,-] (NCL_1) to (R_1 cap F);}

  \withColor{\SFF}{\draw[\pgfmathresult,-] (NR cap FL) to (NCR_1);}
  \withColor{\FSS}{\draw[\pgfmathresult,\arrow] (NR cap FL) to (NCR_0);}
  \withColor{\SES}{\draw[\pgfmathresult,-] (NR cap FL) to (NCL_1);}
  \withColor{\SSF}{\draw[\pgfmathresult,\arrow] (NR cap FL) to (NCL_0);}
  \withColor{\SFF}{\draw[\pgfmathresult,-] (NR cap FL) to (NCLR);}

  \withColor{\SFE}{\draw[\pgfmathresult,-] (NR minus FL) to (NCR_1);}
  \withColor{\FSS}{\draw[\pgfmathresult,\arrow] (NR minus FL) to (NCR_0);}
  \withColor{\SFE}{\draw[\pgfmathresult,-] (NR minus FL) to (NCLR);}

  \withColor{\SES}{\draw[\pgfmathresult,\otherarrow] (R_1) to (NCL_1);}
  \withColor{\SSE}{\draw[\pgfmathresult,\otherarrow] (R_1) to (NCL_0);}
  \withColor{\SES}{\draw[\pgfmathresult,\otherarrow] (NR minus FL) to (NCL_1);}
  \withColor{\SSE}{\draw[\pgfmathresult,\otherarrow] (NR minus FL) to (NCL_0);}
  \withColor{\ESS}{\draw[\pgfmathresult,\otherarrow] (NL minus FR) to (NCR_0);}
  \withColor{\SSE}{\draw[\pgfmathresult,\otherarrow] (NL minus FR) to (NCR_1);}
  \withColor{\ESS}{\draw[\pgfmathresult,\otherarrow] (L_0) to (NCR_0);}
  \withColor{\SSE}{\draw[\pgfmathresult,\otherarrow] (L_0) to (NCR_1);}
\end{tikzpicture}
\\
~
\\
\begin{tikzpicture}[scale=0.55, every node/.style={scale=0.55}]
  \withColor{\SES}{
    \node[draw, circle, minimum size=1cm,\pgfmathresult]
    at (0, 0) (L_1 cap F) {$(L' \cap F')_1$};
  }
  \withColor{\SES}{
    \node[draw, circle, minimum size=1cm,\pgfmathresult]
    at (0, -3) (L_1) {$(L' - F')_1$};
  }
  \withColor{\EFF}{
    \node[draw, ellipse, minimum size=1cm,text width=1.75cm,\pgfmathresult, label=above:{\color{\pgfmathresult} $0$}]
    at (0, -6) (NL cap FR) {${\inneighborhood{}}_{G'}({L \cap F'},$ $R-F)$};
  }
  \withColor{\EFE}{
    \node[draw, ellipse, minimum size=1cm,text width=1.75cm,\pgfmathresult, label=below:{\color{\pgfmathresult} $0$}]
      at (0, -9) (NL minus FR) {${\inneighborhood{}}_{G'}({L - F'},$ $R-F)$};
  }
  \withColor{\ESE}{
    \node[draw, circle, minimum size=1cm,\pgfmathresult]
    at (0, -12) (L_0) {$(L' - F')_0$};
  }
  \withColor{\ESF}{
    \node[draw, circle, minimum size=1cm,\pgfmathresult]
    at (0, -15) (L_0 cap F) {$(L' \cap F')_0$};
  }
  \withColor{\SEF}{
    \node[draw, circle, minimum size=1cm,\pgfmathresult]
    at (8, 0) (R_1 cap F) {$(R' \cap F')_1$};
  }
  \withColor{\SEE}{
    \node[draw, circle, minimum size=1cm,\pgfmathresult]
    at (8, -3) (R_1) {$(R' -F')_1$};
  }
  \withColor{\FEF}{
    \node[draw, ellipse, minimum size=1cm,text width=1.75cm,\pgfmathresult, label=above:{\color{\pgfmathresult} $1$}]
    at (8, -6) (NR cap FL) {${\inneighborhood{}}_{G'}({R \cap F'},$ $L-F)$};
  }
  \withColor{\FEE}{
    \node[draw, ellipse, minimum size=1cm,text width=1.75cm,\pgfmathresult, label=below:{\color{\pgfmathresult} $1$}]
    at (8, -9) (NR minus FL) {${\inneighborhood{}}_{G'}({R - F'},$ $L-F)$};
  }
  \withColor{\ESS}{
    \node[draw, circle, minimum size=1cm,\pgfmathresult]
    at (8, -12) (R_0) {$(R' -F')_0$};
  }
  \withColor{\ESS}{
    \node[draw, circle, minimum size=1cm,\pgfmathresult]
    at (8, -15) (R_0 cap F) {$(R' \cap F')_0$};
  }
  
  \withColor{\ESS}{
    \node[\pgfmathresult,draw, circle, minimum size=1cm] at
      (4, -13) (C_0) {$C'_0$};
  }
  \withColor{\SSE}{
    \node[\pgfmathresult,draw, circle, minimum size=1cm, label=below right:{\color{\pgfmathresult} $1$}] at
      (4, -7.5) (C'_1) {$C'_2$};
  }
  \withColor{\SES}{
    \node[\pgfmathresult,draw, circle, minimum size=1cm] at
      (4, -2) (C_1) {$C'_1$};
  }
  
  \withColor{\SES}{\draw[\pgfmathresult,-] (L_1 cap F) to (C_1);}
  \withColor{\ESS}{\draw[\pgfmathresult,-] (L_0 cap F) to (C_0);}
  \withColor{\SSF}{\draw[\pgfmathresult,\arrow] (L_0 cap F) to (C'_1);}
  \withColor{\ESS}{\draw[\pgfmathresult,-] (NL cap FR) to (C_0);}
  \withColor{\SFS}{\draw[\pgfmathresult,\arrow] (NL cap FR) to (C_1);}
  \withColor{\SSF}{\draw[\pgfmathresult,\arrow] (NL cap FR) to (C'_1);}

  \withColor{\SES}{\draw[\pgfmathresult,-] (R_1 cap F) to (C_1);}
  \withColor{\ESS}{\draw[\pgfmathresult,-] (R_0 cap F) to (C_0);}
  \withColor{\SSF}{\draw[\pgfmathresult,\arrow] (R_1 cap F) to (C'_1);}
  \withColor{\SES}{\draw[\pgfmathresult,-] (NR cap FL) to (C_1);}
  \withColor{\FSS}{\draw[\pgfmathresult,\arrow] (NR cap FL) to (C_0);}
  \withColor{\SSF}{\draw[\pgfmathresult,\arrow] (NR cap FL) to (C'_1);}

  \withColor{\SES}{\draw[\pgfmathresult,\otherarrow] (L_1) to (C_1);}
  \withColor{\ESS}{\draw[\pgfmathresult,\otherarrow] (L_0) to (C_0);}
  \withColor{\SSE}{\draw[\pgfmathresult,\otherarrow] (L_0) to (C'_1);}
  \withColor{\SFS}{\draw[\pgfmathresult,\otherarrow] (NL minus FR) to (C_1);}
  \withColor{\ESS}{\draw[\pgfmathresult,\otherarrow] (NL minus FR) to (C_0);}
  \withColor{\SSE}{\draw[\pgfmathresult,\otherarrow] (NL minus FR) to (C'_1);}

  \withColor{\ESS}{\draw[\pgfmathresult,\otherarrow] (R_0) to (C_0);}
  \withColor{\SES}{\draw[\pgfmathresult,\otherarrow] (R_1) to (C_1);}
  \withColor{\SSE}{\draw[\pgfmathresult,\otherarrow] (R_1) to (C'_1);}
  \withColor{\SES}{\draw[\pgfmathresult,\otherarrow] (NR minus FL) to (C_1);}
  \withColor{\FSS}{\draw[\pgfmathresult,\otherarrow] (NR minus FL) to (C_0);}
  \withColor{\SSE}{\draw[\pgfmathresult,\otherarrow] (NR minus FL) to (C'_1);}

\draw[white,-,out=180+35,in=180-35] (NL cap FR) to (L_0 cap F);
\end{tikzpicture}
\hfill
\begin{tikzpicture}[scale=0.45, every node/.style={scale=0.45}, rotate=90]
  \withColor{\ESS}{
    \node[\pgfmathresult, draw, circle, minimum size=1cm] at
(6,-4) (C_0) {$C'_0$};
  }
  \withColor{\SSE}{
    \node[\pgfmathresult, draw, circle, minimum size=1cm, label=above:{\color{\pgfmathresult} $1$}] at 
      (180+180:6) (C'_1) {$C'_2$};
  }
  \withColor{\SES}{
    \node[\pgfmathresult, draw, circle, minimum size=1cm] at 
(6,4) (C_1) {$C'_1$};
  }
  
  \withColor{\FSE}{
    \node[draw, ellipse, minimum width=3.75cm,text width=3cm,\pgfmathresult]
      at (270+180:6) (NCL_0) {
        $\big({\inneighborhood{C}{L - F'}{{G'}}} -$ \\ ${\inneighborhood{C}{R - F'}{{G'}}}\big)_0$
      };
  }
\withColor{\ESS}{
    \node[draw, ellipse, minimum width=3.75cm,text width=3cm,\pgfmathresult]
at (-6,6) (NCR_0) {
        $\big({\inneighborhood{C}{R - F'}{{G'}}} -$ \\ ${\inneighborhood{C}{L - F'}{{G'}}}\big)_0$
      };
  }
  \withColor{\FFE}{
\node[draw, ellipse, minimum size=1cm,text centered,text width=3cm,\pgfmathresult, label=below:{\color{\pgfmathresult} $1$}]
      at (360+180:6) (NCLR) {
        ${\inneighborhood{C}{L - F'}{{G'}}}$ \\ $\cap$ \\ ${\inneighborhood{C}{R - F'}{{G'}}}$
      };
  }
  \withColor{\SES}{
\node[draw, ellipse, minimum width=3.75cm,text width=3cm,\pgfmathresult]
at (-6,-6) (NCL_1) {
        $\big({\inneighborhood{C}{L - F'}{{G'}}} -$ \\ ${\inneighborhood{C}{R - F'}{{G'}}}\big)_1$
      };
  }
  \withColor{\SFE}{
\node[draw, ellipse, minimum width=3.75cm,text width=3cm,\pgfmathresult]
      at (90+180:6) (NCR_1) {
        $\big({\inneighborhood{C}{R - F'}{{G'}}} -$ \\ ${\inneighborhood{C}{L - F'}{{G'}}}\big)_1$
      };
  }
  
  \withColor{\FSS}{\draw[\pgfmathresult, \arrow] (NCLR) to (C_0);}
  \withColor{\SSE}{\draw[\pgfmathresult, -] (NCLR) to (C'_1);}
  \withColor{\SFS}{\draw[\pgfmathresult, \arrow] (NCLR) to (C_1);}
  \withColor{\SSE}{\draw[\pgfmathresult, -] (NCLR) to (NCL_0);}
  \withColor{\SFS}{\draw[\pgfmathresult, \arrow] (NCLR) to (NCL_1);}
  \withColor{\FSS}{\draw[\pgfmathresult, \arrow] (NCLR) to (NCR_0);}
  \withColor{\SSE}{\draw[\pgfmathresult, -] (NCLR) to (NCR_1);}
  
  \withColor{\FSS}{\draw[\pgfmathresult, \arrow] (NCL_0) to (C_0);}
  \withColor{\SSE}{\draw[\pgfmathresult, -] (NCL_0) to (C'_1);}
  \withColor{\FSS}{\draw[\pgfmathresult, \arrow] (NCL_0) to (NCR_0);}
  \withColor{\SSE}{\draw[\pgfmathresult, -] (NCL_0) to (NCR_1);}
  
  \withColor{\SFS}{\draw[\pgfmathresult, \arrow] (NCR_1) to (C_1);}
  \withColor{\SSE}{\draw[\pgfmathresult, -] (NCR_1) to (C'_1);}
  \withColor{\SFS}{\draw[\pgfmathresult, \arrow] (NCR_1) to (NCL_1);}
  
  \withColor{\ESS}{\draw[\pgfmathresult, -] (NCR_0) to (C_0);}
  \withColor{\SES}{\draw[\pgfmathresult, -] (NCL_1) to (C_1);}
\end{tikzpicture}   \caption{
    ${\underlying{\mathcal{G}}}$
    to model $E_1$, $E_2$, and $E_3$.
    The numbers adjacent to the sets are the corresponding inputs in
    execution $\mathcal E$;
    if there is no number adjacent to the set,
    then the input is the same as the subscript.
An undirected edge denotes that edges can exist in both directions.
    A directed edge with a hollow arrow denotes that edges could only exist in
    one direction between the original nodes in $\underlying{G'}$.
    A directed edge with a solid arrow denotes that edges could have existed in
    both directions between the original nodes in $\underlying{G'}$,
    but in ${\underlying{\mathcal{G}}}$,
    they only exist in one direction between the two copies.
  }
  \label{figure hypergraph necessity network}
\end{figure}
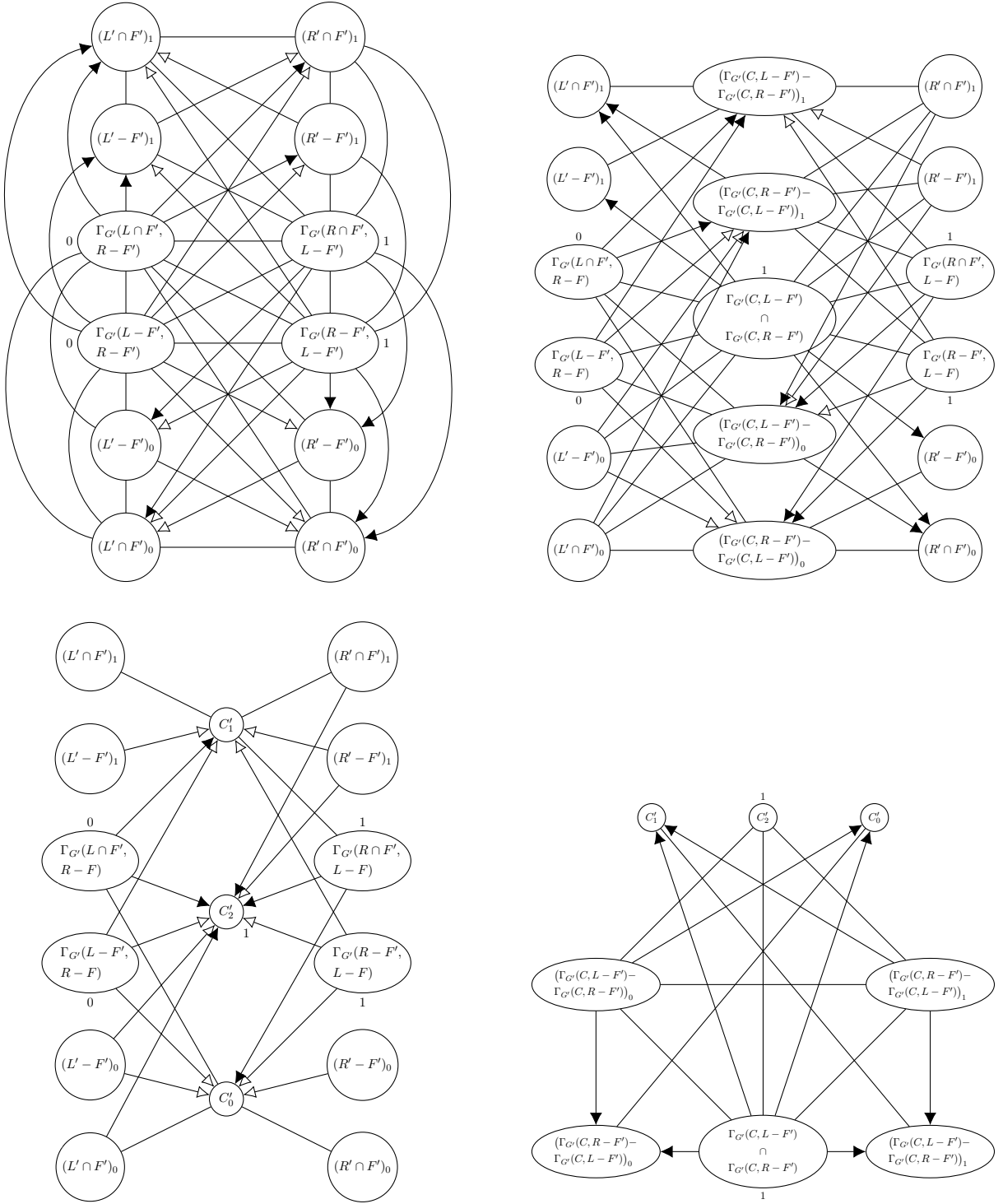

\begin{figure}[p]
  \centering
  \setcounter{index}{1}
 

  \caption{
    Execution $E_{\theindex}$
    as modeled by execution $\Sigma$ on hypergraph $\mathcal{G}$
    (the underlying simple graph $\underlying{\mathcal{G}}$ is shown here).
    The red nodes are faulty in $E_{\theindex}$.
    The gray network is simulated by the faulty nodes.
    Edges between faulty and non-faulty nodes are
    depicted in red.
  }
  \label{figure hypergraph necessity exec 2}
\end{figure}

  In execution $\Sigma$,
  nodes in
  $(L' - F')_0$ and $\inneighborhood{L-F'}{R-F'}{{G'}}$ output $0$
  while
  nodes in
  $(R' - F')_1$ and $\inneighborhood{R-F'}{L-F'}{{G'}}$ output $1$.
  Observe that these copies model the nodes in $L - F'$ and $R - F'$,
  respectively, in ${G'}$.
  Therefore, in execution $E_3$,
  nodes in $L - F'$ output $0$ while
  nodes in $R - F'$ output $1$.
  Recall that both these sets are non-empty by construction.
  Thus algorithm $\mathcal A'$ in execution $E_3$ on hypergraph $G'$
  terminates without agreement between these two sets of nodes,
  a contradiction.
\end{proof}

\section[Proof of Correctness of Algorithm \ref*{algorithm hypergraphs}]
{Proof of Correctness of \algoRef{algorithm hypergraphs}}
\label{section proof sufficiency multiple}

In this section, we show correctness of
\algoRef{algorithm hypergraphs}
when the hypergraph $G$ satisfies \conditionSC{}.
For the rest of this section,
we assume that $G$
satisfies both \conditionSC{} and \conditionNC{}
(recall that, by \theoremRef{theorem hypergraphs menger},
the two conditions are equivalent).
Throughout this section,
we use $F^*$ to denote the actual set of faulty nodes.
We prove \lemmaRef{lemma hypergraphs validity} first.

\begin{proof}[Proof of \lemmaRef{lemma hypergraphs validity}]
  Fix a phase $> 0$.
  We use $\gamma_u^\text{start}$ and $\gamma_u^\text{end}$
  to denote the state $\gamma_u$ of node $u$ at the
  beginning and end of the phase, respectively.
  Consider an arbitrary non-faulty node $v$.
  If $v$ does not update its state in this phase,
  then the claim is trivially true since
  $\gamma_v^\text{end} = \gamma_v^\text{start}$.
  So suppose that $v$ did update its state in this phase.
  Then it must have done so in either \step{d} or \step{e} (but not both).
  We consider each case separately.

  \begin{enumerate}[label={Case \arabic*:},leftmargin=*]
    \item $v \in S$ updated its state in \step{d}.\\
    Suppose $v$ updated its state $\gamma_v$ to $\tau \in \set{0, 1}$ in \step{d}.
    Then, as per the update rules in \step{d},
    $v$ must have received the value $\tau$ identically
    along $f+1$ node-disjoint $A_v v$-paths in \step{b}.
Since there are at most $f$ faulty nodes,
    at least one of the $A_v v$-paths, say $P$,
    must neither have any faulty internal node nor a faulty source node.
    Now $\tau$ was received along $P$,
    which has exclusively non-faulty internal nodes.
    So the source node of $P$, say $u$,
    flooded $\tau$ in \step{b} of this phase.
Furthermore, $u$ is non-faulty.
    Thus, $\gamma_u^\text{start} = \tau$
    at the start of this phase.
    Therefore, the state of node $v$ at the end of this phase
    equals the state of a non-faulty node $u$
    at the start of this phase.

    \item $v \in V - S - F$ updated its state in \step{f}.\\
    Suppose $v$ updated its state $\gamma_v$ to $\tau \in \set{0, 1}$ in \step{f}.
    Then, as per the update rules in \step{f},
    $v$ must have received the value $\tau$ identically
    along $f+1$ node-disjoint $S v$-paths in \step{e}.
Since there are at most $f$ faulty nodes,
    at least one of the $S v$-paths, say $P$,
    must neither have any faulty internal node nor a faulty source node.
    Now $\tau$ was received along $P$,
    which has exclusively non-faulty internal nodes.
    So the source node of $P$, say $u$,
    flooded $\tau$ in \step{e} of this phase.
Note that $u \in S$.
    If $u$ did not update its state in \step{d},
    then $\tau = \gamma_u^\text{end} = \gamma_u^\text{start}$.
    Otherwise, by Case 1 above,
    $\tau = \gamma_u^\text{end} = \gamma_w^\text{start}$
    for some non-faulty node $w$.
    In both cases, $\tau$ is a $\gamma$ value of some non-faulty node
    at the start of this phase.
    Therefore, the state of node $v$ at the end of this phase
    equals the state of some non-faulty node
    at the start of this phase.
  \end{enumerate}
  In both cases, we have that $\gamma_v^\text{end} = \gamma_u^\text{start}$
  for some non-faulty node $u$.
\end{proof}

Before proving \lemmaRef{lemma hypergraphs agreement},
we need some intermediate results.
First, we show the proofs of Lemmas
\ref{lemma hypergraphs source unique},
\ref{lemma hypergraphs source SC}, and
\ref{lemma hypergraphs source propagates},
which are similar to Lemmas 6, 7, and 10 in \cite{khan2020directedOPODIS}.

\begin{proof}[Proof of \lemmaRef{lemma hypergraphs source unique}]
  Fix an arbitrary set $F$.
	Suppose, for the sake of contradiction, that
  the directed decomposition of $G - F$
  has two source components $S_1$ and $S_2$.
  To derive the contradiction,
  we show that $G$ does not satisfy \conditionNC{}.
  Let
  \begin{align*}
    L &:= S_1, \\
    R &:= S_2 \cup F,  \\
    C &:= V - S_1 - S_2 - F,
  \end{align*}
  so that $(L, R, C)$ is a partition of $V$.
	Observe that both $L - F = S_1$ and $R-F = S_2$ are non-empty.
Since $S_1$ is a source component
  of the directed decomposition of $G -F$,
  \begin{align*}
    \inneighborhood{R \cup C}{L - F}{G}
      &=  \inneighborhood{V - S_1}{S_1}{G}
      &\text{since $R \cup C = V - S_1$ and $L - F = S_1$} \\
      &\subseteq  F
      &\text{since $S_1$ is a source component of $G - F$.}
  \end{align*}
  Similarly,
  since $S_2$ is also a source component
  of the directed decomposition of $G -F$,
  \begin{align*}
    \inneighborhood{L \cup C}{R - F}{G}
      &=  \inneighborhood{V - S_2 - F}{S_2}{G}
      &\text{since $L \cup C = V - S_2 - F$ and $R - F = S_2$}  \\
      &=  \emptyset
      &\text{since $S_2$ is a source component of $G - F$.}
  \end{align*}
  Therefore,
  \begin{enumerate}[label=\arabic*)]
    \item
$\abs{\inneighborhood{L \cup C}{R - F}{G}}
      = 0 \le f
\implies L \cup C
        \notadjacent{G} R-F$,
and
    \item
$\abs{\inneighborhood{R \cup C}{L - F}{G}}
      \le \abs{F} \le f
\implies R \cup C
        \notadjacent{G} L-F$.
\end{enumerate}
  Note that $F' = F$ for $G' = G$.
  Since $G \in \graphSplitFSet_F(G)$,
  this violates \conditionNC{}, a contradiction.
\end{proof}

\begin{proof}[Proof of \lemmaRef{lemma hypergraphs source SC}]
  Fix an arbitrary set $F$.
  Let $S$ be the unique source component
  in the directed decomposition of $G - F$,
  and let $\Phi = \inneighborhood{F}{S}{G}$.
  Suppose, for the sake of contradiction, that
  $G[S \cup \Phi]$ does not satisfy \conditionSC{}
  with parameter $\Phi$.
  Then, by \theoremRef{theorem hypergraphs menger},
  $G[S \cup \Phi]$ does not satisfy \conditionNC{}
  with parameter $\Phi$ either.
  So there exists a hypergraph
  $G_\Phi \in \graphSplitFSet_\Phi(G[S \cup \Phi])$
  and,
  using $\Phi'$ to denote the set of nodes in $G_\Phi$
  corresponding to nodes in $\Phi$ in $G$,
  a partition $(L, C, R)$ of $S \cup \Phi'$
  such that
  \begin{enumerate}[label=\arabic*)]
    \item $L \cup C \notadjacent{G_\Phi} R - \Phi'$, and
    \item $R \cup C \notadjacent{G_\Phi} L - \Phi'$.
  \end{enumerate}
  Observe that this implies that both $R - \Phi'$ and $L - \Phi'$ are non-empty,
  by definition of $\adjacent{}$.

  Since $\Phi \subseteq F$,
  so there exists a hypergraph $G' \in \graphSplitFSet_F(G)$
  that is obtained by splitting \emph{exactly} the same
  nodes as were split to obtain $G_\Phi$,
  and making the same assignments in the split operations
  in both graphs.
  So $G'$ has the node set $V' = \Phi' \cup (V - \Phi)$,
  and $G_\Phi = G'[S \cup \Phi']$.
  Let $F'$ denote the set of nodes in $G'$
  corresponding to nodes in $F$ in $G$
  (i.e., $F' = \Phi' \cup (F - \Phi)$),
  and let
  \begin{align*}
    C'  &:= C \cup (V - S - \Phi).
  \end{align*}
  Then $(L, C', R)$ is a partition of $V'$.
  To complete the contradiction,
  we show that $L \cup C' \notadjacent{G'} R - F'$
  and $R \cup C' \notadjacent{G'} L - F'$,
  which violates \conditionNC{}.

  We first show that in $G'$,
  nodes in $(L \cup R) - F'$
  have no in-neighbors in
  $C' - C$,
  as follows.
  \begin{align*}
    \inneighborhood{C' - C}{(L \cup R) - F'}{G'}
      &=  \inneighborhood{V - S - \Phi}{(L \cup R) - F'}{G'}
      &\text{since $C' - C = V - S - \Phi$} \\
      &\subseteq \inneighborhood{V - S - \Phi}{S}{G'}
      &\text{since $L \cup R \subseteq S \cup \Phi \subseteq S \cup F'$}\\
      &=\inneighborhood{V - S - \Phi}{S}{G}
      &\text{since $S, V - S - \Phi \subseteq V$} \\
      &=\emptyset,
  \end{align*}
  where the last equality follows from the fact that
  $S$ is the source component
  in the directed decomposition of $G - F$.
  Now, we have
  \begin{align*}
    \abs{\inneighborhood{L \cup C'}{R - F'}{G'}}
      &=  \abs{\inneighborhood{L \cup C}{R - F'}{G'}}
      &\text{since $\inneighborhood{C' - C}{R - F'}{G'} = \emptyset$}
      \\
      &=  \abs{\inneighborhood{L \cup C}{R - \Phi'}{G'}}
      &\text{since $R \cap F' \subseteq (S \cup \Phi') \cap F' = \Phi'$}
      \\
      &=  \abs{\inneighborhood{L \cup C}{R - \Phi'}{G_\Phi}}
      &\text{since $G'[L \cup R \cup C] = G'[S \cup \Phi'] = G_\Phi$}
      \\
      &\le f
      &\text{since $L \cup C \notadjacent{G_\Phi} R - \Phi'$}
      \\
    \implies L \cup C' &\notadjacent{G'} R - F'
      &\text{since $R - F' = R - \Phi' \ne \emptyset$.}
  \end{align*}
  Similarly,
  \begin{align*}
    \abs{\inneighborhood{R \cup C'}{L - F'}{G'}}
      &=  \abs{\inneighborhood{R \cup C}{L - F'}{G'}}
      &\text{since $\inneighborhood{C' - C}{L - F'}{G'} = \emptyset$}
      \\
      &=  \abs{\inneighborhood{R \cup C}{L - \Phi'}{G'}}
      &\text{since $L \cap F' \subseteq (S \cup \Phi') \cap F' = \Phi'$}
      \\
      &=  \abs{\inneighborhood{R \cup C}{L - \Phi'}{G_\Phi}}
      &\text{since $G'[L \cup R \cup C] = G'[S \cup \Phi'] = G_\Phi$}
      \\
      &\le f
      &\text{since $R \cup C \notadjacent{G_\Phi} L - \Phi'$}
      \\
    \implies R \cup C' &\notadjacent{G'} L - F'
      &\text{since $L - F' = L - \Phi' \ne \emptyset$.}
  \end{align*}
  This violates \conditionNC{}, a contradiction.
\end{proof}

\begin{proof}[Proof of \lemmaRef{lemma hypergraphs source propagates}]
  Fix an arbitrary set $F$.
  Let $S$ be the unique source component
  in the directed decomposition of $G - F$.
  Let
  \begin{align*}
    A &:= S,  \\
    B &:= V - S,
  \end{align*}
  so that $(A, B)$ is a partition of $V$.
	Now, since $A = S$ is the source component in the
  directed decomposition of $G - F$,
	we have
  \begin{align*}
    \inneighborhood{B}{A}{G}
      &= \inneighborhood{V - S}{S}{G}  \\
      &= \inneighborhood{F}{S}{G}  \\
      &\subseteq F.
  \end{align*}
  That is,
  all the in-neighbors of $A$ in $B$ are contained entirely in $F$.
So, by Menger's Theorem,
  for any node $v \in A$,
  there can be at most $\abs{F} \le f$ node-disjoint $Bv$-paths in $G$.
	Thus, $B \notpropagate{G - (A \cap F)} A - F$.
	Since $G \in \graphSplitFSet_F(G)$,
  by \conditionSC{},
  we have \[S = A \propagate{G - (B \cap F)} B - F = V - S - F.\]
	The result follows from the fact that $B \cap F = F$.
\end{proof}

Now, we show that in every iteration of the main \texttt{for} loop
of \algoRef{algorithm hypergraphs},
the paths in \step{c}
do exist.

\begin{lemma} \label{lemma algorithm nodes connected multiple}
  In any phase $>0$ of \algoRef{algorithm hypergraphs}
  with a candidate faulty set $F$,
  for any two nodes $u, v \in S$,
  there exists a $uv$-path in $G[S]$.
\end{lemma}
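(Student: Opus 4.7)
The plan is to appeal directly to the definition of a source component in the directed decomposition, combined with the simple observation that $S$ avoids $F$.

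First, I would recall that by \lemmaRef{lemma hypergraphs source unique}, the directed decomposition of $G - F$ has a unique source component, and this is exactly the set $S$ identified in \step{a}. By the paper's definition (\sectionRef{section sufficiency}), a component is a strongly connected sub-hypergraph of $G - F$, so for every pair $u, v \in S$ there exist both a $uv$-path and a $vu$-path whose nodes and hyperedges all belong to this component. In particular, such a path uses only nodes in $S$.

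Next, I would argue that any such path actually lies in $G[S]$. Since $S$ is a sub-hypergraph of $G - F$, we have $S \cap F = \emptyset$, so the nodes on the $uv$-path lie in $V - F$. Furthermore, because every hyperedge on the path has its head and all relevant tails within $S$ (being a hyperedge of the component), it persists in the induced sub-hypergraph $G[S]$ as defined in \sectionRef{section sufficiency}: the defining condition $\head(e) \in S$ and $\tail(e) \cap S \ne \emptyset$ is met. Hence the same alternating sequence of nodes and hyperedges is a valid path in $G[S]$.

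There is essentially no obstacle here, as the statement unpacks directly from definitions; the only small subtlety is ensuring that the component's hyperedges survive the passage to the induced sub-hypergraph $G[S]$, which is immediate from the construction of $G[S]$ and the fact that $S$ is disjoint from $F$. Thus a $uv$-path in $G[S]$ exists for every $u, v \in S$, completing the proof.
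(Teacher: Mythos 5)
Your proposal is correct and matches the paper's argument, which simply observes that $S$ is strongly connected by construction (being a component of the directed decomposition of $G-F$), so a $uv$-path within $S$ exists and survives in $G[S]$. The extra care you take in checking that the component's hyperedges persist (in restricted form) in the induced sub-hypergraph $G[S]$ is a fine elaboration of the same one-line idea.
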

\begin{proof}
  Immediately since $S$ is strongly connected.
\end{proof}

In the flooding procedure
(\cite{khan2019undirectedPODC, khan2020directedOPODIS}),
when a non-faulty node wants to flood a value $b \in \set{0,1}$,
it sends a single value $b$ on all of its
hyperedges.
But a faulty node might send different messages
on different hyperedges.
Note, however, that even a faulty node must send the exact same value
on a single hyperedge:
if it sends two different values on the same hyperedge,
then the receiving nodes can choose the first value and ignore the later one.

\begin{lemma} \label{lemma algorithm correct split graph multiple}
  Consider a phase $> 0$ of \algoRef{algorithm hypergraphs}
  wherein $F = F^*$.
  For any two non-faulty nodes $u, v \in S$,
  we have $G'_u = G'_v$
  in \step{c} of this phase.
  Furthermore,
  if in \step{b} of this phase
  a faulty node $z \in \inneighborhood{F^*}{S}{G}$ transmitted $0$ (resp. $1$)
  on a hyperedge $e \in \incidentedges_G(z)$,
  such that $\tail(e) \cap S$ is non-empty,
  then
  in \step{c} of this phase
  $e$ is assigned to $z^0$ (resp. $z^1$)
  in $G'_u = G'_v$.
\end{lemma}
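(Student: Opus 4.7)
The plan is to exploit the fact that when $F = F^*$, the unique source component $S$ of $G - F$ contains no faulty nodes (since $S \subseteq V - F = V - F^*$). This is the key structural observation: all messages that are flooded \emph{within} $G[S]$ in \step{b} are forwarded faithfully, because every forwarding node encountered along the way is non-faulty.

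I would prove the second claim first, as the first claim will follow from it. Fix $z \in \inneighborhood{F^*}{S}{G}$ and a hyperedge $e \in \incidentedges_G(z)$ with $\tail(e) \cap S \ne \emptyset$, and suppose $z$ transmits some value $b \in \set{0,1}$ on $e$ in \step{b}. By the local multicast guarantee, every node in $\tail(e)$ receives the same value $b$, and in particular every node in $\tail(e) \cap S$ does. Now consider any non-faulty $v \in S$ and the $zv$-path $P_{zv}$ identified by $v$ in $G[S, \set{e}]$: it begins with $z$, uses $e$ to reach some node $w \in \tail(e) \cap S$, and then traverses $G[S]$ from $w$ to $v$. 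Every internal node of $P_{zv}$ other than $z$ lies in $S$, hence is non-faulty (since $F = F^*$), so each such node correctly forwards the value it received along the previous step of the path. Thus $v$ receives exactly $b$ along $P_{zv}$, and the assignment rule in \step{c} assigns $e$ to $z^0$ if $b=0$ and to $z^1$ if $b=1$. Crucially, the resulting assignment depends only on $b$, not on $v$ or on the particular choice of path $P_{zv}$.

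For the first claim, I would observe that $G'_u$ and $G'_v$ are built on the same vertex set $V' = (V - F) \cup F'$, so it suffices to show that every hyperedge is assigned to the same copy in both constructions. Hyperedges whose head is not in $F$ do not involve the split operation and are handled identically. For a head $z \in F = F^*$ and a hyperedge $e \in \incidentedges_G(z)$, I would split into two cases. If $\tail(e) \cap S = \emptyset$, then no $zv$-path exists in $G[S, \set{e}]$ for any $v \in S$ (the first step $e$ leads outside $S$ and cannot re-enter via $G[S]$), so both $u$ and $v$ assign $e$ to $z^1$ via the ``else'' clause. Otherwise $z \in \inneighborhood{F^*}{S}{G}$, and if $z$ transmitted a value $b \in \set{0,1}$ on $e$ in \step{b}, the second claim immediately gives the same assignment $z^b$ at both $u$ and $v$; if $z$ transmitted something else, then neither $u$ nor $v$ received the value $0$, and both again fall through to the ``else'' clause, assigning $e$ to $z^1$.

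The only delicate point is making sure the phrase ``received $b$ along $P_{zv}$'' is unambiguous in the flooding model. This is handled by the standard convention recalled just before \algoRef{algorithm hypergraphs}, where each message carries the sequence of channel ids traversed, so a receiver can unambiguously identify the value that arrived along any particular path. Combined with the fact that $S$ is faulty-free whenever $F = F^*$, this converts the strong-connectivity of $S$ into the agreement on assignments that the lemma asserts.
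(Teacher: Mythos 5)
Your proof is correct and follows essentially the same route as the paper's: a case split on whether $\tail(e) \cap S$ is empty, the observation that when $F = F^*$ every node of $S$ (hence every internal node of the identified paths in $G[S,\set{e}]$) is non-faulty, and the multicast guarantee that all tails of $e$ hear the same value from $z$, so both $u$ and $v$ make the same assignment. Proving the ``furthermore'' part first and then deducing $G'_u = G'_v$ is only a cosmetic reordering of the paper's argument.
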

\begin{proof}
  Consider the phase where $F = F^*$
  and any two non-faulty nodes $u, v \in S$.
  Observe that the node set of the two hypergraphs
  $G'_u$ and $G'_v$ are the same.
  For the hyperedges, by construction,
  it is sufficient to show that,
  for any $z \in F^*$,
  the assignment of multicast channels to $z^0$ and $z^1$
  in the split operation
  is the same in $G'_u$ as in $G'_v$.
  Consider an arbitrary node $z \in F^*$
  and a hyperedge $e \in \incidentedges_{G}(z)$.
  There are two cases to consider:

  \begin{enumerate}[label={Case \arabic*:},leftmargin=*]
    \item There exists a node $w \in \tail(e)$ such that $w \in S$.\\
    So $z \in \inneighborhood{F^*}{S}{G}$.
    By \lemmaRef{lemma algorithm nodes connected multiple},
    there exists a $wu$-path and a $wv$-path in $G[S]$.
    Therefore, there exists a $zu$-path and a $zv$-path in
    $G[S, \set{e}]$.
    Let $P_{zu}$ and $P_{zv}$ be the $zu$-path and $zv$-path,
    respectively, identified by nodes $u$ and $v$ in \step{c}.
    Observe that, for both these paths,
    the first hyperedge on the path is $e$ and
    $z$ is the only faulty node.
    Since $z$ is the source node in $P_{zu}$ and $P_{zv}$,
    both these paths do not have any faulty internal node.
    Therefore,
    in \step{b},
    if $z$ transmitted $0$ on hyperedge $e$,
    then $u$ (resp. $v$) received value $0$
    along $P_{zu}$ (resp. $P_{zv}$).
    So both $u$ and $v$ assign $e$ to $z^0$
    in $G'_u$ and $G'_v$, respectively.
    Similarly, if $z$ transmitted $1$ on hyperedge $e$ in \step{b},
    then both $u$ and $v$ assign $e$ to $z^1$
    in $G'_u$ and $G'_v$, respectively.

    \item There does not exist any node $w \in \tail(e)$
    such that $w \in S$.\\
    Then
    there is no $zu$-path or $zv$-path in
    $G[S, \set{e}]$.
    Therefore, both $u$ and $v$ assign $e$ to $z^1$
    in $G'_u$ and $G'_v$, respectively.
  \end{enumerate}
  In both cases,
  we have that the hyperedge $e$ was assigned identically
  by both $u$ and $v$.
  Observe that if $z \in \inneighborhood{F^*}{S}{G}$ and
  $z$ transmitted $0$ (resp. $1$) on a hyperedge $e \in \incidentedges_G(z)$,
  such that $\tail(e) \cap S$ is non-empty,
  then $e$ is assigned to $z^0$ (resp. $z^1$)
  by both $u$ and $v$, as required.
\end{proof}

\begin{lemma} \label{lemma hypergraphs alg correct partition}
  Consider a phase $> 0$ of \algoRef{algorithm hypergraphs}
  wherein $F = F^*$.
  Let
  \begin{align*}
    Z &:= \set{u^0 \mid u \in \inneighborhood{F^*}{S}{G}} \cup
    \set{ w \in S \mid
    \text{$w$ flooded value $0$ in \step{b} of this phase}},\\
    N &:= \set{u^1 \mid u \in \inneighborhood{F^*}{S}{G}} \cup
    \set{ w \in S \mid
    \text{$w$ flooded value $1$ in \step{b} of this phase}}.
  \end{align*}
  For any two non-faulty nodes $u, v \in S$,
  we have $Z = Z_u = Z_v$ and $N = N_u = N_v$
  in \step{c} of this phase.
\end{lemma}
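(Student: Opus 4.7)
The plan is to show directly that $Z_v = Z$ and $N_v = N$ for every non-faulty $v \in S$ in the phase where $F = F^*$; the claimed equalities $Z_u = Z_v$ and $N_u = N_v$ then follow immediately. Throughout, I use the key structural fact that $S \subseteq V - F^*$ (because $S$ is a source component in the decomposition of $G - F = G - F^*$), so every node of $S$ is non-faulty.

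First I would handle the $F'$-part of the sets. Since $F = F^*$, the set $\{u^0 \mid u \in \inneighborhood{F}{S}{G}\}$ appearing in $Z_v$ is literally $\{u^0 \mid u \in \inneighborhood{F^*}{S}{G}\}$, matching the $F'$-part of $Z$; similarly for the $F'$-part of $N_v$ and $N$. This part is purely by definition.

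Next I would handle the $S$-part of $Z_v$. For a fixed non-faulty $v \in S$ and any $w \in S$, step (c) selects a single $wv$-path $P_{wv}$ entirely in $G[S]$ (which exists by \lemmaRef{lemma algorithm nodes connected multiple}), with the convention that $P_{vv}$ contains only $v$. Because $S \subseteq V - F^*$, every node of $P_{wv}$ is non-faulty. Hence in the flooding of step (b), $w$ correctly sends its own $\gamma_w$ value, every intermediate node forwards faithfully, and $v$ receives along $P_{wv}$ exactly the value flooded by $w$ (the case $w = v$ is handled by the stated convention that $v$ is deemed to have received $\gamma_v$ along $P_{vv}$). Therefore
\[
  Z_v \cap S \;=\; \{w \in S \mid v \text{ received } 0 \text{ along } P_{wv}\}
  \;=\; \{w \in S \mid w \text{ flooded } 0 \text{ in step (b)}\},
\]
which coincides with the $S$-part of $Z$. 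Combined with the $F'$-part, $Z_v = Z$.

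Finally, for $N_v$, note that each $w \in S$ is non-faulty, so $\gamma_w \in \{0,1\}$ and $w$ floods exactly one binary value. Hence $S - (Z_v \cap S) = \{w \in S \mid w \text{ flooded } 1\}$, and together with the $F'$-part this gives $N_v = N$. There is no real obstacle here: the lemma is essentially a bookkeeping statement, and the only substantive ingredient is that paths inside $G[S]$ pass only through non-faulty nodes, so flooded values arrive uncorrupted.
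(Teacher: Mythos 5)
Your proposal is correct and follows essentially the same route as the paper's proof: the substantive point in both is that, in the phase with $F = F^*$, the paths $P_{wv}$ chosen in \step{c} contain only non-faulty nodes, so the value $v$ receives along $P_{wv}$ in \step{b} is exactly the value $w$ flooded, while the $F'$-parts match by definition. The only cosmetic difference is that you establish $Z_v = Z$ directly via the ``received $0$ iff flooded $0$'' equivalence, whereas the paper shows $Z \subseteq Z_v$ and $N \subseteq N_v$ and concludes equality from the fact that both pairs partition the same set.
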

\begin{proof}
  Consider the phase where $F = F^*$
  and $S \subseteq V - F^*$ is the unique source component
  in the directed decomposition of $G - F^*$.
  For any two non-faulty nodes $u, v \in S$,
  we show that $Z \subseteq Z_v$ and $N \subseteq N_v$
  (resp. $Z \subseteq Z_u$ and $N \subseteq N_u$).
  Since $Z \cup N = Z_u \cup N_u = Z_v \cup N_v$,
  it follows that $Z = Z_u = Z_v$ and $N = N_u = N_v$.
  For a node $w \in \inneighborhood{F^*}{S}{G}$,
  the two split nodes $w^0$ and $w^1$ are assigned identically by both
  $u$ and $v$.
  So consider an arbitrary node $w \in S$.
  Recall that we are considering the phase $> 0$ of the algorithm
  where $F = F^*$ is the actual set of faulty nodes.
  So $w$ is non-faulty.
  There are two cases to consider:

  \begin{enumerate}[label={Case \arabic*:},leftmargin=*]
    \item
    $w \in Z - F^*$, i.e., $w \in S$ flooded $0$ in \step{b}
    of this phase.\\
    Let $P_{wv}$ be the $wv$-path identified by $v$ in \step{c}.
    Note that $P_{wv}$ is contained entirely in $G - F^*$
    so that
    $P_{wv}$ does not have any faulty nodes.
    It follows that, in \step{b},
    $w$ flooded the value $0$.
    So $v$ received value $0$ along $P_{wv}$.
    Therefore,
    in \step{c}
    $v$ puts $w$ in the set $Z_v$.

    \item
    $w \in N - F^*$, i.e., $w \in S$ flooded $1$ in \step{b}
    of this phase.\\
    Let $P_{wv}$ be the $wv$-path identified by $v$ in \step{c}.
    Note that $P_{wv}$ is contained entirely in $G - F^*$
    so that
    $P_{wv}$ does not have any faulty nodes.
    It follows that, in \step{b},
    $w$ flooded the value $1$.
    So $v$ received value $1$ along $P_{wv}$.
    Therefore,
    in \step{c}
    $v$ puts $w$ in the set $N_v$.
  \end{enumerate}
  So we have $Z \subseteq Z_v$ and $N \subseteq N_v$, as required.
  A symmetric argument gives $Z \subseteq Z_u$ and $N \subseteq N_u$.
  As argued before, this implies $Z = Z_u = Z_v$ and $N = N_u = N_v$.
\end{proof}

We are now ready to prove \lemmaRef{lemma hypergraphs agreement}.

\begin{proof}[Proof of \lemmaRef{lemma hypergraphs agreement}]
  Consider the phase where $F = F^*$
  and $S \subseteq V - F^*$ is the unique source component
  in the directed decomposition of $G - F^*$.
  Suppose $u, v \in V - F^*$ are any two non-faulty nodes.
  Then,
  \begin{enumerate}[label=\arabic*)]
    \item
    by \lemmaRef{lemma hypergraphs alg correct partition},
    $Z = Z_u = Z_v$ and $N = N_u = N_v$,
    where $Z$ and $N$ are as in the statement of
    \lemmaRef{lemma hypergraphs alg correct partition}, and
    \item 
    by \lemmaRef{lemma algorithm correct split graph multiple},
    $G'_u = G'_v$.
  \end{enumerate}
  Let ${G'} = G'_u = G'_v$.
  We use $F'$ to denote the set of nodes in $G'$ corresponding
  to nodes in $F^*$ in $G$.

  We first show that all non-faulty nodes in $S$
  have identical state at the end of this phase.
  Observe that non-faulty nodes in $S$
  update their states exclusively in \step{d}.
  So consider \step{d} of this phase.
At the start of \step{d},
  by construction of $Z$ and $N$,
  all non-faulty nodes in $Z$ have identical state of $0$,
  while all non-faulty nodes in $N$ have identical state of $1$.
  We show that, in \step{d},
  either all non-faulty nodes in $Z$ update their state to $1$,
  or all non-faulty nodes in $N$ update their state to $0$.
  Note that ${G'} \in \graphSplitFSet_{F^*}(G)$
  and by \conditionSC{},
  either $Z \propagate{G' - N \cap F'} N - F'$
  or $N \propagate{G' - Z \cap F'} Z - F'$.
  We consider each case as follows.
  
  \begin{enumerate}
    [label={Case \arabic*:},wide,itemindent=0pt,leftmargin=2.75mm]
    \item
    $Z \propagate{G' - (N \cap F')} N - F'$.

    \noindent
    In this case,
    we show that all non-faulty nodes in $S$
    have state $0$ at the end of \step{d}.
    There are a further two cases to consider.
    \begin{enumerate}[label={Case (\roman*):},leftmargin=*]
      \item $N - F'$ is empty.\\
      Then all non-faulty nodes in $S = (Z \cup N) - F' = Z - F'$
      have state $0$ at the start of the phase.
      Each node $v \in S$
      sets $B_v = N$ in \step{d}.
      We have $B_v - F' = N - F' = \emptyset$.
      So $v$ does not update its state in \step{d}.
      Therefore, all non-faulty nodes in $S$
      have identical state $0$ at the end of \step{d}.

      \item $N - F'$ is non-empty.\\
      Consider an arbitrary node $v \in S = (Z \cup N) - F'$.
      In \step{d},
      $v$ sets $A_v = Z$ and $B_v = N$.
      If $v \in A_v - F' = Z - F'$,
      then $v$ has state $0$ at the start of this phase
      and does not update it in \step{d}.
      So suppose that $v \in B_v - F' = N - F'$.
      Now,
      if in \step{b} $v$ received the value $0$ identically
      along some $f+1$ node-disjoint $Zv$-paths in $G' - (N \cap F')$,
      then $v$ sets $\gamma_v = 0$.
      We show that such $f+1$ node-disjoint $Zv$-paths do indeed exist.
      Since $Z \propagate{G' - (N \cap F')} N - F'$,
      there exist $f+1$ node-disjoint $Zv$-paths in $G' - (N \cap F')$.
      Without loss of generality only the source nodes on these paths
      are from $Z$.
      For each such path,
      observe that only the source node,
      say $z \in Z$,
      can be faulty.
      If the source node $z$ is faulty,
      then by \lemmaRef{lemma algorithm correct split graph multiple},
      and construction of $G'$ and $Z$,
      $z$ sent the value $0$ on the first edge on this path in \step{b}.
      If $z$ is non-faulty, then by construction of $Z$,
      $z$ flooded
      value $0$ in \step{b}.
      Now all other nodes on the path are non-faulty,
      so $v$ received value $0$ along this path in \step{b}.
      Therefore,
      $v$ received value $0$ identically
      along the $f+1$ node-disjoint $Zv$-paths in \step{b},
      as required.
\end{enumerate}

    \item
    $Z \notpropagate{G' - (N \cap F')} N - F'$ so that
    $N \propagate{G' - (Z \cap F')} Z - F'$
    by \conditionSC{}.
    
    \noindent
    In this case,
    we show that all non-faulty nodes in $S$
    have state $1$ at the end of \step{d}.
    There are a further two cases to consider.
    \begin{enumerate}[label={Case (\roman*):},leftmargin=*]
      \item $Z - F'$ is empty.\\
      Then, similar to Case 1(i),
      all non-faulty nodes in $S$
      have state $1$ at the start of the phase
      and they do not update their state in \step{d}.
      So all non-faulty nodes in $S$ have state identical state $1$
      at the end of \step{d}.

      \item $Z - F'$ is non-empty.\\
      Consider an arbitrary node $v \in S = (Z \cup N) - F'$.
      In \step{d},
      $v$ sets $A_v = N$ and $B_v = Z$.
      If $v \in A_v - F' = N - F'$,
      then $v$ has state $1$ at the start of this phase
      and does not update it in \step{d}.
      So suppose that $v \in B_v - F' = Z - F'$.
      As in Case 1(ii),
      since $N \propagate{G' - (Z \cap F')} Z - F'$,
      there exist $f+1$ node-disjoint $Nv$-paths in $G' - (Z \cap F')$
      such that $v$ received the value $1$ identically along these paths
      in \step{b}.
      Therefore,
      $v$ sets $\gamma_v = 1$, as required.
    \end{enumerate}
  \end{enumerate}
  In both cases,
  all non-faulty nodes in $S$ have identical state,
  say $\tau$,
  at the end of \step{d}.
  Since nodes in $S$ do not update their state after this step,
  nodes in $S$ have state $\tau$ at the end of this phase.

  We now consider \step{f} and an arbitrary non-faulty node $v \in V - S - F^*$.
  All nodes in $S$ are non-faulty,
  so each of them floods the value $\tau$ in \step{e}.
  By 
  \lemmaRef{lemma hypergraphs source propagates},
  $S \propagate{G - F^*} V - S - F^*$
  and so there exist $f+1$ node-disjoint $Sv$-paths in
  $G - F^*$.
  All the source nodes on these paths are non-faulty nodes in $S$.
  All the internal nodes on these paths are non-faulty as well.
  So $v$ receives the value $\tau$ identically along these $f+1$
  node-disjoint paths in \step{e}.
  It follows that $v$ updates $\gamma_v$ to the value $\tau$
  in \step{f}.
  Therefore, all nodes in $V - S - F^*$
  have state $\tau$ at the end of this phase, as required.
\end{proof}

Using Lemmas
\ref{lemma hypergraphs validity}
and
\ref{lemma hypergraphs agreement},
we can now prove the sufficiency of \conditionSC{}.
Recall that
\conditionSC{} is equivalent to \conditionNC{}
by \theoremRef{theorem hypergraphs menger}.
Thus,
this shows the reverse direction of \theoremRef{theorem hypergraphs main}.

\begin{proof}[Proof of \theoremRef{theorem hypergraphs main} ($\Leftarrow$ direction)]
  \algoRef{algorithm hypergraphs} satisfies the \emph{termination}
  condition because it terminates in finite time.

  In one of the iterations of the main \texttt{for} loop,
  we have $F = F^*$,
  i.e., $F$ is the actual set of faulty nodes.
  By \lemmaRef{lemma hypergraphs agreement},
  all non-faulty nodes have the same state at the end of this phase.
  By \lemmaRef{lemma hypergraphs validity},
  these states remain unchanged in any subsequent phases.
  Therefore, all nodes output an identical state.
  So the algorithm satisfies the \emph{agreement} condition.

  At the start of phase 1,
  the state of each non-faulty node equals its own input.
  By inductively applying \lemmaRef{lemma hypergraphs validity},
  we have that the state of a non-faulty node always equals the \emph{input}
  of some non-faulty node, including in the last phase of the algorithm.
  So the output of each non-faulty node is an input of some non-faulty node,
  satisfying the \emph{validity} condition.
\end{proof}

\section{On Lemma 3 of \cite{Ravikant10.1007/978-3-540-30186-8_32}}
\label{appendix hypergraphs misc}

The bug in proof of Lemma 3 in \cite{Ravikant10.1007/978-3-540-30186-8_32}
is on the first line of page 457:
sets $C_1, C_2, C_3$ may have negative size.
Here,
we present a counter example to the claim in Lemma 3 of
\cite{Ravikant10.1007/978-3-540-30186-8_32}.
We first need the following definition of hypergraph connectivity.

\begin{definition}[Definition 4 in
  \cite{Ravikant10.1007/978-3-540-30186-8_32}]
  \label{def hyper conn}
  For $\ell, t, k > 0$,
  an undirected hypergraph $G$ is \emph{$(\ell, t)$-hyper-$k$-connected},
  if,
  for any set $C$ of \emph{exactly} $k - 1$ nodes
  and any partition of $V - C$ into $\ell$ non-empty sets,
  each of size at most $t$,
  there exists an undirected hyperedge in $G$ that has a non-empty intersection
  with every set of the partition.
\end{definition}

Recall that an undirected hyperedge $e \in E$
is a subset of nodes $e \subseteq V$ and is called an $\abs{e}$-hyperedge.
Recall also that an undirected hypergraph $G = (V, E)$ is a $(2,3)$-hypergraph
if each hyperedge is either a $2$-hyperedge or a $3$-hyperedge.
The claim in Lemma 3 of \cite{Ravikant10.1007/978-3-540-30186-8_32} is
as follows.

\begin{claim}[Lemma 3 in \cite{Ravikant10.1007/978-3-540-30186-8_32}]
  \label{claim lemma 3 Ravikant}
  An undirected $(2, 3)$-hypergraph $G = (V, E)$ with $2f < n \le 3f$
  is $(3, f)$-hyper-$(3f - n +1)$-connected
  if and only if,
  for every $V_1, V_2, V_3 \subseteq V$ such that
  $V_1 \cup V_2 \cup V_3 = V$ and $\abs{V_1} = \abs{V_2} = \abs{V_3} = f$,
  there exist three nodes
  \begin{enumerate}[label=(\roman*)]
    \item $u \in V_1 - (V_2 \cup V_3)$,
    \item $v \in V_2 - (V_1 \cup V_3)$, and
    \item $w \in V_3 - (V_1 \cup V_2)$,
  \end{enumerate}
  such that ${u, v, w} \in E$.
\end{claim}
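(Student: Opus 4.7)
The plan is to exhibit a concrete $(2,3)$-hypergraph $G = (V, E)$ for which exactly one direction of \claimRef{claim lemma 3 Ravikant} fails, thereby disproving the stated equivalence. Since the documented bug concerns sets $C_1, C_2, C_3$ potentially taking negative size, the construction targets the direction in which one translates a tripartite cover $(V_1, V_2, V_3)$ into a connectivity instance $(C, P_1, P_2, P_3)$ (or conversely): the portion of $C$ sitting inside $V_i$ has size of the form $(3f-n) - (f - |P_i|)$, which becomes negative when $|P_i|$ is small, so Ravikant's counting argument silently misses cases where a tripartite cover has a non-trivial triple or pairwise overlap that inflates its private parts beyond the intended $P_i$.

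To produce the example, I would first fix small parameters with $2f < n < 3f$ strictly, for instance $n = 5, f = 2$ (giving $k = 2$) or $n = 7, f = 3$ (giving $k = 3$), so that the overlap pattern of a cover is genuinely nontrivial and the natural bijection between covers and connectivity instances breaks. I would then design $G$ by including just enough $3$-hyperedges to witness every tripartite cover $V_1, V_2, V_3$ of $V$ by sets of size $f$: for each such cover, there must exist $u, v, w$ in the three private parts $V_i - (V_j \cup V_k)$ with $\{u, v, w\} \in E$. Because private parts often have size larger than one (especially when the triple intersection $V_1 \cap V_2 \cap V_3$ is nonempty), there is real flexibility in choosing these witnesses. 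I would exploit this flexibility to avoid including any $3$-hyperedge that crosses a particular partition $(P_1, P_2, P_3)$ of $V - C$ appearing in \defRef{def hyper conn}, thereby breaking hyper-connectivity on that specific instance while keeping the tripartite condition intact.

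Verification then has two halves: first, enumerate all covers $V_1, V_2, V_3$ (a finite but substantial case analysis for the small chosen $n, f$) and confirm each has a witnessing hyperedge in $G$; second, exhibit the chosen $(C, P_1, P_2, P_3)$ and confirm by inspection that no hyperedge of $G$ has a node in each $P_i$. The main obstacle is the joint constraint: to satisfy tripartite for every cover while simultaneously missing at least one connectivity triple, the witnessing hyperedges must systematically be permitted to lie in the overlap region $C$ rather than purely in $V - C$ — exactly the configurations Ravikant's counting argument glossed over by allowing $C_i$ to be negative. Constructing such a $G$ cleanly and completing both enumerations is the technical content of the counter-example.
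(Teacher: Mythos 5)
You have correctly recognized that the statement is false and that the paper's ``proof'' is a refutation: one must exhibit a $(2,3)$-hypergraph satisfying the tripartite condition but failing $(3,f)$-hyper-$(3f-n+1)$-connectivity, i.e.\ a counter-example to the reverse direction. That much agrees with the paper. The gap is that your proposal stops there: no hypergraph is actually constructed, no cover enumeration is carried out, and no violating partition is exhibited --- you yourself defer ``constructing such a $G$ cleanly and completing both enumerations'' as ``the technical content of the counter-example,'' which is precisely the content that is missing. The paper, by contrast, gives an explicit family with a non-enumerative verification: take $f>2$, $n=3f-1$, all $2$-hyperedges present, a set $X$ of $2f+1$ nodes every triple of which is a $3$-hyperedge, and a set $Y$ of $f-2$ nodes lying in no $3$-hyperedge. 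The tripartite condition holds by counting ($\abs{X}=2f+1>\abs{V_2\cup V_3}$, and symmetrically), while connectivity fails for $C$ a single node of $X$ and the partition of $V-C$ into two $f$-subsets of $X$ and the set $Y$, since no $3$-hyperedge meets $Y$.

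A second, concrete problem is your choice of parameters. For $n=5$, $f=2$ the claim is in fact \emph{true}, so no counter-example exists there: every cover by three $2$-sets has exactly one element $x$ lying in two of the sets, giving private parts of sizes $1,1,2$, and the tripartite requirement ``some $\set{a,b,w}\in E$ with $w$ in the $2$-element private part'' ranges over exactly the same instances as $(3,2)$-hyper-$2$-connectivity, where $C=\set{x}$ is a single node and the only partition type of the remaining four nodes is $(2,1,1)$. So the ``flexibility'' you hope to exploit simply is not present at those parameters, and for $n=7$, $f=3$ you have not verified that it is (note the paper's construction needs $n=3f-1=8$ when $f=3$, and its $Y$ must be non-empty, forcing $f>2$). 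Without a concrete $G$ and the two verifications, the proposal is a plausible plan rather than a proof; to complete it you would either need the paper's explicit family or an exhaustive check at some admissible small parameter where a counter-example actually exists.
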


\begin{proof}[Counter example]
  We show a counter example to the reverse direction.
  That is,
  we create an undirected $(2, 3)$-hypergraph $G = (V, E)$
  with $2f < n \le 3f$
  that satisfies both of the following:
  
  \begin{enumerate}[label=\arabic*)]
    \item 
    for every $V_1, V_2, V_3 \subseteq V$ such that
    $V_1 \cup V_2 \cup V_3 = V$ and $\abs{V_1} = \abs{V_2} = \abs{V_3} = f$,
    there exist three nodes
    \begin{enumerate}[label=(\roman*)]
      \item $u \in V_1 - (V_2 \cup V_3)$,
      \item $v \in V_2 - (V_1 \cup V_3)$, and
      \item $w \in V_3 - (V_1 \cup V_2)$,
    \end{enumerate}
    such that ${u, v, w} \in E$,

    \item
    $G$ is not $(3, f)$-hyper-$(3f - n +1)$-connected.
  \end{enumerate}

  Pick $f > 2$ and $n = 3f - 1 > 2f$.
  $G$ has all $2$-hyperedges and has two parts $X$ and $Y$.
  $X$ consists of $2f+1$ nodes and $Y$ consists of $f - 2$ nodes.
  Every 3 nodes in $X$ form a $3$-hyperedge but
  no node in $Y$ is part of any $3$-hyperedge.
  We show that $G$ satisfies each of the two condition above, as follows.

  \begin{enumerate}[label=\arabic*)]
    \item
    Consider any three sets $V_1, V_2, V_3 \subseteq V$ such that
    $V_1 \cup V_2 \cup V_3 = V$ and $\abs{V_1} = \abs{V_2} = \abs{V_3} = f$.
    By choice of $n$ and $f$ ($2f < n \le 3f$),
    such sets do exist.
    Now,
    \begin{align*}
      V_1 \cap X - (V_2 \cup V_3)
        &= X - (V_2 \cup V_3)
        &\text{since $V_1 \cup V_2 \cup V_3 = V \supseteq X$}
        \\
        &\ne \emptyset
        &\text{since $\abs{X} = 2f + 1 > 2f = \abs{V_2} + \abs{V_3} \ge \abs{V_2 \cup V_3}$.}
    \end{align*}
    Similarly,
    $V_2 \cap X - (V_1 \cup V_3) \ne \emptyset$ and
    $V_3 \cap X - (V_1 \cup V_2) \ne \emptyset$.
    It follows that
    there exist three nodes
    \begin{enumerate}[label=(\roman*)]
      \item
      $u \in V_1 \cap X - (V_2 \cup V_3) \subseteq V_1 - (V_2 \cup V_3)$,
      
      \item
      $v \in V_2 \cap X - (V_1 \cup V_3) \subseteq V_2 - (V_1 \cup V_3)$,
      and
      
      \item
      $w \in V_3 \cap X - (V_1 \cup V_2) \subseteq V_3 - (V_1 \cup V_2)$.
    \end{enumerate}
    By construction of $G$ and $X$,
    since $u, v, w \in X$,
    so $\set{u, v, w} \in E$,
    as required.

    \item
    Pick any node $x \in X$.
    Let $C := \set{x}$.
    We create a partition $(S_1, S_2, S_3)$ of $V - C$ as follows.
    $S_1$ contains exactly $f$ nodes from $X - C$.
    $S_2$ contains the remaining $f$ nodes in $X - C - S_1$.
    $S_3 := Y$.
    Then,
    \begin{align*}
      0 < \abs{S_1} &= f  \\
      0 < \abs{S_2} &= f  \\
      0 < \abs{S_3} &= f - 2 \le f.
    \end{align*}
    Since no $3$-hyperedge crosses $Y = S_3$ in $G$,
    so there is no undirected hyperedge in $G$
    that has a non-empty intersection with each of $S_1$, $S_2$, and $S_3$.
    By \defRef{def hyper conn},    
    $G$ is not $(3, f)$-hyper-$(3f - n +1)$-connected.
  \end{enumerate}

  This completes the counter example to \claimRef{claim lemma 3 Ravikant}.
\end{proof}

\end{document}